\pgfplotsset{compat=newest}
\renewcommand{\v}{\mathbf{v}}
\newcommand{\s}{\mathbf{s}}
\newcommand{\E}{\mathop{\mathds{E}}}
\newcommand{\sig}[1][]{#1{s}}
\newcommand{\val}[1][]{#1{v}}
\newcommand{\alloc}[1][]{#1{x}}
\newcommand{\price}[1][]{#1{p}}
\newcommand{\sigs}[1][]{\mathbf{\sig[#1]}}
\newcommand{\sigspace}{\mathbf{S}}
\newcommand{\vals}[1][]{\mathbf{\val[#1]}}
\newcommand{\allocs}[1][]{\mathbf{\alloc[#1]}}
\newcommand{\prices}[1][]{\mathbf{\price[#1]}}
\newcommand{\low}[3][]{\underline{\val[#1]}_{#2}(\sigs[#1]_{-#3})}
\newcommand{\selfbounding}{\textit{self-bounding}\xspace}
\newcommand{\critical}{\textit{critical}\xspace}
\newcommand{\sos}{\textit{SOS}\xspace}
\newcommand{\logdag}{\log_2^\dagger}
\newcommand{\remove}[1]{}
\DeclareMathOperator*{\argmax}{\mathrm{argmax}}
\newtheorem{definition}{Definition}[section]
\newtheorem{lemma}{Lemma}[section]
\newtheorem{theorem}{Theorem}[section]
\newtheorem{example}{Example}[section]
\newtheorem{proposition}{Proposition}[section]
\newtheorem{observation}{Observation}[section]
\title{Constant Approximation for Private Interdependent Valuations\thanks{The work of A. Eden was supported by the Golda Meir Fellowship, The work of M. Feldman, S. Mauras and D. Mohan was partially supported by the European Research Council (ERC) under the European Union's Horizon 2020 research and innovation program (grant agreement No. 866132), by an Amazon Research Award, and by the NSF-BSF (grant number 2020788). The work of K. Goldner was supported by NSF Award CNS-2228610 and a Shibulal Family Career Development Professorship.}}
	\author{
		Alon Eden
		\thanks{The Hebrew University; {\tt alon.eden@mail.huji.ac.il}}
		\and
		Michal Feldman
		\thanks{Tel Aviv University; {\tt michal.feldman@cs.tau.ac.il}}
            \and
            Kira Goldner
            \thanks{Boston University; {\tt goldner@bu.edu}}
            \and
            Simon Mauras
            \thanks{Tel Aviv University; {\tt smauras@tauex.tau.ac.il}}
            \and
            Divyarthi Mohan
            \thanks{Tel Aviv University; {\tt divyarthim@tau.ac.il}}
	}
\begin{document}

\thispagestyle{empty}
\maketitle
\begin{abstract}
The celebrated model of auctions with interdependent valuations, introduced by Milgrom and Weber in 1982, has been studied almost exclusively under private signals $s_1, \ldots, s_n$ of the $n$ bidders and public valuation functions $v_i(s_1, \ldots, s_n)$. 
Recent work in TCS has shown that this setting admits a constant approximation to the optimal social welfare if the valuations satisfy a natural property called submodularity over signals (SOS).
More recently, Eden et al. (2022) have extended the analysis of interdependent valuations to include settings with private signals and \emph{private valuations}, and established $O(\log^2 n)$-approximation for SOS valuations. In this paper we show that this setting admits a {\em constant} factor approximation, settling the open question raised by Eden et al. (2022).
\end{abstract}

\newpage
\setcounter{page}{1}

\setlength\epigraphwidth{7.5cm}
\setlength\epigraphrule{0pt}
\epigraph{
    \emph{``You can fool some of the people all of the time, and all of the people some of the time, but you can not fool all of the people all of the time.''}}{{---Abraham Lincoln}}

\section{Introduction}

{The interdependent values model captures auction scenarios where each bidder has some partial information about the good for sale}, but their value for the good depends also on the information held by other bidders~\cite{MilgromWeber82,wilson1969communications}. This captures many realistic scenarios such as the selling of a natural resource {(e.g., oil)} of unknown value, art auctions, and ad-auctions of online impressions, among many others. This model has been widely studied in the economic literature, with its importance being recognized by the 2020 Nobel Prize in Economics~\cite{nobel2021considerations}. In  this model, each bidder $i$ possesses a \textit{private} signal: a real number $s_i$ (e.g., the estimate that the bidder has for the amount of oil in the auctioned oil field). The bidder also possesses  a \textit{public} valuation function $v_i(s_1,\ldots,s_n)$ which maps the {signals of {\em all} bidders---one's own signal, as well as others'---into a value for the item for sale} (e.g., a bidder's value for the expected amount of oil in the field given all bidders' information).

Previous work in economics has found that this intricate setting gives rise to many impossibility results, and good design is possible only in very restricted cases~\cite{maskin1992,DM00,JM69, ausubel1999generalized}.
More recently, the EconCS community has put effort toward circumventing these impossibilities via the {algorithmic} lens of approximation (e.g.,~\cite{RTC, CFK, EFFG, EdenFFGK19, AmerTC21,EdenGZ22,LuSZ22, CohenFMT23,gkatzelis2021prior,ChenEW}). 

{A major breakthrough toward positive welfare guarantees comes from ~\citet{EdenFFGK19}, who devise a 4-approximation mechanism 
for valuation functions that
satisfy a property they refer to as \textit{Submodular-over-Signals} (or \sos). 
\sos is a natural generalization of the submodularity property of set functions.
}
Roughly speaking, a valuation function $v(\cdot)$ is \sos if, for a signal profile $\s_{-j}$ for all bidders but $j$, when the signals $\s_{-j}$ are \emph{lower}, then an \emph{increase} in signal $s_j$ has a \emph{larger} effect on the valuations. In other words, information (signals) exhibit decreasing marginal returns. The \sos condition captures {many natural  settings including} most {scenarios} studied in the literature, such as mineral-rights auctions and art auctions.

Only more recently, \citet{EdenGZ22} have studied the case where valuation functions are assumed to be private (unknown to the seller or other bidders), just like signals are. This reflects the fact that in many real-world settings, 
{individuals' private information encompasses both their partial information about the good for sale, as well as the way they aggregate everyone's private information into a value. 
For instance, in the oil field example, an oil company's signal may be an estimate of the amount of oil, while their valuation function may reflect the company's estimated production cost, which may impact the profitability of the oil field.
There is no reason to assume that the signal---the estimated amount of oil in this case---is less private than the valuation function---the estimated production cost in this case. 
Indeed, while public valuations can be shown to be easier to handle, in the oil example, as well as in many additional auction settings, it is much more realistic to assume that both signals and valuation functions are private information.

As \citet{EdenGZ22} show, private valuation functions pose a much greater challenge than public ones. 
In particular, the single-crossing condition\footnote{Roughly speaking, a valuation function is single-crossing if $i$'s signal $s_i$ affects $i$'s valuation at least as much as it does any other bidder.}, which enables full efficiency in single-item auctions with public valuations and private signals, is rendered useless in settings where the valuations are private as well, and cannot guarantee more than the trivial $n$-approximation. 
On the positive side, they devise an $O(\log ^2 n)$-approximation mechanism when the valuations are \sos. \citet{EdenGZ22} left the following question unresolved:
\begin{quotation}
\emph{``Is there a mechanism that achieves a constant-factor approximation to the optimal welfare under private signals and private valuations?"}
\end{quotation}

We answer this question in the affirmative. Informally, our main result is the following.
    
\vspace{2mm}
\noindent\textbf{Main Theorem.} There exists a {polynomial time} truthful mechanism that gives  a constant-factor approximation to the optimal welfare in a single-item auction with private interdependent valuations that satisfy the \sos condition.  
\vspace{2mm}

Our main result extends in several ways: First, it applies to non-monotone \sos valuations. 
Second, it extends beyond single-item auctions, to settings with unit-demand valuations over multiple identical items.

Notably, our mechanism is randomized. 
This is inevitable, as even in the case of public \sos valuations, one cannot guarantee any approximation with deterministic mechanisms \cite{EdenFFGK19}. 
Moreover, turning to approximation is inevitable, as even in the case of public \sos valuations, one cannot get better than $2$-approximation even with a randomized mechanism \cite{EdenFFGK19}.

Our results are derived by introducing a new hierarchy of valuations which we term $d$-\selfbounding valuations, where each valuation profile is parameterized by $d\in \{1,\ldots, n\}$. The mechanism we devise gives a tight $\Theta(d)$-approximation for $d$-\selfbounding valuations. Our main results then follow by showing that monotone \sos valuations are $1$-\selfbounding and non-monotone \sos valuations are $2$-\selfbounding. 

\subsection{Related Work}
The two immediate precursors of this work are~\citet{EdenFFGK19}, which introduces \sos valuations for interdependent settings and brings them into the context of combinatorial auctions, and~\citet{EdenGZ22}, which is the first work to study interdependent valuations with private valuation functions. On the combinatorial front, \citep{EdenFFGK19} devises a random-sampling version of the VCG mechanism that obtains a $4$-approximation for {\emph{public}} \sos valuations that satisfy an additional separability condition.  For \emph{private} valuation functions under the \sos condition, \citet{EdenGZ22} show a $O(\log^2 n)$-approximation mechanism in the single-item setting. {The same paper~\cite{EdenGZ22} also considers a restricted setting, where the valuation functions depend on the signals of at most a constant number of bidders, and provides a constant-approximation mechanism for this case.} Related is \citet{DM00}, who study the interdependent setting when the seller is unaware of bidders' valuation functions, but crucially, the bidders do know each others' valuation functions. They devise a mechanism where the bidders bid a complicated contingent bidding function which maps the bids of other bidders to a bid for the bidder. They show that under single-crossing-type conditions, there is a fully-efficient equilibrium. 

A recent burst of work from the EconCS community has investigated interdependent value auctions in a variety of settings. \citet{RTC} study prior-independent mechanisms for revenue maximization. \citet{CFK} design approximately-optimal mechanisms for revenue maximization while trying to minimize the assumptions needed. \citet{EFFG} relax the single-crossing condition and study approximately-optimal welfare-maximizing auctions. \citet{AmerTC21} and \citet{LuSZ22} improve the approximation guarantees for single-item auctions under the \sos condition. \citet{EdenFTZ21} study the PoA of simple, non-truthful mechanisms. \citet{CohenFMT23} study combinatorial public projects in interdependent settings.  

Our work  introduces classes of valuation functions over signals analogous to the combinatorial valuation functions studied by~\cite{LehmannLN06}. The combinatorial valuations were proven useful in devising nearly-optimal mechanisms for welfare~\cite{Dobzinski21,Assadi019,DobzinskiNS06} and revenue~\cite{CaiZ17, RubinsteinW18}, as well as nearly simple, non-truthful, nearly-optimal mechanisms~\cite{CKS16, SyrgkanisT13, FeldmanFGL13}.  

\subsection{Organization}

In \Cref{sec:prelim}, we present our model and main definitions; specifically, in \Cref{sec:prelim-model} we present the interdependent values model, and give sufficient conditions for a truthful mechanism and in \Cref{sec:prelim-defs} we present the main properties of valuations used in this paper. 
In \Cref{sec:ideas-techniques}, we discuss the main ideas and intuition of our mechanism by presenting a na\"ive attempt and discussing the obstacles to this approach.
In \Cref{sec:d-sb-mech}, we present and prove our main result: a truthful $\Theta(d)$-approximation mechanism to $d$-\selfbounding valuations. Finally, in \Cref{sec:multi-unit}, we extend our result to the case of multiple identical items and unit-demand bidders.

\section{Model and Preliminaries} \label{sec:prelim}

\subsection{Interdependent Valuations and Truthful Mechanisms} \label{sec:prelim-model}

We consider a single-item auction with $n$ bidders with interdependent valuations. 
(In later sections, we extend our work to $k$ identical items and unit-demand bidders). 
Every bidder $i \in [n]$ receives a private signal $\sig_i \in S_i$, where $S_i$ denotes the signal space of bidder $i$. 
We denote by $\sigspace = S_1\times \ldots \times S_n$ the joint signal space of the bidders, and by $\sigs = (\sig_1, \ldots, \sig_n) \in \sigspace$ a signal profile.
As is standard, we denote by $\sigs_{-i}=(\sig_1, \ldots, \sig_{i-1},\sig_{i+1},\ldots, \sig_n)$ the signal profile of all bidders other than bidder $i$.

In addition, every bidder $i$ has a private valuation function $\val_i: \sigspace\rightarrow \mathbb R_+$, which maps a signal profile into a non-negative real number, which is bidder $i$'s value for the item.
We denote by $V_i \subseteq \mathbb R_+^{\sigspace}$ the valuation space of bidder $i$, and by $\mathbf{V} = V_1\times\ldots\times V_n$ the joint valuation space of all bidders. 
A vector $\vals = (\val_1, \ldots, \val_n) \in \mathbf{V}$ denotes a valuation profile. 

A mechanism is defined by a pair $(\allocs, \prices)$ of an allocation rule $\allocs: \sigspace\times\mathbf{V}\rightarrow [0,1]^n$ and a payment rule $\prices: \sigspace\times\mathbf{V}\rightarrow 
\mathbb{R}_+^n$, which receive bidder reports about their signals and valuations, and return an allocation and a payment for each bidder.  
$\alloc_i(\sigs,\vals)$ and $\price_i(\sigs,\vals)$ denote bidder $i$'s allocation probability and payment for reported signals and valuations $\sigs,\vals$, respectively.

Unless specified otherwise, we access bidder valuations via value queries; namely, given a signal profile $\sigs$, bidder $i$'s value oracle $\val_i$ returns $\val_i(\sigs)$. 
A mechanism is said to be polynomial if it makes a polynomial number of value queries.

A mechanism $(\allocs,\prices)$ is said to be {\em truthful} if it is an ex-post Nash equilibrium for the bidders to truthfully report their private information (signals and valuations).
In our query access model, truthfulness means that it is in every bidder's best interest to answer every query truthfully, given that other bidders do the same.

\begin{definition}[EPIC-IR]
A mechanism $(\mathbf{x}, \mathbf{p})$ is 
{\em ex-post incentive compatible (IC)} if for every $i\in[n],\sigs\in \sigspace,  \vals \in \mathbf{V},  \sig_i'\in S_i, \val_i'\in V_i$
\begin{equation}
x_{i}(\sigs, \vals) \cdot v_{i}(\sigs) - p_i(\sigs, \vals) \geq x_{i}(\sigs_{-i},\sig_i', \vals_{-i}, \val'_i) \cdot v_{i}(\sigs) - p_i(\sigs_{-i},\sig_i', \vals_{-i}, \val'_i).
\label{eq:IC}
\end{equation}
It is {\em ex-post individually rational (IR)} if for every $i\in[n]$, $\sigs\in\sigspace$, and $\vals\in\mathbf{V}$
\begin{equation}
x_{i}(\sigs, \vals) \cdot v_{i}(\sigs) - p_i(\sigs, \vals) \geq 0
\label{eq:IR}
\end{equation}
It is EPIC-IR if it is both ex-post IC and ex-post IR.
An allocation $\allocs$ is {\em EPIC-IR} implementable if there exists a payment rule $\prices$ such that the pair $(\allocs,\prices)$ is EPIC-IR. 
\end{definition}

{It is well known that even when the valuation functions are public, this is the strongest possible solution concept when dealing with interdependent valuations.\footnote{Dominant strategy incentive-compatibility does not make sense, as a bidder $i$ might not be willing to win if other bidders over-bid, which causes the winner to over-pay and incur a negative utility.}}

\citet{EdenGZ22} give a sufficient condition for an allocation rule $\allocs$ to be EPIC-IR implementable.

\begin{proposition}[\citet{EdenGZ22}]\label{lem:pricing}
An allocation rule $\allocs$ is EPIC-IR implementable if for every bidder $i$, $\alloc_{i}$ depends only on $\sigs_{-i}, \vals_{-i}$ and $\val_{i}(\sigs)$, and is non-decreasing in $\val_{i}(\sigs)$. 

For an (EPIC-IR) implementable $\allocs$, the corresponding payment rule $\prices$ is given by:
\begin{equation}
p_{i}(\sigs, \vals) := x_i(\sigs_{-i}, \vals_{-i}, \val_i(\sigs))\cdot \val_i(\sigs) - \int_0^{\val_i(\sigs)} x_i(\sigs_{-i}, \vals_{-i}, t) \, \mathrm dt.
\label{eq:pricing}
\end{equation}
\end{proposition}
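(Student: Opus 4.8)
The plan is to prove Proposition~\ref{lem:pricing} via the standard Myerson-style single-dimensional argument, applied ``pointwise'' in the other bidders' reports. First I would fix a bidder $i$ and condition on arbitrary fixed reports $\sigs_{-i}, \vals_{-i}$ of the other bidders. By hypothesis, $\alloc_i$ depends on bidder $i$'s report only through the scalar $z := \val_i(\sigs)$, and is non-decreasing in $z$; write $g(z) := \alloc_i(\sigs_{-i}, \vals_{-i}, z)$ for this non-decreasing function into $[0,1]$, and correspondingly define the payment by \eqref{eq:pricing}, i.e.\ $h(z) := g(z)\cdot z - \int_0^z g(t)\,\mathrm dt$. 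Note that when bidder $i$ deviates to a report $(\sig_i', \val_i')$, the only thing that changes in the allocation/payment is that $z$ is replaced by $z' := \val_i'(\sigs_{-i}, \sig_i')$ — an arbitrary nonnegative real the bidder can induce — while bidder $i$'s \emph{true} value for the item remains $v := \val_i(\sigs)$ (the value depends on the true signal profile, which is unaffected by the report). So the whole problem reduces to showing: for the one-dimensional allocation curve $g$ and payment $h$ above, reporting the true $v$ is optimal, i.e.\ $g(v)\cdot v - h(v) \ge g(z')\cdot v - h(z')$ for all $z'\ge 0$, and $g(v)\cdot v - h(v)\ge 0$.

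For the IR part, substitute the definition of $h$: the utility from truthful reporting is $g(v)\cdot v - h(v) = \int_0^v g(t)\,\mathrm dt \ge 0$ since $g\ge 0$. For the IC part, I would compute the utility from deviating to $z'$: it equals $g(z')\cdot v - h(z') = g(z')\cdot v - g(z')\cdot z' + \int_0^{z'} g(t)\,\mathrm dt = g(z')(v - z') + \int_0^{z'} g(t)\,\mathrm dt$. The difference between the truthful utility and the deviation utility is therefore $\int_0^v g(t)\,\mathrm dt - \int_0^{z'} g(t)\,\mathrm dt - g(z')(v-z') = \int_{z'}^v \bigl(g(t) - g(z')\bigr)\,\mathrm dt$. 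Now I split into the two cases $z' \le v$ and $z' > v$: if $z' \le v$, then for $t \in [z', v]$ monotonicity gives $g(t) \ge g(z')$, so the integrand is nonnegative and the integral is $\ge 0$; if $z' > v$, then $\int_{z'}^v = -\int_v^{z'}$, and for $t \in [v, z']$ monotonicity gives $g(t) \le g(z')$, so $g(t) - g(z') \le 0$ and $-\int_v^{z'}(g(t)-g(z'))\,\mathrm dt \ge 0$. In both cases the truthful utility is at least the deviation utility, establishing \eqref{eq:IC}.

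Finally I would unwind the reduction to conclude. The key observations making the reduction valid are: (i) because $\alloc_i$ and $\price_i$ depend on $i$'s report only through $\val_i(\cdot)$ evaluated at the reported signal profile, any joint deviation $(\sig_i', \val_i')$ by bidder $i$ affects bidder $i$'s outcome exactly as the single scalar deviation $z' = \val_i'(\sigs_{-i}, \sig_i')$ does; (ii) bidder $i$'s true valuation of the item, $\val_i(\sigs)$, is determined by the \emph{true} signal profile and is thus unchanged by the deviation — this is precisely what lets us treat $v$ as a fixed parameter and the problem as single-dimensional; and (iii) a bidder can report a valuation function taking any value in $\mathbb{R}_+$ at the relevant point, so ranging over $z' \ge 0$ covers all deviations (or at least all that matter; if $V_i$ is restricted, the inequality over a subset of deviations follows a fortiori). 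Also, the payment $p_i$ defined in \eqref{eq:pricing} indeed depends only on $\sigs_{-i}, \vals_{-i}, \val_i(\sigs)$, so it is a well-defined payment rule in the query-access sense.

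I do not anticipate a genuinely hard step here — this is the textbook ``monotone allocation implies truthful with Myerson payments'' argument, transplanted to the interdependent setting. The one point requiring a little care (and the closest thing to an obstacle) is the conceptual reduction in the previous paragraph: one must be careful that the valuation appearing in the utility is the \emph{true} $\val_i(\sigs)$ while the valuation ``plugged into'' the allocation and payment is the \emph{reported} $\val_i'(\sigs_{-i},\sig_i')$, and that these two coincide exactly when the bidder reports truthfully. Once that bookkeeping is set up correctly, the inequality $\int_{z'}^{v}(g(t)-g(z'))\,\mathrm dt \ge 0$ does all the work.
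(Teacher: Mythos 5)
Your proof is correct: the reduction to a single-parameter problem (fixing $\sigs_{-i},\vals_{-i}$, treating the reported scalar $z'=\val_i'(\sig_i',\sigs_{-i})$ as the only thing the deviation changes while the true value $v=\val_i(\sigs)$ stays fixed) plus the standard Myerson monotonicity computation $\int_{z'}^{v}(g(t)-g(z'))\,\mathrm dt\ge 0$ establishes exactly the EPIC-IR guarantee claimed. Note that the paper itself gives no proof of this proposition---it is imported from \citet{EdenGZ22}---and your argument is the same textbook route that underlies the cited result, so there is nothing to reconcile.
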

That is, bidder $i$'s allocation may depend on all other bidders' signals and valuation functions, and it can only depend on bidder $i$'s signal $\sig_i$ or valuation function $v_{i}$ through the numerical value $v_i(\sigs)$. {\citet{EdenGZ22} show that this condition is almost necessary in order to be EPIC-IR implementable.\footnote{{The necessary conditions for EPIC-IR implementablity are (i) $x_i$ is monotone in $v_i(\sigs)$, and (ii) for a given $\sigs_{-i}$, the set of signals $s_i,s'_i$ and valuation functions $v_i,v'_i$ such that $v_i(s_i,\sigs_{-i}) = v'_i(s'_i,\sigs_{-i})$ and $x_i(v_i,s_i,\sigs_{-i}) \neq x_i(v'_i,s'_i,\sigs_{-i})$ has measure $0$.}}} 

For the purpose of tie-breaking, we introduce the following notation.
\begin{definition}[Lexicographic tie-breaking]
    Let $a_i,b_j\in\mathbb{R}^+$, where $a_i$ is associated with bidder $i$ and $b_j$ is associated with bidder $j$. Given an ordering of the bidders $\pi=(\pi(1),\ldots, 
    \pi(n))$, we say $a_i>_\pi b_j$ if either $a_i>b_j$, or $a_i=b_j$ and $\pi(i)>\pi(j)$. \label{def:lex-bigger}
\end{definition}
As an example, {consider two quantities $a_1,b_2$ and the identity permutation $\pi(i) = i$.
\begin{itemize}
    \item If $a_1 =1$ and $b_2=0$, then $a_1 >_\pi b_2$.
    \item If $a_1 =0$ and $b_2 = 0$, then $b_2 >_\pi a_1$.
\end{itemize}}

\subsection{Properties of Valuation over Signals} \label{sec:prelim-defs}

In this section we introduce several properties of valuation functions over signals.
Recall that a valuation function over signals is a function $v: S_1\times\ldots\times S_n \rightarrow \mathbb R_+$, which assigns a (non-negative) real value to every vector of bidder signals.

Assume that signal spaces are totally ordered (e.g., $S_i \subseteq\mathbb R$ for all $i$). 
Denote by $\sigs \succeq \mathbf{t}$ that $\sigs$ is coordinate-wise greater than or equal to $\mathbf{t}$.
We first present the definition of a monotone valuation function.

\begin{definition}[Monotone]
A valuation $v$ over signals is \emph{monotone} if $v(\sigs) \geq v(\mathbf{t})$ for all $\sigs \succeq \mathbf{t}$.
\end{definition}

Note that, {while prior work in the interdependent values literature often assumes valuation functions to be monotone over signals,}
we also consider non-monotone valuations (see \Cref{lem:sos is sb}).

We next present the definition of submodularity over signals, defined by \citet{EdenFFGK19}.

\begin{definition} [\sos]
A valuation function $v$ is \emph{submodular over signals (\sos)} if for every $i\in [n]$ and every $\sigs\succeq\mathbf{t}$, it holds that $v(s_i, \sigs_{-i}) - v(t_i, \sigs_{-i})\leq v(s_i, \mathbf{t}_{-i}) - v(t_i, \mathbf{t}_{-i})$.
\end{definition}

Note that, when signals are binary (i.e., $S_i=\{0,1\}$ for all $i$), \sos coincides with the classic notion of submodular set functions. 

For the next properties, we use the notation $\low{}{i} := \underset{o_i\in S_i}{\inf}\val(o_i,\sigs_{-i})$ to denote the lowest value {of a valuation function $v$} over all bidder $i$'s signals, for a given signal profile $\sig_{-i}$ of all bidders other than bidder $i$. {We sometimes refer to $\low{}{i}$ as a lower-estimate of $v$.}
Note that if $\val$ is monotone, then $\low{}{i} := \val(0,\sigs_{-i})$ (where we normalize the lowest signal in $S_i$ to be $0$).

For our results, we use the notion of $d$-\selfbounding valuations, defined as follows.

\begin{definition}[\selfbounding and $d$-\selfbounding]
\label{def:self-bounding-valuation}
A valuation function $v$ is \selfbounding over signals if for every $\sigs\in \sigspace$, 
\begin{equation}
\sum_{i=1}^n (\val(\sigs) - \low{}{i}) \leq \val(s).
\label{eq:self-bounding}
\end{equation}
Similarly, a valuation function $v$ is $d$-\selfbounding over signals, for some parameter $d\in [n]$, if for every $\sigs\in \sigspace$,
$\sum_{i=1}^n (\val(\sigs) - \low{}{i}) \leq  d\cdot \val(s)$.
\end{definition}

For example, any function of the form $v(\sigs) = \sum_{i=1}^n f_i(s_i)$ is \selfbounding. Another example of \selfbounding functions are monotone \sos functions. In fact, any \sos function is 2-\selfbounding as shown in the following proposition.

\begin{restatable}{proposition}{sossb}
\label{lem:sos is sb}
Every monotone \sos valuation function is \selfbounding.
Moreover, every (possibly non-monotone) \sos valuation function is $2$-\selfbounding.
\end{restatable}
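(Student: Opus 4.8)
The plan is to verify the self-bounding inequality directly from the SOS condition by telescoping. Fix a signal profile $\sigs = (\sig_1, \ldots, \sig_n)$. For the monotone case, normalize the lowest signal in each $S_i$ to $0$, so that $\low{}{i} = v(0, \sigs_{-i})$. The key identity is that $v(\sigs) - v(0, \sigs_{-i})$ is the marginal contribution of raising bidder $i$'s signal from $0$ to $\sig_i$, holding everyone else at $\sigs_{-i}$. I would like to bound the sum of these $n$ marginals by $v(\sigs)$. To do this, consider the hybrid profiles $\mathbf{h}^{(k)} = (\sig_1, \ldots, \sig_k, 0, \ldots, 0)$ for $k = 0, 1, \ldots, n$, so $\mathbf{h}^{(0)} = (0,\ldots,0)$ and $\mathbf{h}^{(n)} = \sigs$. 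Then $v(\sigs) - v(0,\ldots,0) = \sum_{k=1}^n \big(v(\mathbf{h}^{(k)}) - v(\mathbf{h}^{(k-1)})\big)$, a telescoping sum whose $k$-th term is the marginal of raising $\sig_k$ from $0$ to its true value when bidders $1,\ldots,k-1$ are at their true signals and $k+1,\ldots,n$ are at $0$. By the SOS property applied with coordinate $i = k$ — comparing the "high" context $\mathbf{h}^{(k-1)}_{-k}$ (which is coordinate-wise above the all-others-zero context, since coordinates $1,\ldots,k-1$ are at their true, hence nonnegative, values) against the fully-zero context — each telescoping term is at least $v(\sig_k, \sigs_{-k}) - v(0, \sigs_{-k}) = v(\sigs) - \low{}{k}$. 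Wait: I need the inequality in the direction that makes $v(\sigs) - v(0,\ldots,0) \geq \sum_k (v(\sigs) - \low{}{k})$, which together with $v(0,\ldots,0) \geq 0$ gives the claim. SOS says marginals are larger at lower contexts, so the marginal at the "high" hybrid context $\mathbf{h}^{(k-1)}_{-k}$ is \emph{at most} the marginal at the all-zero context — that's the wrong direction for this particular telescoping order.

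So instead I would telescope in the \emph{reverse} order relative to the contexts: define $\mathbf{g}^{(k)} = (0,\ldots,0,\sig_{k+1},\ldots,\sig_n)$, so $\mathbf{g}^{(n)} = (0,\ldots,0)$ and $\mathbf{g}^{(0)} = \sigs$. Then $v(\sigs) - v(0,\ldots,0) = \sum_{k=1}^n \big(v(\mathbf{g}^{(k-1)}) - v(\mathbf{g}^{(k)})\big)$, and the $k$-th term is the marginal of raising $\sig_k$ from $0$ to its value in the context where coordinates $1,\ldots,k-1$ are at $0$ and $k+1,\ldots,n$ are at their true values — i.e., context $\mathbf{g}^{(k)}_{-k}$, which is coordinate-wise \emph{below} $\sigs_{-k}$. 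By SOS (marginals larger at lower contexts), this term is \emph{at least} $v(\sig_k, \sigs_{-k}) - v(0, \sigs_{-k}) = v(\sigs) - \low{}{k}$. Summing over $k$ and using $v(0,\ldots,0) \geq 0$ yields $v(\sigs) \geq v(\sigs) - v(0,\ldots,0) = \sum_k \big(v(\mathbf{g}^{(k-1)}) - v(\mathbf{g}^{(k)})\big) \geq \sum_k (v(\sigs) - \low{}{k})$, which is exactly \eqref{eq:self-bounding}. This handles the monotone case; the only place monotonicity is used is to identify $\low{}{i}$ with $v(0,\sigs_{-i})$ and to guarantee $v(0,\ldots,0) \geq 0$ is being subtracted off harmlessly (it is automatically nonnegative since $v$ maps to $\mathbb{R}_+$).

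For the non-monotone case, $\low{}{i}$ is now $\inf_{o_i} v(o_i, \sigs_{-i})$, attained (or approached) at some signal $o_i^\star$ that need not be the minimal element of $S_i$, and there is no single global "zero" profile. The idea is to run the same telescoping argument but replace the all-zero profile by the profile $\mathbf{o}^\star = (o_1^\star, \ldots, o_n^\star)$ of per-coordinate argmins, and to split each coordinate's adjustment into "raise part" and "lower part": $v(\sigs) - \low{}{i} = \big(v(\sig_i, \sigs_{-i}) - v(\max(\sig_i,o_i^\star), \sigs_{-i})\big) + \big(v(\max(\sig_i,o_i^\star), \sigs_{-i}) - v(o_i^\star, \sigs_{-i})\big)$ — two terms of definite sign, each of which can be telescoped against a monotone chain of hybrids with the SOS inequality applied in the appropriate direction. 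Carrying this through, each of the two families of telescopes contributes at most $v(\sigs)$ (minus a nonnegative anchor term), giving the factor $2$. The main obstacle I anticipate is precisely this bookkeeping in the non-monotone case: choosing the hybrid chains so that every SOS application compares contexts that are genuinely coordinate-wise ordered, and making sure the "anchor" values $v(\mathbf{o}^\star)$ or $v(\sigs)$ that fall out of the telescopes are subtracted rather than added (so nonnegativity of $v$ closes the bound). It may be cleanest to first prove a helper lemma of the form: for any profile $\mathbf{a} \preceq \mathbf{b}$, $\sum_i \big(v(b_i, \mathbf{b}_{-i}) - v(a_i, \mathbf{b}_{-i})\big) \le v(\mathbf{b}) - v(\mathbf{a})$ when moving up, and a companion when moving down, then apply each once. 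I'd expect the monotone statement to be a short paragraph and the factor-$2$ extension to need the bulk of the work.
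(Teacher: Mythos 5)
Your proposal is correct and is essentially the paper's own argument: the reverse-order telescoping chain you use in the monotone case is exactly the paper's chain that lowers one coordinate at a time to its minimizer and applies \sos at each step, and your suggested helper lemma for $\mathbf{a}\preceq\mathbf{b}$ (plus its downward companion) is precisely the paper's pair of intermediate inequalities. Your plan for the non-monotone case---splitting coordinates according to whether the per-coordinate minimizer lies below or above $\sig_i$ and applying each telescope once, each contributing at most $\val(\sigs)$---is exactly the paper's partition $[n]=A\sqcup B$, so carrying out the bookkeeping as you describe reproduces the factor $2$.
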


\begin{proof}
Consider an \sos function $v: \sigspace \rightarrow \mathbb R_+$. Take $s,o \in \sigspace$ and partition $[n] = A \sqcup B$, with $A = \{a_1, \dots, a_k\}$ and $B = \{b_1, \dots, b_\ell\}$ such that $o_A \preceq s_A$ and $s_B \preceq o_B$.
\begin{itemize}
\item For all $1 \leq i \leq k$ let $A_i = \{a_1, \dots, a_i\}$ and $s' = (o_{A_i}, s_{-A_i})$. We have $s' \preceq s$, and thus, by \sos for all $i\in\{2,\dots,k\}$ we have,
$$ \val(s)-\val(o_{a_i},s_{-a_i}) \leq \val(o_{A_{i-1}},s_{-A_{i-1}}) - \val(o_{A_i}, s_{-A_i}).$$
The inequality with $i=1$ is trivial. Therefore, summing over all $1 \leq i \leq k$, we obtain
$$ \sum_{a\in A}(\val(s) - \val(o_a, s_{-a})) \leq \val(s) - \val(o_A,s_B) \leq \val(s).$$
\item For all $1 \leq i \leq \ell$ let $B_i = \{b_1, \dots, b_i\}$ and $s' = (o_{B_i}, s_{-B_i})$. We have $s \preceq s'$, and thus, by \sos for all $i\in\{2,\dots,\ell\}$ we have,
$$\val(s)-\val(o_{b_i},s_{-b_i}) \leq \val(o_{B_{i-1}},s_{-B_{i-1}}) - \val(o_{B_i}, s_{-B_i}).$$
The inequality with $i=1$ is trivial. Thus, summing over all $2 \leq i \leq \ell$, we obtain
$$  \sum_{b\in B}(\val(s)-\val(o_b, s_{-b})) \leq \val(s) - \val(o_B,s_A) \leq \val(s).$$
\end{itemize}
Recall that $\low{}{i} := \inf_{o_i} \val(o_i, \sigs_{-i})$ for all $i$. If $v$ is non-decreasing, we set $A = [n]$ and $B = \emptyset$, and we use the first inequality. In the general case, we define the sets $A$ and $B$ according to the side of each infimum, and we sum the two equations, concluding the proof.
\end{proof}

A stricter notion than \selfbounding is that of a \critical (or $d$-\critical) valuation, defined as follows. 

\begin{definition}[$d$-\critical]
\label{def:bounded-valuation}
A valuation $v$ is $d$-\emph{critical} over signals for some parameter $d\in [n]$ if  for every $\sigs\in \sigspace$, the number of bidders $i$ such that $\val(\sigs) > \low{}{i}$ is at most $d$.
It is said to be \emph{critical} if this number is at most $1$.
\end{definition}

{We note that even the case of $d=1$ captures interesting scenarios (previously studied in the literature), such as the case where $v_i(\sigs) = \max_j v_{i,j}(s_j)$ for all $i$}.

\begin{proposition}
Every $d$-\critical valuation function is $d$-\selfbounding.
\end{proposition}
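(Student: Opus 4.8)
The plan is to verify the defining inequality of $d$-\selfbounding directly, by bounding each term of the sum $\sum_{i=1}^n (\val(\sigs) - \low{}{i})$ separately and using $d$-\critical-ness to control how many of these terms can be nonzero.

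The first observation is that for every bidder $i$ and every profile $\sigs \in \sigspace$ we have $0 \le \val(\sigs) - \low{}{i} \le \val(\sigs)$. The lower bound holds because $\low{}{i} = \inf_{o_i \in S_i}\val(o_i,\sigs_{-i})$ is an infimum over a set of values that includes $\val(s_i,\sigs_{-i}) = \val(\sigs)$, so $\low{}{i} \le \val(\sigs)$; the upper bound holds because valuation functions are non-negative, hence $\low{}{i} \ge 0$. The second observation is that a term $\val(\sigs) - \low{}{i}$ can be strictly positive only for a \critical bidder at $\sigs$: since $\low{}{i} \le \val(\sigs)$ always holds, the term fails to vanish only when $\val(\sigs) > \low{}{i}$, and by the definition of a $d$-\critical valuation there are at most $d$ such bidders $i$ at any fixed profile $\sigs$.

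Combining the two observations, at most $d$ of the $n$ summands are nonzero, and each such summand is at most $\val(\sigs)$; therefore $\sum_{i=1}^n (\val(\sigs) - \low{}{i}) \le d\cdot \val(\sigs)$, which is exactly the $d$-\selfbounding condition. Since $\sigs$ was arbitrary, this proves the proposition. There is no real obstacle here — the statement is an immediate consequence of the two definitions — and the only subtlety worth flagging is that the argument uses the non-negativity of valuations (to conclude $\low{}{i}\ge 0$, and hence $\val(\sigs)-\low{}{i}\le \val(\sigs)$ rather than just a bound in terms of a possibly negative $\low{}{i}$).
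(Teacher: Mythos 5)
Your proof is correct and follows essentially the same route as the paper's: bound each summand $\val(\sigs)-\low{}{i}$ by $\val(\sigs)$ when it is positive (using non-negativity of the valuation) and by $0$ otherwise, and invoke the $d$-\critical{} property to conclude that at most $d$ summands are positive. Your write-up just makes explicit the non-negativity observation that the paper leaves implicit.
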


\begin{proof}
    Given a $d$-\critical function $v$, observe that we can bound each term $\val(\sigs)-\low{}{i}$ by $\val(\sigs)$ if $\low{}{i} < \val(\sigs)$, and by $0$ otherwise. Summing over all $i$ gives that $v$ is $d$-\selfbounding.
\end{proof}

\citet{EdenGZ22} studied valuations {termed} {\em $d$-bounded dependency valuations} which depend on at most $d$ signals. Obviously $d$-bounded dependency valuations are also $d$-\critical, {which in turn are $d$-\selfbounding}. \citet{EdenGZ22}
show that no EPIC-IR mechanism can give a better than $O(d)$-approximation for $d$-bounded dependency valuations. Thus, their result implies the following.

\begin{proposition}[Follows from Proposition 4.1 in \citet{EdenGZ22}]
\label{cor:d_sb_impossibility}
        For every $d$, no EPIC-IR mechanism can give better than $(d+1)$-approximation for $d$-{\selfbounding} valuations, even if the valuations are public.
\end{proposition}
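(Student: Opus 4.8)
The plan is to derive the bound from the known lower bound for \emph{$d$-bounded dependency} valuations (valuations depending on at most $d$ signals), using only the containment of valuation classes already recorded in the discussion above. First I would make that containment fully explicit: if $v$ depends on at most $d$ signals, say those indexed by a set $T$ with $|T|\le d$, then for every $i\notin T$ and every $\sigs$ the quantity $\val(o_i,\sigs_{-i})$ does not depend on $o_i$, so $\low{}{i}=\val(\sigs)$ and the strict inequality $\val(\sigs)>\low{}{i}$ never holds; hence $\{\,i:\val(\sigs)>\low{}{i}\,\}\subseteq T$ has size at most $d$, i.e.\ $v$ is $d$-\critical, and therefore (by the proposition just proved) $d$-\selfbounding. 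Consequently, any single-item instance with $d$-bounded-dependency valuation functions is also an instance with $d$-\selfbounding valuation functions.

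The second step is to invoke \citet[Proposition 4.1]{EdenGZ22}, which constructs, for every $d$, a single-item instance with \emph{public} $d$-bounded-dependency valuations on which every EPIC-IR mechanism obtains at most a $1/(d+1)$ fraction of the optimal welfare. (At a high level this is a multi-bidder strengthening of the classical two-bidder ``anti-single-crossing'' example, in which bidder $i$'s value is $\varepsilon \sig_i$ plus a term depending on $d$ other signals, with $\varepsilon\to0$: the monotonicity that EPIC-IR implementability forces on each $\alloc_i$ as a function of $\val_i(\sigs)$ chains along a suitable sequence of signal profiles and, in the worst profile of the sequence, caps at $1/(d+1)$ the allocation probability the mechanism can give to the bidder who should win.) By the first step the valuation functions of this instance are $d$-\selfbounding, so the same instance shows that no EPIC-IR mechanism can achieve better than a $(d+1)$-approximation for $d$-\selfbounding valuations, even when the valuations are public. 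This is precisely the statement of the proposition.

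The mathematical content is thus entirely in \citet{EdenGZ22}; the only real care needed is bookkeeping against their construction: checking that each valuation there indeed depends on at most $d$ signals (and not, say, on $d$ others together with the recipient's own signal, which would be $(d+1)$-bounded dependency and would degrade the conclusion to a $(d+2)$ bound), and that their Proposition 4.1 yields the precise factor $(d+1)$ rather than merely an asymptotic $\Theta(d)$. Should it be stated only asymptotically, the fallback is to re-derive their instance and track the constants; no new idea is required, since the passage through $d$-\critical is the whole argument.
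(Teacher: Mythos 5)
Your proposal is correct and matches the paper's own (implicit) argument exactly: the paper also notes that $d$-bounded dependency valuations are $d$-\critical, hence $d$-\selfbounding, and then imports the lower-bound instance of \citet[Proposition 4.1]{EdenGZ22} for public valuations wholesale. Your extra care about verifying the precise $(d+1)$ factor in their construction is reasonable bookkeeping but not a different proof route.
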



\section{Main Ideas of Our Techniques}\label{sec:ideas-techniques}

\paragraph{Starting point: $d$-\critical valuations.}
We begin with $d$-\critical valuations. 
When $d$ is a constant, the mechanism devised by \citet{EdenGZ22} for $d$-bounded dependency valuations gives a constant-factor approximation to the optimal social welfare{ for this restricted class of valuations}. {{We note that this mechanism is substantially different from their $O(\log^2 n)$-approximation for \sos valuations.}}
For simplicity of presentation, our description below refers to $1$-\critical valuations.
{The entire discussion extends easily to $d$-\critical valuations.}

The mechanism can be described as follows: 
For every bidder $i$, compare $i$'s value $\val_i(\sigs)$ to the other bidders' $j\neq i$ values under the worst-possible signal of bidder $i$, namely $\low{j}{i} = \inf _{o_i} v_j(o_i, \sigs_{-i})$. 
Then, if $\val_i(\sigs) > \low{j}{i}$ for every other bidder $j\neq i$, bidder $i$ is said to be a ``\textit{candidate}'' (to be allocated to), which means that the item is allocated to bidder $i$ with probability $x_i=1/2$.

As they show, this mechanism is EPIC-IR (as for every $\vals_{-i},\sigs_{-i}$, $i$'s allocation probability is monotone in $\val_i(\sigs)$), and gives a $1/2$-approximation (since the highest-valued bidder wins with probability at least $1/2$).
Moreover, if there is a unique highest-valued bidder, call it $i^\star$, then this mechanism is feasible, since the only bidder that has a chance to receive a non-zero allocation probability aside from $i^\star$ is the one bidder whose signal can decrease $i^\star$'s value at signal profile $\sigs$ (bidder $i'$ in \Cref{fig:bounded-independence}). Since there is at most one such bidder, and since both get allocation probability $1/2$, the mechanism is feasible.
Otherwise (if there is no unique highest value), it can be handled by using a fixed lexicographic tie-breaking between identical values {when deciding if a bidder is a candidate}.

\begin{figure}[ht!]
    \centering
    \begin{tikzpicture}[thick,>=stealth]
    \draw[very thick,->] (0,0) -- (0,3);
    \draw[dashed] (0,2.5) -- (10,2.5);
    \draw[ForestGreen][dashed] (0,1) -- (10,1);
    \node[anchor=east] at (0,1.5)
        {$\mathrm{Value}$};
    \fill[NavyBlue] (2.5,2.5) circle (1mm);
    \node[anchor=south] at (2.5,2.5) {\footnotesize$v_{i^\star}(\sigs)$};
    \fill[ForestGreen] (2.5-.1,0.94) rectangle (2.5+.1,1.06);
    \node[anchor=north] at (2.5,1) {\footnotesize$\low{i^\star}{i'}$};
    \draw[densely dotted] (2.5,1.07) -- (2.5,2.4);
    \fill[ForestGreen] (4,1.8) circle (1mm);
    \node[anchor=south] at (4,1.8) {\footnotesize$v_{i'}(\sigs)$};
    \fill (6,1.3) circle (1mm);
    \node[anchor=south] at (6,1.3) {\footnotesize$v_{\ell}(\sigs)$};
    \fill (8,0.5) circle (1mm);
    \node[anchor=south] at (8,0.5) {\footnotesize$v_{k}(\sigs)$};
    \node at (8,2.9) {\footnotesize$\low{i^\star}{j} = \val_{i^\star}(\sigs)\quad(\forall j\neq i')$};
    \end{tikzpicture}
    \caption{Illustration of the mechanism in~\citet{EdenGZ22} for   bounded-dependency valuations. There are at most two  candidates:~the highest-valued bidder $i^\star$ (breaking ties {consistently}),
    and the only bidder $i'$ 
    {whose} signal affects the value of $i^\star$. No other bidder (e.g., $k$ or $\ell$) can be a candidate.}
\label{fig:bounded-independence}
\end{figure}
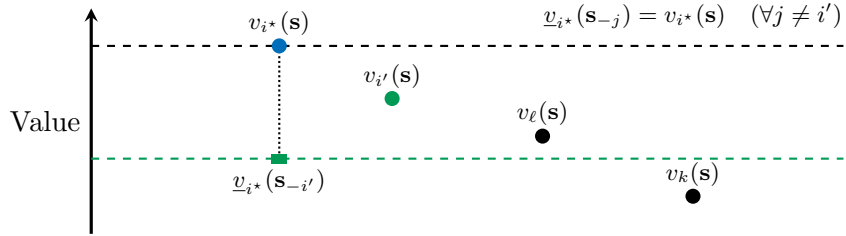

This idea can be extended to a $(d+1)$-approximation mechanism for $d$-\critical valuations. 
In this case, we set $x_i = 1/(d+1)$ instead.

We next describe how to build upon this idea for the general case of \sos valuations.

\paragraph{\sos valuations.}
Unfortunately,  for \sos valuations, the previous observation {has no bearing}, as for any given signal profile, $\val_i(\sigs)$ can be affected by an arbitrary number of bidders{, leading to an $\Omega(n)$-approximation}. 
Here, \Cref{lem:sos is sb} comes to our aid. Namely, since \sos valuations are $1$-\selfbounding, there is at most one bidder that can decrease $i$'s value by a factor larger than 2.

\paragraph{First attempt: discretized values.}
As a na\"ive first attempt, consider simply discretizing the valuation space into powers of 2, and rounding down every value to the nearest power of two. That is, {for a valuation function $v$, we define the discretized value} $\tilde{v}(\sigs) = 2^{\lfloor \log \val (\sigs)\rfloor}$ for all signal profiles $\sigs$. If it happens to be the case that the valuations are slightly below a power-of-two (i.e., $\val(\sigs) = 2^{\ell} - \varepsilon$), then they need to be lowered by a factor of at least $1/2$ in order {for $v(\sigs)$ and $\low{}{i}$ to be discretized to different powers-of-two.}
In this case, at most one bidder can decrease the value of a bidder from its true rounded-down value to the next rounded-down value, and so we could use the mechanism for $1$-\critical valuations and lose another factor of $2$ due to the discretization. 
Unfortunately, the discretized value may, in general, be affected by all bidders, even for \sos valuations. Thus, even the discretized valuations can be $n$-critical. 
This is demonstrated in the following example. 

\begin{example} \label{ex:nonrandomrounding}
Consider the $1$-\selfbounding valuation function $v(\sigs) = \frac{2(1+\epsilon)}{n} \cdot \sum_i s_i$, where $s_i=1$ for all $i$ and $\epsilon < 1/(n-1)$.~Its discretization is $\tilde{v}(\sigs)=2$.~For every $i$, ${\val}(0_i,\sigs_{-i}) < 2$, thus $\tilde{\val}(0_i,\sigs_{-i}) =1$, and $n$ different bidders can decrease $i$'s rounded-down value to the next power-of-two. 
\end{example}

Indeed, the valuation $v(\sigs)$ in \Cref{ex:nonrandomrounding} is monotone \sos (and $1$-\selfbounding), but the na\"ive discretization results in all $n$ bidders being able to decrease the discretized valuation at the true signal profile of $\sigs = (1,1,\cdots,1)$.

\paragraph{Second attempt: randomized discretization.}
Our next attempt is using \emph{randomized discretization}. 
Concretely, rather than rounding down to the nearest $2^\ell$ {from below} for $\ell \in \mathbb{Z}$, we draw $r \sim U[0,1)$ and round down to the nearest $2^{\ell+r}$ {from below}.
{We show that}
using this random discretization, \emph{in expectation}, a constant number of bidders can decrease the (randomly) discretized valuation.  
 
However, even with our randomized discretization, another problem may arise: a fixed tie-breaking order can lead to too many bidders being candidates, which results in a feasibility problem. 
This is demonstrated in the following example.

\begin{example} \label{ex:tiebreaking} For each bidder $i\geq\sqrt{n}$, let $\underline{i} = i - \sqrt{n} +1$, we have $$\val_i(\sigs) = 2^{(n-i)/n}\cdot\sum_{j={\underline{i}}}^{i}s_j/\sqrt{n}.$$
For $\sigs =(1,1,\ldots,1)$, the values of the all the bidders $i\ge \sqrt{n}$ are almost evenly distributed between $1$ and $2$; specifically, $\val_i(1,\ldots,1)=2^{(n-i)/n}$. 
Each bidder $j$ can affect values which are slightly smaller than their own, that is $\low{i}{j} = \val_i(\sigs)\cdot (1-1/\sqrt{n})$ for all {$i \in [j, j+\sqrt{n} -1]$}, and $\low{i}{j} = \val_i(\sigs)$ for every other $i$. We illustrate this visually in \Cref{fig:tie-breaking-example}. 
\end{example}
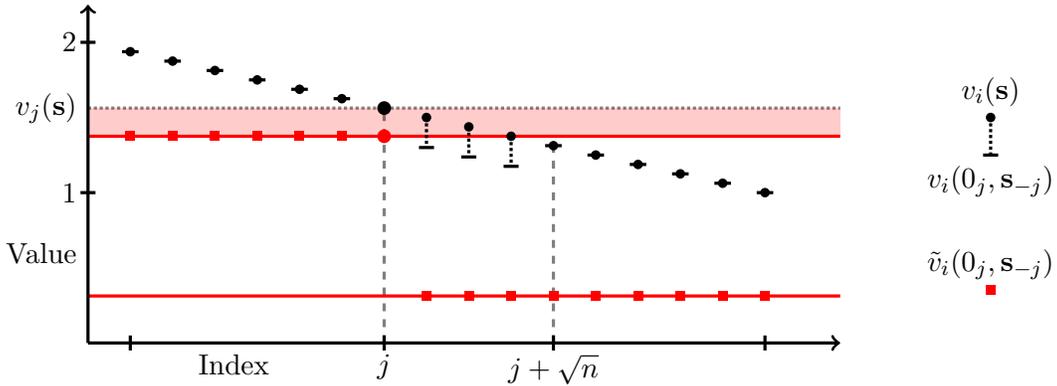
\begin{figure}[ht]
    \begin{center}
    \begin{tikzpicture}[very thick]
    \def\i{7}
    \def\o{-0.5}
    \draw[densely dotted, black!50!white] (0,3.5-2*\i/16) -- (10,3.5-2*\i/16);
    \draw[dashed, black!50!white] (9*\i/16,\o) -- (9*\i/16,3.5-2*\i/16);
    \draw[dashed, black!50!white] (9*\i/16+9/4,\o) -- (9*\i/16+9/4,3.5-2*\i/16-2/4);
    \fill[red, opacity=0.2] (0,3.5-2*\i/16) rectangle (10,3.5-2*\i/16-2*3/16);
    \draw[red] (0,3.5-2*\i/16-2*3/16) -- (10,3.5-2*\i/16-2*3/16);
    \draw[red] (0,3.5-2*\i/16-2-2/4) -- (10,3.5-2*\i/16-2-2/4);
    \node[anchor=east] at (0,3.5-2*\i/16) {$\val_j(\sigs)$};
    \node[anchor=east] at (0,3.5) {$2$};
    \node[anchor=east] at (0,1.5) {$1$};
    \node[anchor=east] at (0,0.7) {$\mathrm{Value}$};
    \draw (9*\i/16,-.1+\o) -- (9*\i/16,.1+\o);
    \draw (9*\i/16+9/4,-.1+\o) -- (9*\i/16+9/4,.1+\o);
    \draw (9/16,-.1+\o) -- (9/16,.1+\o);
    \draw (9,-.1+\o) -- (9,.1+\o);
    \node[anchor=north] at (9*\i/16, \o) {$j$};
    \node[anchor=north] at (9*\i/16+9/4, \o) {$j+\sqrt{n}$};
    \node[anchor=north] at (9*\i/16-2, \o) {$\mathrm{Index}$};
    \draw[->] (0,\o) -- (10,\o);
    \draw[->] (0,\o) -- (0,4);
    \draw (-.1,1.5) -- (.1,1.5);
    \draw (-.1,3.5) -- (.1,3.5);
     \fill (9*\i/16,3.5-2*\i/16) circle (2.6pt);
     \fill[red] (9*\i/16,3.5-20/16) circle (2.6pt);
    \foreach \i in {1,2,...,6} {
        \draw (9*\i/16-.11,3.5-2*\i/16) -- (9*\i/16+.11,3.5-2*\i/16);
        \fill[red] (9*\i/16-.06,3.44-20/16) rectangle (9*\i/16+.06,3.56-20/16);
        \fill (9*\i/16,3.5-2*\i/16) circle (1.8pt);
    }
    \foreach \i in {8,...,16} {
        \fill[red] (9*\i/16-.06,3.44-22/16-2) rectangle (9*\i/16+.06,3.56-22/16-2);
    }
    \foreach \i in {11,...,16} {
        \draw (9*\i/16-.11,3.5-2*\i/16) -- (9*\i/16+.11,3.5-2*\i/16);
        \fill (9*\i/16,3.5-2*\i/16) circle (1.8pt);
    }
    \foreach \i in {8,...,10}  {
        \draw[densely dotted] (9*\i/16,3.5-2*\i/16) -- (9*\i/16,3.5-2*\i/16-.4);
        \draw (9*\i/16-.1,3.5-2*\i/16-.4) -- (9*\i/16+.1,3.5-2*\i/16-.4);
        \fill (9*\i/16,3.5-2*\i/16) circle (1.8pt);
    }
    \fill (12,2.5) circle (1.8pt);
    \draw[densely dotted] (12,2.5) -- (12,2);
    \draw (11.9,2) -- (12.1,2);
    \fill[red] (12-.06,.14) rectangle (12+.06,.26);
    \node[anchor=south] at (12,2.5) {$\val_i(\sigs)$};
    \node[anchor=north] at (12,2) {${\val}_i(0_j,\sigs_{-j})$};
    \node[anchor=south] at (12,.2) {$\tilde{\val}_i(0_j,\sigs_{-j})$};
    \end{tikzpicture}
    \end{center}
    \caption{{Illustration of \Cref{ex:tiebreaking} showing that the expected number of candidates can be $\sqrt{n}$ using a fixed tie-breaking rule.}}
    \label{fig:tie-breaking-example}
\end{figure}

In \Cref{ex:tiebreaking}, the true values of the bidders decrease with their index/name. Suppose we break ties according to the identity permutation $\pi$ (i.e., $\pi(i)=i$ for all $i$).
This means that by \Cref{def:lex-bigger}, in the above example, if the discretized values are the same for two bidders, we break ties in favor of the bidder with the lower true value. We claim that each bidder is a candidate with probability $1/\sqrt{n}$. This is because, bidder $j$ is a candidate when $\val_{j}(\sigs)$ and $\low{i}{j}$ for $i<j$ are rounded to the same discretization point, and  all $\low{i}{j}$ for $i>j$ are rounded to a lower discretization point {(the discretization points are illustrated by the two red lines in~\Cref{fig:tie-breaking-example})}. {This corresponds to the event that the corresponding random discretization point falls between $2^{(n-j)/n}$ and $2^{(n-j-\sqrt{n})/n}$ (the shaded red area illustrated in~\Cref{fig:tie-breaking-example}); i.e., when we draw $r\sim U[0,1)$ it falls in $(1 - j/n -1/\sqrt{n},~1-j/n]$, an interval of length $1/\sqrt{n}$}. Hence, the expected number of bidders who are candidates is $\sqrt{n}$. {So if we directly attempt to use the above mechanism {that is designed for the case} where at most $d$ bidders can decrease one's value, we need to normalize the allocation probabilities $x_i$ by $\sqrt{n}$ {in order to preserve feasibility}, 
which in turn leads to a $\sqrt{n}$-factor loss in the approximation ratio}. 

\paragraph{Final solution: randomized tie-breaking.}
To handle this, we turn to a random tie-breaking rule.
Specifically, we choose a tie-breaking rule by picking a tie-breaking order (permutation) uniformly at random. The intuition here is reminiscent of the secretary problem where we bound the probability of prematurely accepting a sub-optimal element by relying on the second-best appearing earlier in the sample. Here, we observe that if we break ties in favor of a bidder $i < j$ (i.e., $i$ {ranks above} $j$ in the random tie-breaking order) then bidder $j$ has no chance of being a candidate. {Following this intuition, we turn to using a random permutation in order to break ties. However, note that there is still a (worst-case) order with a large number of candidates as shown above in \Cref{ex:tiebreaking}. Therefore, we set the allocation probabilities proportional to the \emph{probability that bidder} $j$ \emph{is a candidate}, instead of giving a fixed allocation probability to all the bidders who are candidates under a given randomization.}

Our Randomized Candidate Filtering~(\ref{alg:SB})~Mechanism  combines the concepts of randomized discretization and random tie-breaking as follows. 
It first considers the rounded-down valuations to randomly selected discretization points. It marks bidder $i$ as a ``candidate'' to be allocated if their discretized value is larger than all the other bidders' discretized lower-estimates, breaking ties according to a random permutation. 
Mechanism~\ref{alg:SB} then  allocates the item to bidder $i$ according to the probability that $i$ will be a candidate when choosing a random $r\sim U[0,1]$ and a uniformly random permutation $\pi$. 

In \Cref{lem:dsb_truthful}, we show that the mechanism is EPIC-IR since for a fixed $\sigs_{-i}, \vals_{-i}$ increasing $v_i(\s)$ also increases the probability of being a candidate. This mechanism is not yet feasible, as the expected number of candidates can 
exceed 1.
Therefore, we normalize the allocation probability by (an upper bound on) the expected number of candidates. \emph{Bounding the expected number of candidates is our main technical challenge.}

\section[O(d)-Approximation for d-self-bounding]{$O(d)$-Approximation for $d$-\selfbounding} \label{sec:d-sb-mech}
    
In this section, we prove our main result. Building upon the intuition of the previous sections, we define our mechanism for instances with $d$-\selfbounding valuations. We prove truthfulness and show the desired  $O(d)$-approximation guarantee which is optimal (up to constants). 
As \sos valuations are $2$-\selfbounding, this implies a constant-factor approximation EPIC-IR mechanism for \sos valuations, answering the open question raised by~\citet{EdenGZ22} in the affirmative.

\begin{theorem}\label{thm:sos-constant}
    There exists an EPIC-IR mechanism that obtains a tight $\Theta(d)$-approximation to the optimal welfare for any instance with $d$-\selfbounding valuations. Specifically, 
    \begin{itemize}
        \item There exists a $5.55$-approximation mechanism for monotone $\sos$ valuations.
        \item There exists a $8.32$-approximation mechanism for (non-monotone) $\sos$ valuations. 
        \item The mechanism can be made oblivious to $d$ by losing another factor of $2$ in the approximation.
    \end{itemize}
    Finally, the allocation and payments can be computed in polynomial time.
\end{theorem}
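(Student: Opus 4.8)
The plan is to analyze the \emph{Randomized Candidate Filtering} mechanism sketched in \Cref{sec:ideas-techniques} and to reduce the entire theorem to one counting estimate. Draw $r\sim U[0,1)$ and a uniformly random permutation $\pi$ of the bidders, and for a valuation $v$ let $\tilde v(\sigs):=2^{\,r+\lfloor\log_2 v(\sigs)-r\rfloor}$ be its randomized discretization, so that $\tilde v(\sigs)=v(\sigs)\,2^{-\phi}$ with $\phi:=(\log_2 v(\sigs)-r)\bmod 1$ uniform on $[0,1)$, and in particular $v(\sigs)/2<\tilde v(\sigs)\le v(\sigs)$. Call bidder $i$ a \emph{candidate} for the draw $(r,\pi)$ if $\tilde v_i(\sigs)>_\pi \tilde{\underline v}_j(\sigs_{-i})$ for every $j\ne i$, where $\tilde{\underline v}_j(\sigs_{-i})$ is the discretization of $j$'s lower estimate $\low{j}{i}=\inf_{o_i}v_j(o_i,\sigs_{-i})$ and $>_\pi$ is lexicographic tie-breaking by $\pi$ (\Cref{def:lex-bigger}). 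Let $q_i:=\Pr_{r,\pi}[\,i\text{ is a candidate}\,]$ and allocate $x_i:=q_i/C$, where $C$ is a constant upper bound on $\E_{r,\pi}[\#\text{candidates}]$; I will argue that $C=2(d+1)$ works. Payments are obtained from the allocation via \Cref{lem:pricing}.

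First I would dispatch the easy properties. For EPIC-IR it suffices, by \Cref{lem:pricing}, that $x_i$ depends on $v_i$ only through $v_i(\sigs)$ and is non-decreasing in it: $i$'s candidacy compares $\tilde v_i(\sigs)$, a non-decreasing step function of $v_i(\sigs)$, against the quantities $\tilde{\underline v}_j(\sigs_{-i})$, which are functions of $\sigs_{-i},\vals_{-i}$ alone (the infimum defining $\low{j}{i}$ never queries $v_i$), so raising $v_i(\sigs)$ can only turn a non-candidate draw into a candidate one, and $q_i$ is monotone. Feasibility is immediate: $\sum_i x_i=\E_{r,\pi}[\#\text{candidates}]/C\le 1$. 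For the running time, fix $\sigs_{-i},\vals_{-i}$ and a value $t$; integrating the candidacy indicator over $r$ turns $q_i$, as a function of $t$, into the integral of a step function with $O(n)$ breakpoints (where some $\log_2(\low{j}{i}/t)$ crosses an integer), which has a closed form, and $t\mapsto q_i$ is itself piecewise elementary with $O(n)$ pieces, so the payment integral \eqref{eq:pricing} is evaluated in polynomial time with $O(n^2)$ value queries.

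The approximation guarantee then reduces to the counting estimate. The structural fact is that for every draw $(r,\pi)$ the bidder $w=w(r,\pi)$ with the $\pi$-lexicographically largest discretized value is always a candidate: for $j\ne w$ we have $\tilde{\underline v}_j(\sigs_{-w})\le\tilde v_j(\sigs)$ (monotone discretization and $\low{j}{w}\le v_j(\sigs)$) and $\tilde v_w(\sigs)\ge_\pi\tilde v_j(\sigs)$ by choice of $w$, and chaining these two comparisons (with care at equalities) yields $\tilde v_w(\sigs)>_\pi\tilde{\underline v}_j(\sigs_{-w})$. Since $w$ is then a candidate with $v_w(\sigs)\ge\tilde v_w(\sigs)\ge\tilde v_{i^\star}(\sigs)$ for $i^\star=\argmax_i v_i(\sigs)$, and $\E_r[\tilde v_{i^\star}(\sigs)]=\frac{1}{2\ln 2}\,v_{i^\star}(\sigs)$,
\[
\E\Big[\sum_i x_i v_i(\sigs)\Big]=\frac1C\,\E_{r,\pi}\Big[\sum_{i\text{ cand}}v_i(\sigs)\Big]\ \ge\ \frac1C\,\E_{r,\pi}\big[v_w(\sigs)\big]\ \ge\ \frac{1}{2C\ln 2}\,\opt,
\]
so the approximation ratio is at most $2C\ln 2$. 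With $C=2(d+1)$ this is $4(d+1)\ln 2$, i.e.\ $8\ln 2\approx 5.55$ for $d=1$ and $12\ln 2\approx 8.32$ for $d=2$; by \Cref{lem:sos is sb} monotone \sos valuations are $1$-\selfbounding and general \sos valuations are $2$-\selfbounding, giving the two stated bounds, and the $d$-oblivious version replaces the hard-coded $2(d+1)$ by an upper bound on $\E_{r,\pi}[\#\text{candidates}]$ computed from the reported values in a way that keeps each $x_i$ monotone in $v_i(\sigs)$, at an extra factor $2$.

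The main obstacle is proving $\E_{r,\pi}[\#\text{candidates}]\le 2(d+1)$ for $d$-\selfbounding valuations. My plan is to split the candidates of a draw into those ``far below'' the top discretized value and those ``near the top''. If candidate $i$ sits at least two discretization levels below some bidder $j$, then $i$ being a candidate forces $\low{j}{i}$ to round strictly below $i$'s level, hence $v_j(\sigs)-\low{j}{i}$ is at least a constant fraction of $v_j(\sigs)$; since $\sum_k\big(v_j(\sigs)-\low{j}{k}\big)\le d\cdot v_j(\sigs)$ by $d$-\selfbounding of $v_j$, at most $O(d)$ candidates can be this far below any fixed bidder, in particular below $w$. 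The candidates ``near the top'' occupy $O(1)$ adjacent levels, and there the random permutation does the work: within a level $i$ is a candidate only if $\pi$ ranks it ahead of every same-level bidder whose lower estimate rounds up to that level, so averaging over $\pi$ and then over the offset $r$ expresses the count as an integral of $1/(\text{counting function}+1)$, which \selfbounding again forces to be small (the offset $r$ being essential here to smooth behaviour near level boundaries, which a fixed-$r$ argument cannot handle). Reconciling the constants from these two regimes—and, if necessary, recursing down the levels—is the delicate, calculation-heavy step, and is where the precise value $2(d+1)$, hence the constants $5.55$ and $8.32$, is pinned down.
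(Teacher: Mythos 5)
Your mechanism, truthfulness argument, approximation argument (top discretized bidder is always a candidate, plus $\E_r[f_r(v)]=v/(2\ln 2)$), constants, and even the polynomial-time and unknown-$d$ remarks all track the paper's route. But the theorem does not follow from what you have actually proved: the entire statement hinges on the feasibility bound $\E_{r,\pi}[\#\text{candidates}]\leq 2(d+1)$ for $d$-\selfbounding valuations, and you have only sketched a plan for it while explicitly conceding that "reconciling the constants \dots is the delicate, calculation-heavy step" you have not carried out. This is precisely the part the paper identifies as its main technical challenge (\Cref{lem:dsb_feasibility}). Your "far below" regime is fine but easy: if a candidate sits two levels below the top bidder $1$, then $\low{1}{i}<\val_1(\sigs)/2$, and \selfbounding caps the number of such bidders at $2d$ per draw. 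The crux is the "near the top" regime, which is exactly where \Cref{ex:tiebreaking} lives: $\Theta(\sqrt n)$ bidders all within a factor $2$ of the top, each with $\sqrt n$ bidders able to nudge it just below, so neither a per-level counting argument nor a fixed-$r$ argument gives a constant, and the heuristic "integral of $1/(\text{counting function}+1)$" is not yet a bound.

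For comparison, the paper closes this gap with a chain of lemmas you would need analogues of: sort the competitors of $i$ by decreasing lower estimates, define the thresholds $\tau_{i,\ell}$ and the stopping index $t(\pi)$ with $\Pr[t(\pi)=\ell]=\frac{1}{\ell(\ell+1)}$, apply the rearrangement inequality to pass from the lower-estimate order to the value order (\Cref{lem:proba-candidate}), split the resulting bound into $A_i$ terms (ratios $\val_j(\sigs)/\low{j}{i}$, controlled per bidder $j$ by the convexity estimate $\sum_i\logdag(\val_j(\sigs)/\low{j}{i})\leq 2d$ of \Cref{lem:dsb}) and $B_i$ terms (ratios of true values, controlled by a telescoping computation giving $\sum_i B_i\leq 1$ in \Cref{lem:feasibilityB}), with the split at the set of bidders above $\val_1(\sigs)/2$. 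Without this (or an equivalent) argument, the normalization $\eta=2(d+1)$, hence the constants $5.55$ and $8.32$, the tightness claim, and the unknown-$d$ extension are all unsupported; your proposal establishes the conditional statement "if the expected number of candidates is at most $2(d+1)$, then the theorem holds," not the theorem itself.
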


Our mechanism, the Randomized Candidate Filtering (\ref{alg:SB}) Mechanism, operates as follows. It rounds down valuations to randomly selected thresholds around powers of $2$, and marks a bidder as \textit{candidate} by setting $c_i=1$ if their rounded-down value is lexicographically larger (\Cref{def:lex-bigger}) than all other bidders' rounded-down lower-estimates. {Tie-breaking is done using a randomly drawn permutation}. Our mechanism assigns each bidder an allocation probability which is {proportional to} the probability this bidder will be a candidate (that is, sets $x_i=\E[c_i]/\eta$, {for a normalization factor $\eta$}). The probability is taken over the random rounding and the random lexicographic tie-breaking. 

{The desiderata of the mechanism are: (i) truthfulness, (ii) constant approximation to the optimal social welfare, and (iii) allocation feasibility.}

The mechanism is truthful since as a bidder's value increases,  $c_i$ can only increase (\Cref{lem:dsb_truthful}).

{To show that the mechanism achieves a good approximation, we show that for every random coin toss of the algorithm, there is always a nearly-optimal candidate (\Cref{lem:dsb_approx}).}

{The main technical challenge is the third desideratum, namely feasibility. 
As there can be more than one candidate for a given random seed, the mechanism need not be feasible. 
The main technical challenge is indeed showing that for $d$-\selfbounding valuations, the expected number of candidates is $O(d)$; this is established in \Cref{lem:dsb_feasibility}.}
Therefore, by normalizing by a factor $O(d)$, we retain feasibility and get an $O(d)$-approximation algorithm. As \sos valuations are $1$-\selfbounding, this implies a constant-factor approximation EPIC-IR mechanism for \sos valuations, answering the open question raised by~\citet{EdenGZ22} in the affirmative.  Our mechanism follows.

\begin{algorithm}[h]
    \begin{enumerate}
\item Elicit reported signals $\sigs[\hat] = \{\sig[\hat]_i \in S_i\}_{i\in [n]}$ and values $\vals[\hat] = \{\val[\hat]_i: \sigspace\rightarrow \mathbb R_+\}_{i\in [n]}$.
\item Let $\eta \ge 1$ be a normalization parameter to be set later. 

\item For each bidder $i$, define \[x_i = \frac{\E_{r,\pi}[c_i]}{\eta},\]  where  $r$ is uniformly distributed on $[0,1)$, $\pi$ is a uniformly random permutation, and  $c_i$ is an indicator variable defined as follows:
\[c_i =
\begin{cases}
1, & \text{if } f_r(\val[\hat]_i(\sigs[\hat])) >_{\pi} f_r(\low[\hat]{j}{i}) \text{ for all } j\neq i \text{ (recall \Cref{def:lex-bigger})}\\
0, & \text{otherwise}
\end{cases}\]
where $f_r(w) := {2}^{r+k}$ such that ${2}^{r+k} \leq w < {2}^{r+k+1}$ for all $w$.
\item Allocate the item to bidder $i$ with probability $x_i$ for all $i\in [n]$.
\item Charge prices using \Cref{eq:pricing}.    
\end{enumerate}
\renewcommand{\thealgorithm}{RCF}
\caption{\textbf{Randomized Candidate Filtering (RCF) Mechanism}.}
\label{alg:SB}
\end{algorithm}

We begin by showing the mechanism is truthful.
\begin{lemma} \label{lem:dsb_truthful}
    The \ref{alg:SB} Mechanism is EPIC-IR. 
\end{lemma}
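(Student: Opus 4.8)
The plan is to invoke Proposition~\ref{lem:pricing}: it suffices to show that, for every bidder $i$ and every fixed reports $\sigs_{-i},\vals_{-i}$ of the other bidders, the allocation probability $x_i$ depends on $i$'s report only through the number $\val_i(\sigs)$ and is non-decreasing in it. The first part is essentially by inspection of the \ref{alg:SB} Mechanism: for $j \neq i$, the lower-estimate $\low[\hat]{j}{i} = \inf_{o_i} \val[\hat]_j(o_i,\sigs[\hat]_{-i})$ is a function of $\sigs[\hat]_{-i}$ and $\val[\hat]_j$ only, hence independent of $i$'s report; and $c_i$ itself is defined in terms of the single quantity $f_r(\val[\hat]_i(\sigs[\hat]))$ compared (via $>_\pi$) against the values $f_r(\low[\hat]{j}{i})$. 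Also, for $j \neq i$, whether $c_j = 1$ compares $f_r(\val[\hat]_j(\sigs[\hat]))$ against $f_r(\low[\hat]{k}{j})$ for $k \neq j$; none of these depends on $i$'s signal or valuation function except possibly $\low[\hat]{i}{j}$ — but $x_i$ is $\E_{r,\pi}[c_i]/\eta$, so only $c_i$ matters, and $c_i$ reads $i$'s report solely through $w := \val[\hat]_i(\sigs[\hat])$. Thus $x_i = x_i(\sigs_{-i},\vals_{-i},w)$ is well-defined as a function of $w$.

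Next I would prove monotonicity of $x_i$ in $w$. Fix $\sigs_{-i},\vals_{-i}$, fix the randomness $r$ and the permutation $\pi$, and consider $w \le w'$. I claim the indicator $c_i$ (as a function of $w$) is non-decreasing, i.e.\ if $c_i = 1$ at $w$ then $c_i = 1$ at $w'$. Two observations drive this. First, $f_r$ is non-decreasing: if $w \le w'$ then $f_r(w) \le f_r(w')$, since $f_r$ is a step function, constant on each interval $[2^{r+k},2^{r+k+1})$ and jumping upward at the breakpoints. Second, the thresholds being compared against, namely $f_r(\low[\hat]{j}{i})$ for $j \neq i$, do \emph{not} depend on $w$ — crucially, $\low[\hat]{j}{i}$ is an infimum over $i$'s signal of $\val[\hat]_j$, which is a property of bidder $j$'s reported valuation function and of $\sigs[\hat]_{-i}$, and is therefore held fixed as we vary $i$'s report. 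Hence if $f_r(w) >_\pi f_r(\low[\hat]{j}{i})$ for all $j \neq i$, then since $f_r(w') \ge f_r(w)$ we also get $f_r(w') >_\pi f_r(\low[\hat]{j}{i})$ for all $j \neq i$: this uses that $>_\pi$ (Definition~\ref{def:lex-bigger}) is monotone under increasing the left-hand real value — if $a >_\pi b$ and $a' \ge a$ then $a' >_\pi b$, because either $a' > a \ge b$ giving $a' > b$, or $a' = a >_\pi b$ directly. Therefore $c_i(w) \le c_i(w')$ pointwise in $(r,\pi)$, and taking expectations, $x_i(\cdots,w) = \E_{r,\pi}[c_i(w)]/\eta \le \E_{r,\pi}[c_i(w')]/\eta = x_i(\cdots,w')$.

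Having established that $x_i$ is a function of $\sigs_{-i},\vals_{-i},\val_i(\sigs)$ alone and is non-decreasing in $\val_i(\sigs)$, Proposition~\ref{lem:pricing} immediately yields that the allocation rule is EPIC-IR implementable, with the payments given by \eqref{eq:pricing} — which is exactly what step~5 of the \ref{alg:SB} Mechanism charges. This completes the proof. I do not expect a genuine obstacle here; the only subtlety to state carefully is the one flagged above, namely that the comparison thresholds $f_r(\low[\hat]{j}{i})$ are invariant to $i$'s report because they are defined via an infimum over $i$'s own signal (so they already "pessimize" over what $i$ could report), together with the elementary monotonicity of the step function $f_r$ and of the lexicographic order $>_\pi$ in its first argument. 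One should also note in passing that ex-post IR follows from the payment formula \eqref{eq:pricing} together with $x_i \ge 0$ and monotonicity, as is standard and as already packaged into Proposition~\ref{lem:pricing}.
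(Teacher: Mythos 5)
Your proof is correct and follows essentially the same route as the paper's: fix $\sigs_{-i},\vals_{-i}$ and the randomness $(r,\pi)$, observe that the thresholds $f_r(\low[\hat]{j}{i})$ do not depend on $i$'s report, show $c_i$ is pointwise non-decreasing in $\val[\hat]_i(\sigs[\hat])$, take expectations, and invoke \Cref{lem:pricing}. Your write-up merely makes explicit a few details (monotonicity of $f_r$ and of $>_\pi$ in its left argument) that the paper leaves implicit.
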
 
\begin{proof}
    Fix bidder $i$ and reported signals and valuations $\sigs[\hat]_{-i}$, $\vals[\hat]_{-i}$ {of the other bidders}.  We show that for every choice of $r$ and $\pi$, $c_i$ is monotone in $\hat{v}_i(\sigs[\hat])$. This immediately implies that $x_i$ is monotone as well which implies the mechanism can be implemented in an EPIC-IR manner by \Cref{lem:pricing}.  Fix $r$ and $\pi$. If we increase $\val[\hat]_i(\sigs[\hat])$, then for every $j$, we can only get  $f_r(\val[\hat]_i(\sigs[\hat])) >_{\pi} f_r(\low[\hat]{j}{i})$ to be satisfied if it wasn't satisfied before. This is since the left hand side of the inequality increases while {the} right hand side is not affected. Therefore, $c_i$ is monotone in $\val[\hat]_i(\sigs[\hat])$ which proves the {lemma}.
\end{proof}

As the mechanism is truthful, from now on we assume bidders bid their true valuations and signals, and write $\sigs$ and $\vals$ instead of $\sigs[\hat]$ and $\vals[\hat]$.

We next show that the mechanism obtains near-optimal welfare.

\begin{lemma}
    The~\ref{alg:SB} Mechanism obtains an ${(\eta\cdot 2\ln 2)}$-approximation to the optimal welfare. \label{lem:dsb_approx} 
\end{lemma}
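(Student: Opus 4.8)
The plan is to show that for every realization of the randomness $(r,\pi)$, the social welfare collected by the mechanism's candidates is a good fraction of $\mathsf{OPT}$, and then take expectations. Let $i^\star = \argmax_i \val_i(\sigs)$ (with ties broken by $\pi$, say), so $\val_{i^\star}(\sigs) = \mathsf{OPT}$. First I would argue that $i^\star$ is \emph{almost always} a candidate: $c_{i^\star} = 1$ fails only if there is some $j \neq i^\star$ with $f_r(\low{i^\star}{j}) >_\pi f_r(\val_{i^\star}(\sigs))$, i.e. $j$'s lower-estimate of $i^\star$ rounds up to at least the same threshold as $i^\star$'s own value. Since $\low{i^\star}{j} \le \val_{i^\star}(\sigs) = \mathsf{OPT}$ always (the lower estimate is an infimum over $j$'s signal, and $i^\star$ has the max value — wait, more carefully: $\low{i^\star}{j} = \inf_{o_j} \val_{i^\star}(o_j, \sigs_{-j}) \le \val_{i^\star}(\sigs)$), the only way $f_r$ can map them to the same power is if $\low{i^\star}{j}$ and $\val_{i^\star}(\sigs)$ straddle no threshold, i.e. lie in the same dyadic-with-offset interval $[2^{r+k}, 2^{r+k+1})$.

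The key quantitative step: condition on $\pi$ and take the probability over $r \sim U[0,1)$. If $c_{i^\star} = 0$, then in particular $\val_{i^\star}(\sigs)$ lies in the same interval $[2^{r+k}, 2^{r+k+1})$ as \emph{some} value strictly below it that we are ``losing'' to — but the cleaner route is: whenever $c_{i^\star}=0$, the mechanism still allocates to \emph{some} candidate $i$ (for a fixed $(r,\pi)$ there is always at least one candidate, namely the lexicographically-largest rounded value among all bidders is a candidate against everyone), and any candidate $i$ satisfies $f_r(\val_i(\sigs)) \ge_\pi f_r(\low{j}{i})$ for all $j$, in particular $f_r(\val_i(\sigs)) \ge_\pi f_r(\low{i^\star}{i}) $… hmm, this bounds $\val_i$ below by $i^\star$'s estimate, not by $\mathsf{OPT}$. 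The right comparison is: the winning candidate $i$ beats $i^\star$'s rounded value in the sense that $f_r(\val_i(\sigs)) \ge f_r(\low{i^\star}{i})$, and also $i^\star$ being a non-candidate means $f_r(\low{i'}{i^\star}) \ge f_r(\mathsf{OPT})$ for some $i'$; combining, one shows the collected welfare is at least $f_r(\mathsf{OPT}) = 2^{r + \lfloor \log_2 \mathsf{OPT} \rfloor - \text{(something)}}$, i.e. at least $\mathsf{OPT}/2^{1-r'}$ for the appropriate fractional part. Averaging $\mathbb{E}_r[2^{r-1}] \cdot (\text{stuff}) = \int_0^1 2^{x-1}\,dx \cdot (\cdots) = \frac{1}{2\ln 2}$, giving expected welfare $\ge \mathsf{OPT}/(2\ln 2)$ before normalization, hence $\mathsf{OPT}/(\eta \cdot 2\ln 2)$ after dividing each $x_i$ by $\eta$.

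So the steps in order are: (1) fix $(r,\pi)$ and observe at least one candidate always exists; (2) show that whenever $i^\star$ is not the (a) candidate, the actual winning candidate's true value is at least $f_r(\val_{i^\star}(\sigs))$, using that $i^\star$'s exclusion forces some competitor's lower-estimate — which is itself $\le \mathsf{OPT}$ — into the top interval, and transitivity of the rounding; (3) conclude that for every $(r,\pi)$ the allocated bidder has value $\ge f_r(\mathsf{OPT}) \ge 2^{r-1}\mathsf{OPT}$; (4) take $\mathbb{E}_r[2^{r-1}] = \frac{1}{2\ln 2}$ and divide by $\eta$.

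\textbf{Main obstacle.} The delicate point is step (2)–(3): carefully tracking the $>_\pi$ lexicographic tie-breaking so that ``$i^\star$ is not a candidate'' genuinely hands off the allocation to a bidder whose \emph{true} value — not merely rounded value or lower-estimate — is comparable to $\mathsf{OPT}$. One must use that a lower-estimate $\low{j}{i}$ never exceeds the corresponding true value and that $f_r$ is monotone, chaining $\val_{\text{winner}}(\sigs) \ge f_r(\val_{\text{winner}}(\sigs)) \ge f_r(\low{i^\star}{\text{winner}}) $ against $f_r(\low{i'}{i^\star}) \ge_\pi f_r(\mathsf{OPT})$ — and arguing these two chains meet, which is exactly where the single crossing of thresholds by $r$ must be accounted for. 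Getting the constant to be exactly $2\ln 2$ (rather than a weaker $2$) is precisely the payoff of doing the $r$-averaging with the continuous $2^{r-1}$ factor instead of a worst-case bound.
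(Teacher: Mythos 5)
Your overall route is the same as the paper's---per-realization, exhibit a candidate whose true value is at least $f_r(\max_i \val_i(\sigs))$, then average the rounding loss to get the factor $2\ln 2$---and your parenthetical remark already contains the decisive fact. But as written, steps (2)--(3) do not close. By taking $i^\star=\argmax_i \val_i(\sigs)$ you force yourself into the ``hand-off'' argument, and the combination you sketch only shows that when $i^\star$ is not a candidate, the blocking bidder $i'$ (the one with $f_r(\low{i'}{i^\star})\ \not<_\pi\ f_r(\val_{i^\star}(\sigs))$) has a large \emph{true} value, via $\val_{i'}(\sigs)\geq \low{i'}{i^\star}$; it does not show that $i'$ is itself a candidate, so you cannot credit its value to $\sum_i c_i\val_i(\sigs)$ without iterating the argument. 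The clean fix---and exactly what the paper does---is to define the distinguished bidder as the one maximizing the \emph{rounded} value: let $i^\star$ satisfy $f_r(\val_{i^\star}(\sigs)) >_\pi f_r(\val_j(\sigs))$ for all $j$. Then $c_{i^\star}=1$ because $f_r(\val_j(\sigs))\geq f_r(\low{j}{i^\star})$ for every $j$ (lower-estimates never exceed true values and $f_r$ is monotone), and $\val_{i^\star}(\sigs)\geq f_r(\val_{i^\star}(\sigs))\geq f_r(\max_i\val_i(\sigs))$. This is your parenthetical observation pushed one step further, and it makes the hand-off (your stated ``main obstacle'') unnecessary. A cosmetic point: the mechanism does not pick ``a winning candidate'' per realization---it sets $x_i=\E_{r,\pi}[c_i]/\eta$---but since $\sum_i x_i\val_i(\sigs)=\E_{r,\pi}\bigl[\sum_i c_i\val_i(\sigs)\bigr]/\eta$, the existence of one candidate with value at least $f_r(\max_i\val_i(\sigs))$ is all that is needed.

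Separately, the pointwise inequality in your step (3), $f_r(\opt)\geq 2^{r-1}\opt$, is false: writing $\opt=2^{k+\alpha}$ with $\alpha\in[0,1)$, for $r>\alpha$ one has $f_r(\opt)=2^{r-1-\alpha}\opt<2^{r-1}\opt$. What is true, and all you need, is the statement in expectation: $\E_r[f_r(v)]=\int_0^\alpha 2^{k+r}\,\mathrm dr+\int_\alpha^1 2^{k+r-1}\,\mathrm dr = v/(2\ln 2)$ for every $v=2^{k+\alpha}$ (equivalently, $f_r(v)/v$ is distributed like $2^{-U}$ with $U$ uniform on $[0,1)$). Your computation $\E_r[2^{r-1}]=1/(2\ln 2)$ happens to produce the right constant, but it must be justified through this expectation identity rather than the pointwise bound. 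With the two repairs above, the argument coincides with the paper's proof.
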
 
\begin{proof}
   Fix valuations and signals $\vals, \sigs$ and consider 
   the random choice of $r$ and $\sigma$. Consider the bidder $i^\star$ such that $f_r(v_{i^\star}(\sigs)) >_\pi f_r(v_j(\sigs))$ for every $j$. For bidder $i^\star$ it must be the case that $c_{i^\star}=1$ as  $f_r(v_j(\sigs))\geq f_r({\low{j}{i^\star}})$ for every $j$. Moreover, we have that 
   \begin{eqnarray*}
       v_{i^\star}(\sigs) \ge f_r(v_{i^\star}(\sigs)) \ge  f_r(\max_i v_{i}(\sigs)).
   \end{eqnarray*}
   Therefore, for every $\vals,\sigs, r,\pi$,  we have that $\sum_i c_i \cdot v_i(\sigs)\geq v_{i^\star}(\sigs)\ge f_r(\max_i v_{i}(\sigs))$, and 
   \begin{eqnarray}
        \sum_i x_i(\v,\s) \cdot v_i(\s) & = & \sum_i \frac{\E_{r,\pi}[c_i]}{\eta} \cdot v_i(\s) \nonumber \\
            & = &  \frac{\E_{r,\pi}[\sum_i c_i\cdot v_i(\s)]}{\eta} \nonumber\\
            & \geq &  \E_{r}[f_r(\max_i v_{i}(\sigs))]/\eta.\label{eq:eff_lb}
   \end{eqnarray}
   To finish the proof, we show that for a positive real $v\in \mathbb R_+$, $\E_r[f_r(v)] \geq \frac{v}{2\ln 2}$.
   Let $v=2^{k+\alpha}$ for some $k\in \mathbb N$ and $\alpha\in [0,1]$, and let $r$ be the random number sampled in step (3) of \ref{alg:SB}. If $r$ is chosen such that $r\le \alpha$, then $v$ is rounded down to ${2}^{k+r}$. 
   On the converse, if $r > \alpha$, then $v$ is rounded down to ${2}^{k+r-1}$. Overall, 
   \begin{eqnarray}
       \E_r[f_r(v)]  & = &  \int_0^\alpha 2^{k+r}dr + \int_\alpha^1 2^{k+r-1}dr \nonumber\\
       & =&  \frac{2^{k+\alpha} - 2^k +  2^k-2^{k+\alpha-1}}{\ln 2} \nonumber \\
       & = &  \frac{2^{k+\alpha}}{2\ln 2}  \ =\  \frac{v}{2\ln 2}.\label{eq:round_v}
   \end{eqnarray}
   
   Combining Equations~\eqref{eq:eff_lb} and~\eqref{eq:round_v} gives the desired bound.
\end{proof}

\subsection[The~\ref{alg:SB} Mechanism is Feasible]{The~\ref{alg:SB} Mechanism is Feasible for $\eta=O(d)$}
In this section we show that the~\ref{alg:SB} mechanism is feasible when the valuations are $d$-\selfbounding\  when using a normalization factor $\eta=O(d)$. We first consider the setting where $d$ is known, and extend the result to the setting where $d$ is unknown in \Cref{sec:unknown-d}. In our proof, we use the following notation for convinience.

\begin{definition}
    For any $\alpha\ge 0,$ $\logdag(\alpha) = \max(0, \min(1, \log_2 \alpha))$.    
\end{definition}

The following property of $\logdag$ is used in our proofs.
\begin{lemma} \label{obs:logdag-prod}
    For any $\alpha,\beta\ge 0$, $\logdag(\alpha\cdot \beta)\le \logdag(\alpha)+\logdag(\beta).$ 
\end{lemma}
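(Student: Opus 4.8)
\textbf{Proof plan for \Cref{obs:logdag-prod}.}

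The plan is to argue directly from the definition $\logdag(\alpha) = \max(0, \min(1, \log_2 \alpha))$, which is the clamping of $\log_2$ to the interval $[0,1]$. I will write $g(x) = \max(0,\min(1,x))$ for the clamping function on $\mathbb{R}$, so that $\logdag(\alpha) = g(\log_2\alpha)$ for $\alpha > 0$; the boundary case $\alpha = 0$ (where $\log_2 0 = -\infty$) is handled by the convention $\logdag(0) = 0$, which is consistent since $g(-\infty) = 0$. The key observation is that $\log_2$ turns the product into a sum: $\log_2(\alpha\beta) = \log_2\alpha + \log_2\beta$. So the inequality to prove reduces to the statement that $g$ is \emph{subadditive} on $[0,\infty)$-valued arguments in the relevant range, i.e. $g(x+y) \le g(x) + g(y)$ whenever $g(y)\ge 0$, which always holds since $g$ is nonnegative.

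The main step is therefore the elementary claim: for all real $x,y$, $g(x+y)\le g(x)+g(y)$. First I would note $g$ is nondecreasing and $1$-Lipschitz (it is a composition/clamping of the identity), and $g(x)\ge 0$ always, with $g(x) \ge x$ for $x \le 1$ and more simply $g(x)\ge \min(1,x)$ for $x \ge 0$. To prove subadditivity, I would split on cases. If either $x\le 0$ or $y\le 0$, say $y\le 0$, then $g(y) = 0$ and $g(x+y)\le g(x)$ by monotonicity since $x+y\le x$, giving $g(x+y)\le g(x) = g(x)+g(y)$. If both $x,y > 0$, then $g(x)+g(y) = \min(1,x)+\min(1,y)$; if either summand equals $1$ we are done because $g(x+y)\le 1$; otherwise $g(x)+g(y) = x+y \ge g(x+y)$ since $g$ is dominated by the identity on positives (indeed $g(t)\le t$ for $t\ge 0$). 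This exhausts all cases.

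Finally I would assemble the pieces: for $\alpha,\beta > 0$,
\[
\logdag(\alpha\beta) = g(\log_2\alpha + \log_2\beta) \le g(\log_2\alpha) + g(\log_2\beta) = \logdag(\alpha) + \logdag(\beta),
\]
and if $\alpha = 0$ or $\beta = 0$ then $\logdag(\alpha\beta) = \logdag(0) = 0 \le \logdag(\alpha)+\logdag(\beta)$ since the right-hand side is nonnegative. There is no real obstacle here; the only thing to be careful about is the case analysis for the clamping function and the degenerate $\alpha\beta = 0$ case, both of which are routine. This lemma is purely a technical convenience for later manipulations bounding the expected number of candidates, so a short, clean case analysis is all that is needed.
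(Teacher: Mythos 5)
Your proof is correct, and it follows the same high-level route as the paper — reduce $\logdag(\alpha\beta)\le\logdag(\alpha)+\logdag(\beta)$ via $\log_2$ to subadditivity of the clamping function $g(x)=\max(0,\min(1,x))$ — but the way you establish that subadditivity differs, and your version is the more careful one. The paper argues $g(x+y)\le g(x)+g(y)$ by first invoking $\min(1,x+y)\le\min(1,x)+\min(1,y)$ and then subadditivity of $\max(0,\cdot)$; the first of these inequalities is false for mixed signs (take $x=2$, $y=-1/2$: the left side is $1$, the right side is $1/2$), so the paper's intermediate bound $\logdag(\alpha\beta)\le\max(0,\min(1,\log_2\alpha)+\min(1,\log_2\beta))$ can fail (e.g.\ $\alpha=8$, $\beta=1/2$ gives the claim $1\le 0$), even though the lemma itself and the final conclusion are true. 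Your case split — if one argument is nonpositive, use $g\ge 0$ together with monotonicity of $g$; if both are positive, use $g\le 1$ when some clamped term equals $1$, and $g(t)\le t$ for $t\ge 0$ otherwise — handles exactly the mixed-sign situation that trips up the paper's chain, and your explicit treatment of the degenerate case $\alpha\beta=0$ matches the intended convention $\logdag(0)=0$. In short, your more elementary case analysis buys a fully airtight proof of the same statement, whereas the paper's shorter derivation would need precisely the sign analysis you performed to be repaired.
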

\begin{proof}
For any $x,y\in \mathbb R$, we have $\max(0, x+y) \leq \max(0, x) + \max(0, y)$ and $\min(1, x+y) \leq \min(1, x) + \min(1, y)$.
Therefore,
\begin{eqnarray*}
    \logdag(\alpha\cdot \beta) & =  &\max(0, \min(1, \log_2 \alpha\cdot \beta)) \\ 
    & = & \max(0, \min(1, \log_2 \alpha+ \log_2 \beta))  \\
    &\le & \max(0, \min(1, \log_2 \alpha) +\min(1, \log_2 \beta )) \\
    & \le & \max(0, \min(1, \log_2 \alpha)) + \max(0, \min(1, \log_2 \beta))\\
    & = & \logdag\alpha + \logdag\beta.
\end{eqnarray*}.
\end{proof}

The following lemma shows it is enough to normalize the allocation by a factor $\eta=O(d)$ in order to maintain feasibility.

\begin{lemma}\label{lem:dsb_feasibility}
    For every single-item auction with $d$-\selfbounding valuations, the~\ref{alg:SB} Mechanism with {$\eta = 2(d+1)$} is feasible. 
\end{lemma}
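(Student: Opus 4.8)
The goal is to show $\sum_i x_i = \sum_i \E_{r,\pi}[c_i]/\eta \le 1$ for $\eta = 2(d+1)$, i.e. that the expected number of candidates $\E_{r,\pi}[\sum_i c_i]$ is at most $2(d+1)$. Fix the true signals $\sigs$ and valuations $\vals$; everything below is for this fixed instance, and the expectation is over the randomized discretization point $r \sim U[0,1)$ and the random tie-breaking permutation $\pi$. For a bidder $i$ to be a candidate we need $f_r(v_i(\sigs)) >_\pi f_r(\low{j}{i})$ for every $j \ne i$. The first thing I would do is isolate, for each pair $(i,j)$, the ``bad event'' that bidder $j$'s lower-estimate for bidder $i$ is rounded to the \emph{same} discretization cell as $v_i(\sigs)$ — because if $\low{j}{i}$ is rounded strictly below $f_r(v_i(\sigs))$ then $j$ poses no threat to $i$, and if it is rounded to a strictly higher cell then $c_i = 0$ outright. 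So the only way $i$ can be a candidate while $j$ is "competitive" is the tie event $f_r(v_i(\sigs)) = f_r(\low{j}{i})$, and in that case the random permutation must rank $i$ above $j$.

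**Key steps.** First, for each ordered pair $(i,j)$ with $j \ne i$, let $E_{ij}(r)$ be the event that $f_r(\low{j}{i}) = f_r(v_i(\sigs))$; since $\low{j}{i} \le v_i(\sigs)$ (as $\low{j}{i}$ is an infimum and... actually no — $v_j$ vs $v_i$ — let me restate: we need $\low{j}{i} \le v_i(\sigs)$ is NOT automatic, but if $\low{j}{i} > v_i(\sigs)$ then trivially $f_r(\low{j}{i}) \ge f_r(v_i(\sigs))$ and equality can still fail; the useful regime is $\low{j}{i} \le v_i(\sigs)$). Writing $v_i(\sigs) = 2^{k+\alpha}$, the event $E_{ij}(r)$ happens exactly when $r$ lands in an interval whose length is $\logdag\!\big(v_i(\sigs)/\low{j}{i}\big)$ (this is precisely why the $\logdag$ clamp to $[0,1]$ and Lemma~\ref{obs:logdag-prod} were set up). So $\Pr_r[E_{ij}(r)] = \logdag(v_i(\sigs)/\low{j}{i})$. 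Second, condition on $r$ and let $T_i(r) = \{ j \ne i : E_{ij}(r) \text{ holds}\}$ be the set of "tied threats" to $i$; conditioned on $r$, bidder $i$ is a candidate only if $\pi$ ranks $i$ before every $j \in T_i(r)$, which has probability at most $1/(|T_i(r)|+1)$. Hence $\E_\pi[\sum_i c_i \mid r] \le \sum_i \frac{1}{|T_i(r)|+1}$. Third — and this is the crux — I want to bound $\E_r\big[\sum_i \frac{1}{|T_i(r)|+1}\big]$ by $2(d+1)$. Using convexity / the bound $\frac{1}{m+1} \le 1 - \frac{m}{2}$ for $m \in \{0,1\}$ is too weak when $|T_i(r)|$ is large; instead I expect the right move is to relate $\E_r[|T_i(r)|]$ to the self-bounding quantity. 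By linearity, $\E_r[|T_i(r)|] = \sum_{j\ne i} \logdag(v_i(\sigs)/\low{j}{i})$, and by Lemma~\ref{obs:logdag-prod} and the fact that $\logdag(x) \le \log_2^+ x \le$ (something controlled by $v_i - \low{j}{i}$)... hmm, the self-bounding hypothesis is stated with \emph{differences} $v(\sigs) - \low{}{i}$, not log-ratios, so I will need the elementary inequality $\logdag(a/b) \le (a-b)/(b \ln 2)$ or, cleaner, $\log_2(a/b) \le (a-b)/(b\ln 2)$, to pass from ratios to differences — but note the differences $v_j(\sigs) - \low{j}{i}$ refer to $v_j$ evaluated at the true profile versus its own infimum over $s_i$, which is exactly what $d$-self-boundedness of $v_j$ controls when summed over $i$. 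So I'd sum over both $i$ and $j$, swap the order, and invoke $d$-self-bounding for each $v_j$ to get $\sum_{i}\sum_{j\ne i}\logdag(\cdot) \le$ (const)$\cdot d \cdot (\text{normalization})$, then combine with a careful accounting (each candidate contributes $1$ to the count plus its "threat mass") to land at $2(d+1)$.

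**Main obstacle.** The genuinely delicate part is the third step: turning the pointwise-in-$r$ bound $\sum_i \frac{1}{|T_i(r)|+1}$ into a clean $O(d)$ bound after integrating over $r$. The naive union bound gives $\sum_i(1 - \Pr_r[\exists j: \text{bad}])$-type expressions that don't obviously telescope, and the self-bounding inequality is about $v_j$'s own drops, not about how many $i$'s each $j$ threatens. I expect the actual argument charges each "tie" to the self-bounding budget of the \emph{threatening} bidder $j$: roughly, $j$ contributes to $T_i(r)$ only when $v_i(\sigs)$ lies in a cell just above $\low{j}{i}$, and the $\logdag$ of that ratio is bounded (up to $1/\ln 2$) by the normalized drop $(v_j(\sigs) - \low{j}{i})/v_j(\sigs)$, whose sum over $i$ is $\le d$ by $d$-self-bounding of $v_j$ — except one must be careful that $v_i(\sigs)$ and $v_j(\sigs)$ are different bidders' values, so some comparison like $v_i(\sigs) \approx v_j(\sigs)$ on the tie event (they're in nearby cells) is needed to make the substitution legitimate, and tracking the constants through this swap is where the factor $2(d+1)$ (rather than, say, $d+1$) is spent. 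I would organize the write-up so that this charging argument is the one displayed lemma-internal computation, with the $\logdag$ product inequality (Lemma~\ref{obs:logdag-prod}) and the ratio-to-difference inequality cited as the two elementary inputs, and the $d$-self-bounding definition applied exactly once per threatening bidder.
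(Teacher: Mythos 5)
Your opening decomposition is legitimate and genuinely different from the paper's: you condition on $r$ first, let $T_i(r)$ be the set of bidders whose rounded lower-estimates tie with $f_r(\val_i(\sigs))$, and use that a uniformly random $\pi$ lets $i$ survive all ties with probability at most $1/(|T_i(r)|+1)$ (the paper instead conditions on the permutation-rank threshold $t(\pi)$ over the sorted lower estimates and then integrates over $r$; see \Cref{lem:proba-candidate}). However, the proposal contains a concrete error and stops exactly at the step that carries all the difficulty. The error: by \Cref{lem:rounded-threshold}, $\logdag(\val_i(\sigs)/\low{j}{i})$ is the probability of \emph{strict} separation $f_r(\val_i(\sigs)) > f_r(\low{j}{i})$; the tie event $E_{ij}(r)$ you define has probability $1-\logdag(\val_i(\sigs)/\low{j}{i})$ when $\low{j}{i}\le\val_i(\sigs)$, i.e., the complement of what you assert. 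This matters because your charging plan treats the tie mass as the small, self-bounding-controlled quantity, whereas ties are in fact the \emph{likely} event precisely when $\low{j}{i}$ is close to $\val_i(\sigs)$.

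The missing step: you never establish $\E_r\bigl[\sum_i 1/(|T_i(r)|+1)\bigr] = O(d)$, and the route you sketch does not get there. Bounding $\E_r[|T_i(r)|]$ cannot upper-bound $\E_r[1/(|T_i(r)|+1)]$ (Jensen gives the reverse direction), and the $d$-\selfbounding budget only controls $\sum_i(\val_j(\sigs)-\low{j}{i})/\val_j(\sigs)$, i.e., drops of $j$'s own value; it says nothing in the regime where $\low{j}{i}=\val_j(\sigs)\approx\val_i(\sigs)$, which is exactly the dangerous case of \Cref{ex:tiebreaking}: many bidders with values within a factor $2$ of one another whose lower estimates are (almost) unreduced, so ties are frequent and no self-bounding budget is consumed. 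The paper handles this case with a separate, purely combinatorial telescoping computation over the top-$k$ bidders (the $B_i$ terms, \Cref{lem:feasibilityB}, giving $\sum_i B_i\le 1$), and handles the interaction between the random permutation and the sorted lower estimates quantitatively via the thresholds $\tau_{i,\ell}$, the identity $\Pr_\pi[t(\pi)=\ell]=\frac{1}{\ell(\ell+1)}$, and the rearrangement inequality (\Cref{lem:proba-candidate}); only the $A_i$ terms are charged to self-bounding through \Cref{lem:dsb}. Your sketch contains nothing that substitutes for the $B_i$ analysis or for the rank-threshold computation, so as written it does not yield the claimed bound for $\eta=2(d+1)$, nor any $O(d)$ bound.
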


\begin{proof}
    We show that the expected number of candidates $\sum_{i=1}^n\E_{r,\pi}[c_i]$ is at most  ${2(d+1)}$. By the definition of $x_i$ this implies that $\sum_i x_i \le 1$ for $\eta = 2(d+1)$, thus proving feasibility. 
    
    First, we rename the bidders such that $\val_1(\sigs) \ge \val_2(\sigs) \ge \cdots \ge \val_n(\sigs)$. We set $k$ to be the number of bidders $i$ whose value are larger than $\val_1(\sigs)/2$, that is $k = \max\{i\,|\,\val_i(\sigs) > \val_1(\sigs)/2\}$. This parameter distinguishes the analysis for large valued bidders (numbered $1$ through $k$), and small valued bidders (numbered $k+1$ to $n$). By \Cref{lem:proba-candidate} we get the following bound on the probability of being a candidate,
    $$
    \forall i\in[n],\qquad
    \E_{r,\pi}[c_i] \leq
    \frac{1}{i(i+1)}+
    \frac{\logdag(2\val_i(\sigs)/\low{1}{i})}{k+1}
    +\sum_{j\in[k]}\frac{\logdag(\val_i(\sigs)/\low{j}{i})}{j(j+1)}.
    $$
    Next, using \Cref{obs:logdag-prod}, we write
    \begin{equation}\label{eq:ci_decomposition}
    \begin{array}{rcrcr}
    \displaystyle\E_{r,\pi}[c_i] \leq \frac{1}{i(i+1)}
    &+&
    \cellcolor{black!10}
    \displaystyle\frac{\logdag(\val_1(\sigs)/\low{1}{i})}{k+1}
    &+&
    \cellcolor{black!10}
    \displaystyle\frac{\logdag(2\val_i(\sigs)/\val_1(\sigs))}{k+1}
    \\
    &+&
    \cellcolor{black!10}
    \displaystyle\sum_{j\in[k]}\frac{\logdag(\val_j(\sigs)/\low{j}{i})}{j(j+1)}
    &+&
    \cellcolor{black!10}
    \displaystyle\sum_{j\in[k]}\frac{\logdag(\val_i(\sigs)/\val_j(\sigs))}{j(j+1)}
    \\[-.2cm]
    &&
    \underbrace{\hspace{4.1cm}}_{A_i}
    &&
    \underbrace{\hspace{3.5cm}}_{B_i}
    \end{array}
    \end{equation}
    Recall that we want to show that $$\sum_i\E_{r,\pi}[c_i]\ \le\ \sum_{i=1}^n\frac{1}{i(i+1)}+\sum_i A_i+\sum_i B_i \ \le\ 2(d+1).$$ First, observe that $\sum_{i=1}^n\frac{1}{i(i+1)} = \frac{n}{n+1} \leq 1$. To conclude the proof, we use \Cref{lem:feasibilityA} which shows that $\sum_i A_i\leq 2d$, and \Cref{lem:feasibilityB} which shows that $\sum_i B_i \leq 1$.
\end{proof}

In our proofs, we use the following technical lemma.

\begin{lemma}
\label{lem:rounded-threshold}
For any $a, b \in \mathbb R^+$, we have that
 \[\Pr_r[f_r(a) > f_r(b)] = \max(0, \min(1, \log_2(a/b))) = \logdag(a/b).\]
\end{lemma}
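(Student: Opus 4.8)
The plan is to reduce the event $\{f_r(a) > f_r(b)\}$ to a comparison of two floor functions, and then evaluate the resulting probability by a one-line integral identity.

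First I would rewrite $f_r$ in closed form. By definition $f_r(w) = 2^{r+k}$ where $k$ is the unique integer with $r+k \le \log_2 w < r+k+1$; equivalently $k = \lfloor \log_2 w - r\rfloor$, so $f_r(w) = 2^{\,r + \lfloor \log_2 w - r\rfloor}$. Writing $u = \log_2 a$ and $v = \log_2 b$, the inequality $f_r(a) > f_r(b)$ is therefore equivalent to $\lfloor u - r\rfloor > \lfloor v - r\rfloor$, and the claim becomes $\Pr_{r\sim U[0,1)}[\lfloor u-r\rfloor > \lfloor v-r\rfloor] = \logdag(2^{u-v})$. I would then split on the value of $u - v = \log_2(a/b)$. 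If $u \le v$, then $u - r \le v - r$ for every $r$, so the floors never strictly separate and the probability is $0 = \logdag(a/b)$. If $u - v \ge 1$, the half-open interval $(v-r,\,u-r]$ has length at least $1$ and hence contains an integer for every $r$, so $\lfloor u-r\rfloor \ge \lfloor v-r\rfloor + 1$ always and the probability is $1 = \logdag(a/b)$. The interesting case is $0 < u - v < 1$.

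In that case, for every $r$ we have $0 < (u-r) - (v-r) < 1$, and for any reals $x < y$ with $y - x < 1$ one has $\lfloor y\rfloor - \lfloor x\rfloor \in \{0,1\}$; hence $c(r) := \lfloor u-r\rfloor - \lfloor v-r\rfloor$ is a genuine $\{0,1\}$-indicator and $\Pr_r[f_r(a) > f_r(b)] = \E_r[c(r)] = \E_r[\lfloor u-r\rfloor] - \E_r[\lfloor v-r\rfloor]$. I would finish by computing $\E_{r\sim U[0,1)}[\lfloor u-r\rfloor] = \int_{u-1}^{u} \lfloor t\rfloor\,dt$ (substituting $t = u-r$) together with the identity $\int_s^{s+1}\lfloor t\rfloor\,dt = s$ valid for every real $s$ — proved by writing $s = m + \theta$ with $m \in \mathbb Z$, $\theta \in [0,1)$ and splitting the integral at $m+1$. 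This gives $\E_r[\lfloor u-r\rfloor] = u-1$ and $\E_r[\lfloor v-r\rfloor] = v-1$, so the probability equals $u - v = \log_2(a/b)$. Combining the three cases yields $\max(0,\min(1,\log_2(a/b))) = \logdag(a/b)$.

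I do not expect a serious obstacle here: the content is entirely elementary. The only points requiring a little care are confirming that $c(r)$ is a $0/1$ quantity for \emph{all} $r$ rather than almost all of them (so that the probability really is its expectation), and checking that the boundary values $u - v \in \{0,1\}$ are consistently handled by whichever way the case split is drawn. The single identity doing the real work is $\int_s^{s+1}\lfloor t\rfloor\,dt = s$, and once that is in hand the rest is bookkeeping.
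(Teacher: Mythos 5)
Your proof is correct, and its main computation runs along a different track than the paper's. The paper also splits on whether $\log_2(a/b)$ is $\leq 0$, at least $1$, or strictly in between, but in the middle regime it subdivides further according to whether $a$ and $b$ sit in the same or in adjacent dyadic intervals ($a=2^{i+\alpha}, b=2^{i+\beta}$ versus $a=2^{i+\alpha}, b=2^{i-1+\beta}$), explicitly identifies the favorable set of $r$ in each subcase (an interval $(\beta,\alpha]$, respectively the union $[0,\alpha]\cup(\beta,1)$), and reads off its measure as $\log_2(a/b)$. You avoid that geometric subcase split entirely: writing $f_r(w)=2^{\,r+\lfloor \log_2 w - r\rfloor}$ reduces the event to $\lfloor u-r\rfloor>\lfloor v-r\rfloor$ with $u=\log_2 a$, $v=\log_2 b$, and in the regime $0<u-v<1$ the indicator is exactly the difference $\lfloor u-r\rfloor-\lfloor v-r\rfloor$, so linearity of expectation plus the identity $\int_s^{s+1}\lfloor t\rfloor\,\mathrm{d}t=s$ gives $\E_r[c(r)]=(u-1)-(v-1)=\log_2(a/b)$ in one stroke. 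Your checks that the difference of floors is genuinely $\{0,1\}$-valued and that the boundary cases $u-v\in\{0,1\}$ are covered are exactly the points that need care, and they hold. The paper's argument has the advantage of making visible which draws of $r$ are favorable (which is also the picture behind its use elsewhere), while yours is slightly slicker and dispenses with the two-subcase bookkeeping; both are elementary and equally rigorous.
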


\begin{proof}
    First, note that if $a \leq b$, then $\Pr[f_r(a) > f_r(b)]=0$ and $\log_2(a/b) \leq 0$; and if $a\ge 2 b$, then $\Pr[f_r(a) > f_r(b)]=1$ and $\log_2(a/b) \geq 1$. Thus, we consider two cases:
    \begin{itemize}
        \item $a=2^{i+\alpha}$, $b=2^{i+\beta}$ for $\beta < \alpha < 1$. For this case, $\Pr[f_r(a) > f_r(b)]$ if $r\in (\beta,\alpha]$. This happens with probability $\alpha-\beta = \log_2 a - \log_2 b = \log_2(a/b)$. 
        \item $a=2^{i+\alpha}$, $b=2^{i-1+\beta}$, $\alpha \le \beta  < 1$. In this case, $f_r(a) > f_r(b)$ if (a) $r\le \alpha$, which implies  $f_r(a) = 2^{i+r} > 2^{i-1+r}=f_r(b)$, or (b) $r>\beta,$ which implies $f_r(a) = 2^{i-1+r} > 2^{i-2+r}=f_r(b)$.  These events are disjoint, and happen with probability $\alpha+1-\beta=\log_2 a - \log_2 b= \log_2(a/b)$.
    \end{itemize}
\end{proof}

{The following lemma establishes a useful property of $d$-\selfbounding functions.
Namely, that by the random discretization the expected number of bidders who can decrease some bidder $i$'s value to a lower discretization point is $O(d)$. 
Indeed, the left hand side of Equation~\eqref{eq:dsb-reduction} is the expected number of bidders who can decrease the discretized value at a signal profile to the next (lower) power of $2$. }

\begin{lemma}\label{lem:dsb}
{For any $d$-\selfbounding function $\val$, it holds that}
\begin{eqnarray}
\sum_{i=1}^n
\logdag\left(\frac{\val(\sigs)}{\low{}{i}}\right) \leq 2d. \label{eq:dsb-reduction}   
\end{eqnarray}
\end{lemma}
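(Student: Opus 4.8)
The plan is to relate the left-hand side, which sums $\logdag(\val(\sigs)/\low{}{i})$ — an expression that lives in $[0,1]$ and is a ``clipped logarithm'' — to the linear quantity $\sum_i (\val(\sigs) - \low{}{i})$, which the $d$-\selfbounding hypothesis bounds by $d\cdot\val(\sigs)$. The bridge is an elementary scalar inequality: for any $a \geq b \geq 0$ with $a > 0$, one has $\logdag(a/b) \leq 2(a-b)/a$. Indeed, write $\logdag(a/b) = \max(0,\min(1,\log_2(a/b)))$. If $a/b \geq 2$ then the left side equals $1$ while $(a-b)/a \geq 1/2$, so the inequality holds; if $1 \leq a/b < 2$, then $\logdag(a/b) = \log_2(a/b)$, and using $\ln x \leq x-1$ we get $\log_2(a/b) = -\log_2(b/a) = \tfrac{1}{\ln 2}\ln(a/b) \leq \tfrac{1}{\ln 2}(1 - b/a) = \tfrac{1}{\ln 2}\cdot\tfrac{a-b}{a} \leq 2\cdot\tfrac{a-b}{a}$ since $1/\ln 2 < 2$. (If $b = 0$ the claim is vacuous unless $a=0$, in which case both sides are $0$; one should note $\low{}{i} \le \val(\sigs)$ always holds by definition of the infimum, so we are always in the regime $a \ge b$.)

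Applying this inequality coordinate-wise with $a = \val(\sigs)$ and $b = \low{}{i}$ gives
\[
\sum_{i=1}^n \logdag\!\left(\frac{\val(\sigs)}{\low{}{i}}\right) \;\leq\; \sum_{i=1}^n \frac{2\,(\val(\sigs) - \low{}{i})}{\val(\sigs)} \;=\; \frac{2}{\val(\sigs)}\sum_{i=1}^n (\val(\sigs) - \low{}{i}).
\]
By \Cref{def:self-bounding-valuation}, since $\val$ is $d$-\selfbounding, $\sum_{i=1}^n (\val(\sigs) - \low{}{i}) \leq d\cdot\val(\sigs)$, and substituting this in yields the bound $2d$, completing the proof. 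One edge case to dispatch first: if $\val(\sigs) = 0$, then $\low{}{i} = 0$ for every $i$ as well, so every term $\logdag(\val(\sigs)/\low{}{i})$ is $\logdag$ of an indeterminate ratio — but in that degenerate case $\val(\sigs) - \low{}{i} = 0$ and the whole statement is trivial (interpret $\logdag(0/0)$ as $0$, or simply note the inequality $\sum_i(\val(\sigs)-\low{}{i}) \le d\val(\sigs)$ forces nothing and one can treat the empty/zero case separately).

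The only real content is the scalar inequality $\logdag(a/b) \leq 2(a-b)/a$, and the main thing to be careful about is the constant: we need a factor that works both in the saturated regime (ratio $\geq 2$, where $\logdag = 1$ and $(a-b)/a$ can be as small as $1/2$, forcing the constant to be at least $2$) and in the logarithmic regime (where the binding constraint is $1/\ln 2 \approx 1.443 < 2$). So $2$ is exactly the right constant, which is why the statement gives $2d$ rather than something smaller. No deeper obstacle is expected; this is a short lemma whose proof is essentially this one convexity-type estimate plus the definition of $d$-\selfbounding.
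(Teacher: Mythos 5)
Your overall route is the same as the paper's: reduce the lemma to the scalar bound $\logdag(a/b)\le 2(a-b)/a$ (with $a=\val(\sigs)$, $b=\low{}{i}$) and then sum, invoking the $d$-\selfbounding condition $\sum_{i=1}^n(\val(\sigs)-\low{}{i})\le d\cdot\val(\sigs)$. The saturated case $a\ge 2b$ and the degenerate cases are handled fine. However, your justification of the scalar bound in the regime $1\le a/b<2$ is wrong: applying $\ln x\le x-1$ with $x=b/a$ gives $\ln(b/a)\le b/a-1$, i.e. $\ln(a/b)\ge 1-b/a$, which is the \emph{opposite} direction of the inequality $\ln(a/b)\le 1-b/a$ that you use. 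That inequality is simply false; at $a/b=2$ it would say $\ln 2\le 1/2$. (Relatedly, your closing remark that the binding constant in the logarithmic regime is $1/\ln 2\approx 1.443$ is also mistaken: the ratio $\log_2(a/b)\,\big/\,\tfrac{a-b}{a}$ tends to $2$ as $a/b\to 2$, so the constant $2$ is tight in that regime too, not just in the saturated one.)

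The claimed scalar inequality is nevertheless true, and the correct argument is the convexity/chord estimate, which is exactly what the paper does: set $y=(a-b)/a\in[0,1/2]$ in this regime and note that $\phi(y)=-\log_2(1-y)$ is convex with $\phi(0)=0$ and $\phi(1/2)=1$, so it lies below its chord, giving $\log_2(a/b)=\phi(y)\le 2y = 2(a-b)/a$. (Equivalently, check directly that $g(t)=2(t-1)/t-\log_2 t\ge 0$ on $[1,2]$, with equality at both endpoints.) With this one step repaired, your proof coincides with the paper's: it bounds each term by $2\min\bigl(1-\low{}{i}/\val(\sigs),\,1/2\bigr)\le 2\,(\val(\sigs)-\low{}{i})/\val(\sigs)$ and concludes via the $d$-\selfbounding property.
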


\begin{proof}
We define the function $\phi(x) = -\log_2(1-x)$ and we write
$$
\sum_{i=1}^n
\logdag\left(\frac{\val(\sigs)}{\low{}{i}}\right)
= \sum_{i=1}^n \phi(y_i)
\qquad\text{where }y_i := \min\left(1-\frac{\low{}{i}}{\val(\sigs)}, \frac{1}{2}\right)
$$
Because $\phi$ is convex, it lies below its chord between $\phi(0) = 0$ and $\phi(1/2) = 1$, thus
$$
\forall y_i\in[0,1/2],\qquad \phi(y_i) \leq \frac{y_i}{1/2} = 2 y_i.
$$
Using the $d$-\selfbounding property to derive the second inequality, we have that
$$\sum_{i=1}^n y_i \leq \sum_{i=1}^n\frac{\val(\sigs)-\low{}{i}}{\val(\sigs)} \leq d.$$
Therefore, summing over all $i$ we conclude that $$\sum_{i=1}^n
\logdag\left(\frac{\val(\sigs)}{\low{}{i}}\right) \ = \ \sum_{i=1}^n \phi(y_i)\ \leq\ \sum_{i=1}^n 2y_i\ \leq\ 2d.$$
\end{proof}

We first bound the $A_i$ terms.
\begin{lemma}\label{lem:feasibilityA}
    Given an instance with $d$-\selfbounding valuations, we have that
    $$\sum_{i=1}^n A_i \leq 2d,$$
    where $A_i$'s are defined in the proof of \Cref{lem:dsb_feasibility}.
\end{lemma}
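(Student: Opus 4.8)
The plan is to exchange the order of summation in $\sum_i A_i$ and then invoke \Cref{lem:dsb} once for each ``decreasing'' bidder $j$. Recall from the decomposition \eqref{eq:ci_decomposition} in the proof of \Cref{lem:dsb_feasibility} that
\[
A_i \;=\; \frac{\logdag(\val_1(\sigs)/\low{1}{i})}{k+1} \;+\; \sum_{j\in[k]}\frac{\logdag(\val_j(\sigs)/\low{j}{i})}{j(j+1)}.
\]
Summing over $i\in[n]$ and swapping the two summations, I would write
\[
\sum_{i=1}^n A_i \;=\; \frac{1}{k+1}\sum_{i=1}^n \logdag\!\left(\frac{\val_1(\sigs)}{\low{1}{i}}\right) \;+\; \sum_{j=1}^k \frac{1}{j(j+1)}\sum_{i=1}^n \logdag\!\left(\frac{\val_j(\sigs)}{\low{j}{i}}\right).
\]

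The key observation is that for each fixed $j$, the inner sum $\sum_{i=1}^n \logdag(\val_j(\sigs)/\low{j}{i})$ is precisely the left-hand side of \eqref{eq:dsb-reduction} applied to the valuation function $\val_j$ — indeed $\low{j}{i} = \inf_{o_i}\val_j(o_i,\sigs_{-i})$ by definition, so this is exactly the ``lowest estimate'' of $\val_j$ with respect to $i$. Since each $\val_j$ is $d$-\selfbounding, \Cref{lem:dsb} gives $\sum_{i=1}^n \logdag(\val_j(\sigs)/\low{j}{i}) \le 2d$ for every $j$, including $j=1$. Substituting this bound,
\[
\sum_{i=1}^n A_i \;\le\; \frac{2d}{k+1} \;+\; 2d\sum_{j=1}^k \frac{1}{j(j+1)}.
\]
Finally I would apply the telescoping identity $\sum_{j=1}^k \frac{1}{j(j+1)} = \sum_{j=1}^k\bigl(\tfrac1j - \tfrac1{j+1}\bigr) = \frac{k}{k+1}$, so the parenthesized weight becomes $\frac{1}{k+1} + \frac{k}{k+1} = 1$, yielding $\sum_i A_i \le 2d$ as claimed.

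There is no substantial obstacle once the decomposition \eqref{eq:ci_decomposition} is in place: the whole argument is the recognition that each inner sum is governed by \Cref{lem:dsb}, together with the fact that the coefficients $\{1/(k+1)\}\cup\{1/(j(j+1))\}_{j\in[k]}$ sum to exactly $1$ and that the bound $2d$ from \Cref{lem:dsb} is uniform in $j$ (so the weighted combination still costs only $2d$). The one point I would flag explicitly is that bidder $1$ is accounted for twice — once through the standalone $\frac{1}{k+1}$ term and once inside the $\sum_{j\in[k]}$ term — which is harmless for an upper bound; everything else is a routine telescoping computation.
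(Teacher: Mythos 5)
Your proof is correct and follows essentially the same route as the paper's: the same swap of summations in the decomposition from Equation~\eqref{eq:ci_decomposition}, the same application of \Cref{lem:dsb} to each valuation $\val_j$ (including $j=1$), and the same telescoping of $\sum_{j=1}^k \frac{1}{j(j+1)} = \frac{k}{k+1}$ so that the weights total exactly $1$. No gaps; your explicit note that the $j=1$ term appears twice (harmlessly, since we only need an upper bound) is a fair observation but changes nothing.
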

\begin{proof}
    Recall definition from~\Cref{eq:ci_decomposition}
    \[A_i = \frac{\logdag(\val_1(\sigs)/\low{1}{i})}{k+1} + \sum_{j\in[k]}\frac{\logdag(\val_j(\sigs)/\low{j}{i})}{j(j+1)}.\] 
    Moreover, by \Cref{lem:dsb} we have $\sum_{i=1}^n\logdag(\val_j(\sigs)/\low{j}{i})\le 2d$ for all bidders $j$. Hence, summing $A_i$ over all $i$ and swapping the summation of $i$ and $j$ we get,
    \begin{align*}
        \sum_{i=1}^n A_i & = \sum_{i=1}^n\frac{\logdag(\val_1(\sigs)/\low{1}{i})}{k+1} + \sum_{j\in[k]}\sum_{i=1}^n\frac{\logdag(\val_j(\sigs)/\low{j}{i})}{j(j+1)} \\
        & \le \frac{2d}{k+1} + \sum_{j\in[k]}\frac{2d}{j(j+1)}\\
        & \le \frac{2d}{k+1} + 2d\cdot\frac{k}{k+1} \\
        & \le  2d.
    \end{align*}
\end{proof}

Next, we bound the $B_i$ terms.
\begin{lemma}\label{lem:feasibilityB}
    Given an instance, we have that
    $$\sum_{i=1}^n B_i \leq 1,$$
    where $B_i$'s are defined in the proof of \Cref{lem:dsb_feasibility}.
\end{lemma}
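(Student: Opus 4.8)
The plan is to reduce the claim to the ``large-value'' bidders and then to a short algebraic inequality. Recall the bidders are relabeled so that $\val_1(\sigs)\ge\val_2(\sigs)\ge\cdots\ge\val_n(\sigs)$, that $k=\max\{i:\val_i(\sigs)>\val_1(\sigs)/2\}$, and that
\[B_i=\frac{\logdag(2\val_i(\sigs)/\val_1(\sigs))}{k+1}+\sum_{j\in[k]}\frac{\logdag(\val_i(\sigs)/\val_j(\sigs))}{j(j+1)}.\]
(If $\val_1(\sigs)=0$ then all values vanish and the statement is trivial, so assume $\val_1(\sigs)>0$.) First I would observe that $B_i=0$ for every $i>k$: the first summand is zero because $2\val_i(\sigs)/\val_1(\sigs)\le 1$ and $\logdag$ vanishes on $[0,1]$, and every term of the inner sum is zero because for $j\le k<i$ we have $\val_j(\sigs)\ge\val_i(\sigs)$, hence $\logdag(\val_i(\sigs)/\val_j(\sigs))=0$. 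Thus $\sum_{i=1}^n B_i=\sum_{i=1}^k B_i$.

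The key observation for the remaining indices is that $\logdag$ \emph{linearizes} on the relevant range. For $i,j\le k$ we have $\val_1(\sigs)/2<\val_i(\sigs)\le\val_1(\sigs)$, so setting $u_i:=\log_2(\val_i(\sigs)/\val_1(\sigs))\in(-1,0]$ (with $u_1=0\ge u_2\ge\cdots\ge u_k$) gives $2\val_i(\sigs)/\val_1(\sigs)\in(1,2]$ and $\val_i(\sigs)/\val_j(\sigs)\in[1,2)$ for $i<j\le k$, on which ranges $\logdag$ agrees with $\log_2$. Hence $\logdag(2\val_i(\sigs)/\val_1(\sigs))=1+u_i$, while $\logdag(\val_i(\sigs)/\val_j(\sigs))$ equals $u_i-u_j$ when $i<j\le k$ and equals $0$ when $j\le i$. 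Substituting, the quantity to bound becomes the purely algebraic expression
\[\sum_{i=1}^k B_i=\frac{1}{k+1}\sum_{i=1}^k(1+u_i)+\sum_{1\le i<j\le k}\frac{u_i-u_j}{j(j+1)}.\]

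For the final step I would express everything in terms of the nonnegative increments $\delta_\ell:=u_\ell-u_{\ell+1}\ge 0$, $1\le\ell\le k-1$, using $u_i-u_j=\sum_{\ell=i}^{j-1}\delta_\ell$ and $1+u_i=1-\sum_{\ell=1}^{i-1}\delta_\ell$. After interchanging the order of summation and applying the telescoping identity $\sum_{j=\ell+1}^{k}\frac{1}{j(j+1)}=\frac{1}{\ell+1}-\frac{1}{k+1}$, the expression should collapse to $\frac{k}{k+1}+\sum_{\ell=1}^{k-1}\delta_\ell\big(\frac{\ell}{\ell+1}-\frac{k}{k+1}\big)$. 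Since $\ell\le k-1$, each coefficient $\frac{\ell}{\ell+1}-\frac{k}{k+1}$ is nonpositive, and since $\delta_\ell\ge 0$ we conclude $\sum_{i=1}^n B_i\le\frac{k}{k+1}\le 1$.

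I expect the only delicate point to be this last step: correctly interchanging the double summations, tracking the index ranges, and checking that the coefficient of $\delta_\ell$ simplifies exactly to $\frac{\ell}{\ell+1}-\frac{k}{k+1}$. Everything else rests on the two elementary facts that $B_i$ is supported on $i\le k$ and that $\logdag$ is linear on $[1,2]$.
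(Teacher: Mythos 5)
Your proof is correct and follows essentially the same route as the paper's: restrict to the top-$k$ bidders, replace $\logdag$ by $\log_2$ on the relevant range, swap the double summation using the telescoping identity $\sum_{j=\ell+1}^{k}\frac{1}{j(j+1)}=\frac{1}{\ell+1}-\frac{1}{k+1}$, and conclude $\sum_i B_i \le \frac{k}{k+1}\le 1$. Your change of variables to increments $\delta_\ell\ge 0$ is just a cosmetic repackaging of the paper's final step, which instead bounds $\log_2 \val_i(\sigs)$ by $\log_2 \val_1(\sigs)$.
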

\begin{proof}
Recall that $k = \max\{i\,|\,\val_i(\sigs) > \val_1(\sigs)/2\}$. Therefore, for every $i>k$, $$\logdag(2\val_i(\sigs)/\val_1(\sigs))\le \logdag(1) = 0.$$ For $i\le k$, we first observe that
$$\log_2(2\val_i(\sigs)/\val_1(\sigs))\le 1.$$ Moreover for $i\le j\le k,$ we have. $$\log_2(\val_i(\s)/\val_j(\s))\le \log_2(\val_1(\s)/\val_1(\s)/2) \le 1,$$
hence we can replace $\logdag$ with $\log_2$ for these terms. Therefore,
\begin{align}
\sum_{i=1}^n B_i & =
\sum_{i=1}^n \left(\frac{\logdag(2\val_i(\sigs)/\val_1(\sigs))}{k+1}+
\sum_{j=1}^k
\frac{\logdag(\val_i(\sigs)/\val_j(\sigs))}{j(j+1)}\right)\notag\\
    &=
\sum_{i=1}^k
\frac{\log_2(2\val_i(\sigs)/\val_1(\sigs))}{k+1}+
\sum_{i=1}^k\sum_{j= i}^k
\frac{\log_2(\val_i(\sigs)/\val_j(\sigs))}{j(j+1)}\notag\\
&= \sum_{i=1}^k\frac{1+\log_2(\val_i(\sigs))-\log_2(\val_1(\sigs))}{k+1} + \sum_{i=1}^k\sum_{j= i}^k
\frac{\log_2(\val_i(\sigs))-\log_2(\val_j(\sigs))}{j(j+1)}.\label{eq:bi_bound}
\end{align}
We now bound the second sum.
\begin{align*}
\sum_{i=1}^k\sum_{j= i}^k
\frac{\log_2(\val_i(\sigs))-\log_2(\val_j(\sigs))}{j(j+1)} &=  \sum_{i=1}^k\log_2(\val_i(\sigs))\sum_{j=i}^k \frac{1}{j(j+1)} - \sum_{j=1}^k\sum_{i=j}^k \frac{\log_2(\val_i(\sigs))}{i(i+1)}\notag\\
&= \sum_{i=1}^k\log_2(\val_i(\sigs))\cdot \left(\frac{1}{i}-\frac{1}{k+1}\right) -\sum_{i=1}^k  \frac{\log_2(\val_i(\sigs))}{i(i+1)}\cdot i\\
&= \sum_{i=1}^k\log_2(\val_i(\sigs))\cdot \left(\frac{1}{i}-\frac{1}{k+1}-\frac{1}{i+1} \right)\\
&= \sum_{i=1}^k\log_2(\val_i(\sigs))\cdot\left(\frac{1}{i(i+1)}-\frac{1}{k+1}\right).
\end{align*}
Plugging back into \Cref{eq:bi_bound}, we get
\begin{align*}
    \sum_{i=1}^n B_i &= \frac{k}{k+1}(1-\log_2(\val_1(\sigs)))+\sum_{i=1}^k \log_2(\val_i(\sigs))\cdot\left(\frac{1}{k+1} + \frac{1}{i(i+1)}-\frac{1}{k+1} \right) \\
    &= \frac{k}{k+1}(1-\log_2(\val_1(\sigs)))+\sum_{i=1}^k \log_2(\val_i(\sigs))\cdot\frac{1}{i(i+1)} \\
    &\le \frac{k}{k+1}-\log_2(\val_1(\sigs))\cdot\frac{k}{k+1} + \log_2(\val_1(\sigs))\sum_{i=1}^k \frac{1}{i(i+1)} \\
    &= \frac{k}{k+1} \le 1.
\end{align*}
\end{proof}

Finally, we prove the upper bound on the probability of being a candidate which is used in \Cref{lem:dsb_feasibility}.
\begin{lemma}\label{lem:proba-candidate}
The probability of each bidder $i$ being a candidate is bounded by
$$
\E_{r,\pi}[c_i] \leq
\frac{1}{i(i+1)}+
\frac{\logdag(2\val_i(\sigs)/\low{1}{i})}{k+1}
+\sum_{j\in[k]\setminus\{i\}}\frac{\logdag(\val_i(\sigs)/\low{j}{i})}{j(j+1)}.
$$
\end{lemma}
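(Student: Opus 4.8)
The plan is to integrate out the random permutation first, then control what remains by an inequality that holds pointwise in $r$. Write $a := \val_i(\sigs)$ and $b_j := \low{j}{i}$. Since $c_i=1$ requires $f_r(a)>_\pi f_r(b_j)$ for \emph{all} $j\neq i$, it is in particular necessary that this holds for all $j$ in the set $K:=[k]\setminus\{i\}$ of large-valued bidders, so $\E_{r,\pi}[c_i]\le \Pr_{r,\pi}\big[f_r(a)>_\pi f_r(b_j)\ \ \forall j\in K\big]$. Fixing $r$ and letting $T_r:=\{j\in K: f_r(b_j)=f_r(a)\}$ be the large-valued bidders whose rounded lower-estimate ties $i$'s rounded value, the event inside the probability holds iff (i) no $j\in K$ has $f_r(b_j)>f_r(a)$ — call this event $E_r$ — and (ii) $\pi$ places $i$ after every member of $T_r$, which for a uniform $\pi$ happens with probability $1/(|T_r|+1)$ (it is the event that $i$ attains the largest $\pi$-value in $T_r\cup\{i\}$). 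Hence $\E_{r,\pi}[c_i]\le \E_r\big[\mathds 1[E_r]/(|T_r|+1)\big]$.

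The crux is the pointwise inequality, valid for every $r$,
$$\frac{\mathds 1[E_r]}{|T_r|+1}\ \le\ \frac{1}{i(i+1)}+\frac{\mathds 1[f_r(b_1)\le f_r(a)]}{k+1}+\sum_{j\in B_r}\frac{1}{j(j+1)},\qquad B_r:=\{j\in K: f_r(b_j)<f_r(a)\}.$$
When $E_r$ fails the left side is $0$. When $E_r$ holds, $K=T_r\sqcup B_r$ and moreover $f_r(b_1)\le f_r(a)$ — because either $i\neq 1$, so $1\in K$ and $E_r$ forces it, or $i=1$, in which case $b_1=\low{1}{1}\le\val_1(\sigs)=a$ — so the middle term contributes exactly $1/(k+1)$. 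Writing $\ell:=|B_r|$, and noting $|K|$ is $k-1$ or $k$ according as $i\le k$ or $i>k$, the left side equals $1/(|K|-\ell+1)$. Since $w\mapsto 1/(w(w+1))$ is decreasing, $\sum_{j\in B_r}1/(j(j+1))$ is minimized when $B_r$ consists of the $\ell$ largest indices of $K$; combined with the telescoping identity $\sum_{j=m}^{k}1/(j(j+1))=1/m-1/(k+1)$, the claim reduces to the elementary inequality $1/((k-\ell)(k-\ell+1))\le 1/(i(i+1))$ in the case $i\le k-\ell$, while in the complementary case, where $i$ is itself one of those $\ell$ largest indices, the telescoping already closes and the $1/(i(i+1))$ term absorbs the resulting off-by-one. (The case $i>k$ is the easier one, with the $1/(i(i+1))$ term pure slack.)

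Finally I would take expectation over $r$ in the pointwise inequality. By \Cref{lem:rounded-threshold}, $\Pr_r[f_r(a)>f_r(b_j)]=\logdag(a/b_j)$, so $\Pr_r[j\in B_r]=\logdag\big(\val_i(\sigs)/\low{j}{i}\big)$; and since $f_r(2a)=2f_r(a)$ and $f_r(b_1)<2f_r(a)\iff f_r(b_1)\le f_r(a)$, the same lemma applied to $2a$ and $b_1$ gives $\Pr_r[f_r(b_1)\le f_r(a)]=\logdag(2a/b_1)=\logdag\big(2\val_i(\sigs)/\low{1}{i}\big)$. Substituting these into $\E_r[\,\cdot\,]$ of the right-hand side yields exactly the stated bound. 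The step I expect to be the main obstacle is the pointwise inequality itself: one must recognize that restricting to the $k$ large-valued bidders is both legitimate (we only ever drop constraints) and not too lossy, and then see that the dilution factor $1/(|T_r|+1)$ — once one observes that on $E_r$ every large-valued bidder not rounded strictly below $i$ is forced to be tied with $i$ — telescopes against the weights $1/(j(j+1))$, with the $1/(i(i+1))$ term precisely covering the boundary case where $i$ sits among the smallest of the large-valued bidders.
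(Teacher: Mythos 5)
Your proof is correct, and it reaches the bound by a genuinely different route than the paper. The paper keeps all $n$ constraints, sorts the other bidders by decreasing lower estimates, introduces thresholds $\tau_{i,\ell}$ and the index $t(\pi)$ of the first such bidder whom $\pi$ ranks above $i$, derives the exact identity $\E_{r,\pi}[c_i]=\frac{\logdag(\val_i(\sigs)/\tau_{i,n})}{n}+\sum_{\ell=1}^{n-1}\frac{\logdag(\val_i(\sigs)/\tau_{i,\ell})}{\ell(\ell+1)}$, and only then obtains the stated bound via the rearrangement inequality together with $\tau_{i,n}\geq\low{1}{i}/2$ and $\tau_{i,\sigma^{-1}(j)}\geq\low{j}{i}$. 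You instead drop every constraint outside $[k]\setminus\{i\}$ up front, integrate out $\pi$ for each fixed $r$ through the tie count (probability $1/(|T_r|+1)$ on the event $E_r$), and then establish a purely combinatorial pointwise inequality by a worst-case-placement and telescoping argument, which plays exactly the role the rearrangement inequality plays in the paper; the final passage to $\logdag$ via \Cref{lem:rounded-threshold}, including the factor-$2$ trick for the $\low{1}{i}$ term, coincides with the paper's. I verified your case analysis ($i>k$; $i\le k-\ell$; and $i$ among the $\ell$ largest indices of $K$, where the missing $1/(i(i+1))$ term of the telescope is covered by the standalone $1/(i(i+1))$), and your treatment of the middle indicator when $i=1$, via $\low{1}{1}\le\val_1(\sigs)$, is clean---arguably cleaner than the corresponding step in the paper. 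Trade-offs: your argument is more elementary and self-contained (no sorting by lower estimates, no thresholds $\tau_{i,\ell}$, no rearrangement inequality) and makes transparent that the bound uses nothing about $k$ beyond $1\in[k]$; on the other hand, the paper's exact formula (\Cref{eq:proba}) is reused later to justify the polynomial-time implementation in Mechanism~\ref{alg:SBpoly}, which your inequality-only derivation would not by itself supply.
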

\begin{proof}
Fix the random choices $r$ and $\pi$ of Mechanism~\ref{alg:SB}.
{We observe that the following conditions are equivalent for bidder $i$ to a candidate:}
\begin{align*}
c_i = 1 &\quad\Leftrightarrow\quad
\forall j\neq i, \quad f_r(\val_i(\sigs)) >_\pi f_r(\low{j}{i})
&(\text{by definitions of $c_i$})
\\
&\quad\Leftrightarrow\quad
\forall j\neq i, \quad \begin{cases}
f_r(\val_i(\sigs)) \geq f_r(\low{j}{i}) & \text{if }\pi(i) > \pi(j)\\
f_r(\val_i(\sigs)) > f_r(\low{j}{i}) & \text{if }\pi(i) < \pi(j)\\
\end{cases}&(\text{by definition of }>_\pi)
\\
&\quad\Leftrightarrow\quad
\forall j\neq i, \quad \begin{cases}
f_r(\val_i(\sigs)) > f_r(\low{j}{i}/2) & \text{if }\pi(i) > \pi(j)  \\
f_r(\val_i(\sigs)) > f_r(\low{j}{i}) & \text{if }\pi(i) < \pi(j) \\
\end{cases}&(\text{by definition of }f_r)
\end{align*}
To simplify this condition with two cases, we sort agents by decreasing low estimates, and we let $\sigma(\ell)$ denote the bidder $j$ with $\ell$-th highest $\low{j}{i}$. In particular, we have
$$
\low{\sigma(1)}{i} \geq \low{\sigma(2)}{i} \geq \dots \geq \low{\sigma(n-1)}{i}.$$
Next, we define
$$\tau_{i,\ell} = \max(\low{\sigma(\ell)}{i}, \low{\sigma(1)}{i}/2)
\quad\text{and}\quad
\tau_{i,n} = \low{\sigma(1)}{i}/2.$$
Finally, we let $t(\pi) = \min\{\ell\in[n-1]\;|\;\pi(i) < \pi(\sigma(\ell))\}$
if $\pi$ ranks some $j$ above $i$, and $t(\pi) = n$ otherwise. {Recall that if $i$ is a candidate then $f_r(\val_i(\sigs)) \ge f_r(\low{j}{i})$ for all $j\neq i$. Hence, observe that if $i$ is a candidate then  $f_r(\val_i(\sigs)) \ge f_r(\low{\sigma(1)}{i}) > f_r(\low{\sigma(1)}{i}/2)$, no matter whether $\pi(\sigma(1)) > \pi(i)$ or $\pi(\sigma(1)) < \pi(i)$. }
This gives the simplified condition
$$
c_i = 1 \quad\Leftrightarrow\quad f_r(\val_i(\sigs)) > f_r(\tau_{i,t(\pi)}).
$$
We compute the expected value of $c_i$
\begin{align*}
\E_{r,\pi}[c_i] &=
\sum_{\ell=1}^{n} \Pr_\pi[t(\pi) = \ell]\cdot \E_{r,\pi}[c_i\,|\,t(\pi) = \ell]
&(\text{law of total probability)}\\
&=
\sum_{\ell=1}^{n} \Pr_\pi[t(\pi) = \ell]\cdot \Pr_r[f_r(\val_i(\sigs)) > f_r(\tau_{i,\ell})] &(\text{condition above})\\
&=
\sum_{\ell=1}^{n} \Pr_\pi[t(\pi) = \ell]\cdot \logdag(\val_i(\sigs)/\tau_{i,\ell}). &(\text{using \Cref{lem:rounded-threshold}}).
\end{align*}

Now, remains to compute the probability that $t(\pi)=\ell$, induced by the uniformly random ordering $\pi$.
Observe that $t(\pi) > \ell$ if and only if $i$ is ranked before all bidders $\sigma(1), \dots, \sigma(\ell)$, which happens with probability $\frac{1}{\ell+1}$, for any $\ell\in [n-1]$. In particular, this implies that $t(\pi)=\ell$ with probability $\frac{1}{\ell}-\frac{1}{\ell+1} = \frac{1}{\ell(\ell+1)}$, hence
\begin{equation}
\E[c_i] = \frac{\logdag(\val_i(\sigs) /\tau_{i,n})}{n} + \sum_{\ell=1}^{n-1} \frac{\logdag(\val_i(\sigs)/\tau_{i,\ell})}{\ell(\ell+1)}.\label{eq:proba}
\end{equation}
To conclude the proof, we will permute terms in the sum using the rearrangement inequality\footnote{The rearrangement states that for every choice of real numbers $x_1\le \ldots\le x_n$, {$y_1\ge \ldots \ge y_n$} and permutation $\rho:[n]\rightarrow[n]$, we have $x_n y_1+ \ldots + x_1 y_n \le x_{\rho(1)}y_1 +\ldots + x_{\rho(n)}y_n$.}. We define $\sigma(n) = i$, such that $\sigma:[n]\rightarrow[n]$ is a permutation, and we denote $\sigma^{-1}$ its inverse. \begin{align*}
\E_{r,\pi}[c_i] &=
\frac{\logdag(\val_i(\sigs)/\tau_{i,n})}{n+1} +
\sum_{\ell\in[n]}
\frac{\logdag(\val_i(\sigs)/\tau_{i,\ell})}{\ell(\ell+1)}
&({\textstyle\frac{1}{n} = \frac{1}{n+1}+\frac{1}{n(n+1)}})\\
&\leq
\frac{\logdag(\val_i(\sigs)/\tau_{i,n})}{n+1}
+ \sum_{j\in[n]}\frac{\logdag(\val_i(\sigs)/\tau_{i,\sigma^{-1}(j)})}{j(j+1)}
&\text{(rearrangement ineq.)}\\
&\leq
\frac{\logdag(\val_i(\sigs)/\tau_{i,n})}{k+1}
+ \sum_{j\in[k]}\frac{\logdag(\val_i(\sigs)/\tau_{i,\sigma^{-1}(j)})}{j(j+1)}
&\text{($\tau_{i,n} \leq \tau_{i,\sigma^{-1}(j)}$ for all $j$)}\\
&\leq
\frac{\logdag(\val_i(\sigs)/\tau_{i,n})}{k+1}
+\frac{1}{i(i+1)}
+ \sum_{j\in[k]\setminus\{i\}}\frac{\logdag(\val_i(\sigs)/\low{j}{i})}{j(j+1)}
&\text{($\low{j}{i} \leq \tau_{i,\sigma^{-1}(j)}$)}\\
&\leq
\frac{\logdag(2\val_i(\sigs)/\low{1}{i})}{k+1}
+\frac{1}{i(i+1)}
+ \sum_{j\in[k]\setminus\{i\}}\frac{\logdag(\val_i(\sigs)/\low{j}{i})}{j(j+1)}.
&\text{($\low{1}{i}/2 \leq \tau_{i,n}$)}
\end{align*}
The last inequality corresponds to the statement of the Lemma.
\end{proof}

\subsection{Polynomial Time Implementation}

In this section, we show how to implement Mechanism~\ref{alg:SB} in polynomial time. The main technical challenge is to avoid enumerating all possible tie breaking permutation $\pi$ in Mechanism~\ref{alg:SB}. The polynomial time implementation is illustrated in Mechanism~\ref{alg:SBpoly}. {The mechanism makes $n^2$ queries to bidders. In general, it queries  bidders for their low estimates, that is $\{\low[\hat]{i}{j}\}_{i,j}$. Note that when valuation functions are monotone, it suffices to query valuations on the minimum signals, that is $\{\val[\hat]_i(0_j,\sigs[\hat]_{-j})\}_{i,j}$.} The mechanism queries bidders for their value on polynomially many signal profiles, which relates to the different values each bidder's value has to pass in order for the bidder to be a candidate, the thresholds $\tau_{i,\ell}$. It then computes the probability of each bidder to be a candidate using the $\logdag$ function, and the corresponding payment using \Cref{eq:pricing}.

\begin{algorithm}[p]
    \begin{enumerate}
\item Elicit reported signals $\sigs[\hat] = \{\sig[\hat]_i \in S_i\}_{i\in [n]}$, and query each bidder $i$ on:\begin{itemize}
    \item its value $\val[\hat]_i(\sigs[\hat])$ for signal profile $\sigs[\hat]$; and
    \item for every bidder $j\neq i$, query $i${'s} lowest possible value $\low[\hat]{{i}}{{j}}$ for signals $\sigs[\hat]_{-i}$.
\end{itemize}   
\item Let $\eta \ge 1$ be identical to $\eta$ in Mechanism~\ref{alg:SB}. 
\item For each bidder $i$, we define the following:
\begin{itemize}
\item{Let $\tau_{i,n} = \max_{j\neq i} \low[\hat]{j}{i}/2$, and let $\tau_{i,\ell}$ be the $\ell$-th value in $\{\max(\low[\hat]{j}{i}, \tau_{i,n})\}_{j\neq i}$.}
\item{Let $\displaystyle x_i = \frac{1}{\eta}\left(\frac{\logdag(\val[\hat]_i(\sigs[\hat])/\tau_{i,n})}{n} + \sum_{\ell=1}^{n-1} \frac{\logdag(\val[\hat]_i(\sigs[\hat])/\tau_{i,\ell})}{\ell(\ell+1)}\right).$}
\item{Let $\displaystyle p_i = \frac{1}{\eta}\left(\frac{\max(0, \min(\tau_{i,n}, \val[\hat]_i(\sigs[\hat])-\tau_{i,n}))}{n\ln 2} + \sum_{\ell=1}^{n-1} \frac{\max(0, \min(\tau_{i,\ell}, \val[\hat]_i(\sigs[\hat])-\tau_{i,\ell}))}{\ell(\ell+1)\ln 2}\right).$}
\end{itemize}
\item Allocate the item to bidder $i$ with probability $x_i$ for all $i\in [n]$.
\item Charge price $p_i$ to bidder $i$.
\end{enumerate}
\renewcommand{\thealgorithm}{PRCF}
\caption{\textbf{Polynomial-time Randomized  Candidate Filtering (PRCF) Mechanism}.}
\label{alg:SBpoly}
\end{algorithm}

\Cref{fig:polytime} illustrates some of the components used in the proof of \Cref{prop:polytime}. 

\begin{lemma} \label{prop:polytime}
    Mechanism~\ref{alg:SB} can be implemented in polynomial-time.
\end{lemma}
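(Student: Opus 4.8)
The plan is to show that Mechanism~\ref{alg:SBpoly} (PRCF) computes exactly the same allocation $x_i$ and a valid payment $p_i$ as Mechanism~\ref{alg:SB} (RCF), using only $O(n^2)$ value queries and polynomially many arithmetic operations. The conceptual content is to avoid summing over all $n!$ permutations $\pi$ by exploiting the closed-form expression for $\E_{r,\pi}[c_i]$ derived in the proof of \Cref{lem:proba-candidate}. First I would recall that, as established there, after fixing the bidder $i$ and sorting the other bidders by decreasing low-estimates $\low{j}{i}$, the event $c_i = 1$ is equivalent to $f_r(\val_i(\sigs)) > f_r(\tau_{i,t(\pi)})$, where the thresholds $\tau_{i,\ell} = \max(\low{\sigma(\ell)}{i}, \low{\sigma(1)}{i}/2)$ for $\ell\in[n-1]$ and $\tau_{i,n} = \low{\sigma(1)}{i}/2$ depend only on the reported low-estimates, and where $t(\pi)=\ell$ with probability $\frac{1}{\ell(\ell+1)}$ for $\ell\in[n-1]$ and $\frac{1}{n}$ for $\ell=n$. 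Combining with \Cref{lem:rounded-threshold}, which gives $\Pr_r[f_r(a)>f_r(b)] = \logdag(a/b)$, we get exactly Equation~\eqref{eq:proba}:
\[
\E_{r,\pi}[c_i] = \frac{\logdag(\val_i(\sigs)/\tau_{i,n})}{n} + \sum_{\ell=1}^{n-1}\frac{\logdag(\val_i(\sigs)/\tau_{i,\ell})}{\ell(\ell+1)},
\]
which is precisely the quantity $\eta\cdot x_i$ computed in step~(3) of Mechanism~\ref{alg:SBpoly}. Note $\tau_{i,n} = \max_{j\neq i}\low{j}{i}/2$ since $\low{\sigma(1)}{i}$ is the largest low-estimate, matching the definition in PRCF, so the two allocation rules coincide.

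Next I would verify the query complexity and running time. The mechanism elicits $\sigs[\hat]$, then for each bidder $i$ it makes one query for $\val[\hat]_i(\sigs[\hat])$ and one query for each $\low[\hat]{i}{j}$, $j\neq i$; that is at most $n$ queries per bidder and $O(n^2)$ in total. (In the monotone case $\low[\hat]{i}{j} = \val[\hat]_i(0_j,\sigs[\hat]_{-j})$, so these are ordinary value queries.) Given these $O(n^2)$ numbers, sorting to obtain the $\tau_{i,\ell}$'s takes $O(n\log n)$ per bidder, and evaluating the sums defining $x_i$ and $p_i$ takes $O(n)$ arithmetic operations per bidder; $\logdag$ is a constant number of $\log$, $\max$, $\min$ operations. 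Hence everything is polynomial.

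Finally I would argue that the payment $p_i$ in step~(3) equals $p_i(\sigs,\vals)$ from \Cref{eq:pricing} for the allocation rule $x_i$. This is the step I expect to require the most care, though it is a routine calculation rather than a genuine obstacle. The point is that, with $\sigs_{-i},\vals_{-i}$ fixed, $x_i$ as a function of $w := \val_i(\sigs)$ is a sum of terms of the form $\frac{1}{\eta}\,\frac{1}{\ell(\ell+1)}\logdag(w/\tau_{i,\ell})$ (and one term with $\frac1n$), each of which is a nondecreasing piecewise function of $w$; so \Cref{lem:pricing} applies and the payment is $p_i = x_i(w)\cdot w - \frac{1}{\eta}\int_0^{w} \big(\text{same sum with } w \text{ replaced by } t\big)\,\mathrm dt$. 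It remains to evaluate, for a single threshold $\tau$, the integral $\int_0^w \logdag(t/\tau)\,\mathrm dt$. Using $\logdag(t/\tau) = \max(0,\min(1,\log_2(t/\tau)))$: the integrand is $0$ for $t\le\tau$, equals $\log_2(t/\tau)$ for $\tau\le t\le 2\tau$, and equals $1$ for $t\ge 2\tau$; a direct computation of $\int \log_2(t/\tau)\mathrm dt = \frac{t\ln(t/\tau) - t}{\ln 2}$ then shows
\[
w\cdot\logdag(w/\tau) - \int_0^{w}\logdag(t/\tau)\,\mathrm dt = \max\big(0,\ \min(\tau,\ w-\tau)\big)/\ln 2,
\]
which one checks separately in the three regimes $w\le\tau$, $\tau\le w\le 2\tau$, $w\ge 2\tau$. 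Summing this identity over $\tau = \tau_{i,n}$ (with weight $\frac1n$) and $\tau=\tau_{i,\ell}$, $\ell\in[n-1]$ (with weight $\frac{1}{\ell(\ell+1)}$), and dividing by $\eta$, yields exactly the formula for $p_i$ in Mechanism~\ref{alg:SBpoly}. Therefore PRCF implements RCF with identical allocations and EPIC-IR payments in polynomial time, which proves the lemma.
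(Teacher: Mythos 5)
Your proposal is correct and follows essentially the same route as the paper: it equates the PRCF allocation with $\E_{r,\pi}[c_i]/\eta$ via the closed form in \Cref{eq:proba} from \Cref{lem:proba-candidate}, verifies the $O(n^2)$ query bound, and derives the payment by the same three-regime evaluation of $w\cdot\logdag(w/\tau)-\int_0^w\logdag(t/\tau)\,\mathrm dt=\max(0,\min(\tau,w-\tau))/\ln 2$ matching \Cref{eq:pricing}. No gaps.
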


\begin{figure}[p]
    \centering
    \includegraphics[width=\textwidth]{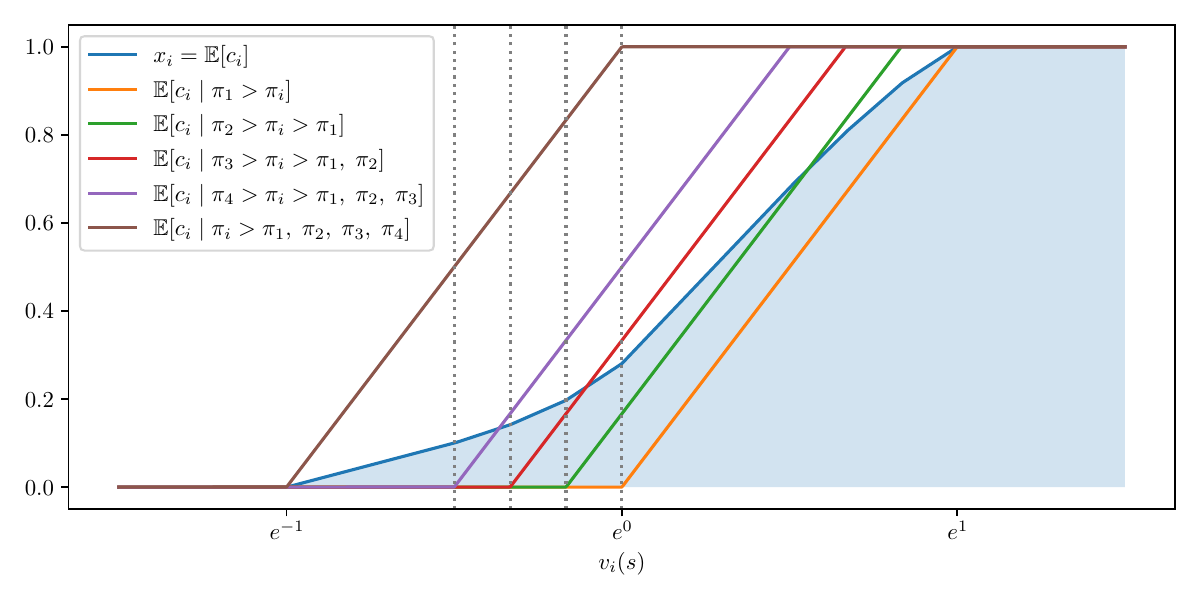}
    \caption{The blue piece-wise linear curve denotes the log-scale plot of the allocation probability $x_i$ as a function of $v_i(\sigs)$. In particular, $x_i$ is the weighted average of $\E_{r,\pi}[c_i|t(\pi) = \ell]$ for all $\ell$, weighted by the probability that $t(\pi) =\ell$. The log-scale plot of $\E_{r,\pi}[c_i|t(\pi) = \ell]$ is denoted by the different colored piece-wise linear curves. The lower estimates of bidders are represented with dashed lines at $\low{1}{i} = 1$, $\low{2}{i} = e^{-1/6}$, $\low{3}{i} = e^{-1/3}$ and $\low{4}{i} = e^{-1/2}$.}
    \label{fig:polytime}
\end{figure}

\begin{proof}
We show that Mechanism~\ref{alg:SBpoly} is a polynomial time implementation of Mechanism~\ref{alg:SB}. {First, Mechanism~\ref{alg:SB} is truthful, thus we assume bidders bid their true valuations and signals, and write $\sigs$ and $\vals$ instead of $\sigs[\hat]$ and $\vals[\hat]$.}

The mechanism queries the bidder's valuations on $O(n^2)$ many signal profiles. In particular, after eliciting signals $\sigs$, each bidder $i$ is then asked to report $\val_i(\sigs)$ and $\low{{i}}{{j}}$ for all $j\neq i$. For extending the mechanism to the case where $d$ is unknown (as discussed in the next section), we also ask the bidders to report the minimum $d_i$ such that their valuation function $v_i(\cdot)$ is $d_i$-\selfbounding. 
The mechanism runs in polynomial time as it gives a tractable formula to compute each bidder's allocation probability and payment as a function of $n$ different thresholds $\tau_{i,\ell}$. {First, notice that the probabilities in mechanisms \ref{alg:SB} and \ref{alg:SBpoly} are equal, as demonstrated in \Cref{eq:proba}.}

It remains to show that the payment formula implements \Cref{eq:pricing} for the given allocation rule.  
We first make the following observations towards computing $\logdag(v/c)\cdot v -\int_0^ v \logdag(t/c) \mathrm{d}t$ for all $v \ge 0$ and constant $c>0$. 
\begin{itemize}
    \item If $v \le c$, then $\int_0^ v \logdag(t/c) \mathrm{d}t = 0$.
    \item If $c \le v \le 2c$, then $\int_0^ v \logdag(t/c) \mathrm{d}t = 0 + \int_c^v \log_2(t/c) \mathrm{d}t = v{\log_2(v/c)}- \frac{v}{\ln 2} + \frac{c}{\ln 2}$.
    \item If $v \ge 2c$, then \begin{align*}
    \int_0^v \logdag(t/c) \mathrm{d}t &=\int_c^{2c} \log_2 (t/c) \mathrm{d}t + \int_{2c}^v \mathrm{d}t\\
    &= \left(2c\log_2(2c/c) - \frac{2c - c}{\ln 2} \right) + (v - 2c)\\
    & = v - \frac{c}{\ln 2}
\end{align*}
\end{itemize}

Plugging in the above observations we immediately get
\begin{align}
\logdag(v/c)\cdot v - \int_0^v \logdag(t/c) \mathrm{d}t &= 
\begin{cases}
    0, &\text{if } v \le c\\
    {(v - c)}/{\ln 2}, & \text{if } c < v \le 2c\\
    c/\ln 2, &\text{if } v > 2c
\end{cases}\notag\\
& = \frac{\max(0,\min(c, v-c))}{\ln 2}\label{eq:price of logdag}
\end{align}

We are now ready to compute the payments for our allocation rule $x_i$ according to \Cref{eq:pricing}. For any fixed $\vals_{-i}$, $\sigs_{-i}$, and for all $v \ge 0$ we have
\begin{align*}
    &x_i(v)\cdot v - \int_0^v x_i(t) \mathrm{d}t \\
    & = \frac{1}{\eta}\left(\frac{\logdag(v/\tau_{i,n})}{n} + \sum_{\ell=1}^{n-1} \frac{\logdag(v/\tau_{i,\ell})}{\ell(\ell+1)}\right)\cdot v - \int_0^v \frac{1}{\eta}\left(\frac{\logdag(t/\tau_{i,n})}{n} + \sum_{\ell=1}^{n-1} \frac{\logdag(t/\tau_{i,\ell})}{\ell(\ell+1)}\right)\mathrm{d}t \\
    &= \frac{1}{\eta} \left(\frac{\logdag(v/\tau_{i,n})\cdot v}{n} -  \int _0^ v \frac{\logdag(t/\tau_{i,n})}{n}\mathrm{d}t + \sum_{\ell =1}^{n-1}\left(\frac{\logdag(v/\tau_{i,\ell})\cdot v}{\ell(\ell+1)} - \int_0^t\frac{\logdag(t/\tau_{i,\ell})}{\ell(\ell+1)}\mathrm{d}t\right)\right) \\
    &= \frac{1}{\eta}\left(\frac{\max(0,\min(\tau_{i,n},v-\tau_{i,n}))}{n\ln 2}+ \sum_{\ell =1}^{n-1}\frac{\max(0,\min(\tau_{i,\ell},v-\tau_{i,\ell})}{\ell(\ell+1)\ln 2}\right), \qquad \text{(By Eq~\eqref{eq:price of logdag})}
\end{align*}
which is exactly the payment $p_i$ for $\val_i(\sigs) = v$ defined in Mechanism~\ref{alg:SBpoly}.

\end{proof}

\subsection[Unknown d]{Unknown $d$}\label{sec:unknown-d}

In this section we show how to extend our results for the case where the value of $d$ is unknown. Recall that Mechanism \ref{alg:SB} uses a bound on $d$ to set the normalization factor $\eta$. 
However, in order to keep the mechanism truthful, the allocation of $i$ cannot rely on $i$'s valuation function $\val_i(.)$, and hence on $d_i$.
{To address this challenge, we define personalized normalization factors for each bidder $i$ that doesn't use $d_i$.}
In particular, for each bidder $i$, we use the smallest value $d$ such that all bidders $j\neq i$ are $d$-\selfbounding. 
This way, all bidders except at most a single bidder (the one with the largest $d$) have the correct normalization factor $\eta$. 
Thus, by scaling the allocation probabilities by another factor of $2$, feasibility is guaranteed and the $O(d)$-approximation is preserved. We note that {this modification requires asking each bidder to report the smallest value $d_i$ such that their valuation function is $d_i$-\selfbounding.}

This is formalized in the following lemma.

\begin{lemma}\label{prop:unknown_d_sb_feasibility}
    For every instance with $d$-\selfbounding valuations, where $d$ is unknown, one can compute personalized normalization factors $\eta_i$ such that by setting $x_i = \E_{r,\pi}[c_i]/\eta_i$ in the~\ref{alg:SB} mechanism we get a feasible, truthful, $O(d)$-approximation to the optimal welfare.
\end{lemma}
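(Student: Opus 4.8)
The plan is to augment the mechanism so that each bidder $i$ additionally reports a parameter $d_i\in[n]$, intended to be the smallest value for which $\val_i$ is $d_i$-\selfbounding (this is always well defined, since every valuation is $n$-\selfbounding because each lower estimate is non-negative), and then to run Mechanism~\ref{alg:SB} with the \emph{personalized} normalizer $\eta_i := 4\bigl(1+\max_{j\neq i}d_j\bigr)$ for bidder $i$ (we may assume $n\ge 2$, the case $n=1$ being trivial). The key structural point is that $\eta_i$ depends only on the reports of bidders $j\neq i$, so it may be plugged into bidder $i$'s allocation rule without disturbing the incentive argument; the price of non-uniform normalizers is only that the feasibility bookkeeping of \Cref{lem:dsb_feasibility} no longer applies verbatim.

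For truthfulness, I would observe that the monotonicity argument of \Cref{lem:dsb_truthful} is unchanged: for fixed reports $\sigs[\hat]_{-i},\vals[\hat]_{-i},\{d_j\}_{j\neq i}$ and every realization of $r$ and $\pi$, the indicator $c_i$ is non-decreasing in $\val[\hat]_i(\sigs[\hat])$, hence so is $x_i=\E_{r,\pi}[c_i]/\eta_i$, and \Cref{lem:pricing} supplies an EPIC-IR payment rule. Moreover, neither $x_i$ nor the payment $p_i$ induced by \eqref{eq:pricing} depends on bidder $i$'s own report $d_i$, so bidder $i$ is indifferent across all reports of $d_i$; truthful reporting is therefore (weakly) a best response, and we may assume all $d_i$ are reported correctly. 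In particular, we may then take $d:=\max_j d_j$ to be the smallest parameter for which the instance is $d$-\selfbounding, and it suffices to establish an $O(d)$ guarantee for this $d$.

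The main work is feasibility. Let $i^\star$ attain $\max_j d_j$. For every $i\neq i^\star$ the index $i^\star$ still lies in $[n]\setminus\{i\}$, so $\max_{j\neq i}d_j=d$ and $\eta_i=4(d+1)$. The proof of \Cref{lem:dsb_feasibility} shows the total expected number of candidates satisfies $\sum_{i=1}^n\E_{r,\pi}[c_i]\le 2(d+1)$, and each $\E_{r,\pi}[c_i]\in[0,1]$; hence
\[
\sum_{i=1}^n x_i \;=\; \sum_{i\neq i^\star}\frac{\E_{r,\pi}[c_i]}{4(d+1)}+\frac{\E_{r,\pi}[c_{i^\star}]}{\eta_{i^\star}} \;\le\; \frac{2(d+1)}{4(d+1)}+\frac14 \;=\; \frac34 \;\le\; 1,
\]
using $\eta_{i^\star}\ge 4$ in the last step. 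This is precisely where the extra factor of $2$ (relative to $\eta=2(d+1)$ in the known-$d$ case) is spent: it reserves room for the at-most-one expected candidate contributed by the single bidder $i^\star$ whose personalized normalizer may be undersized. I expect this separation — noticing that at most one bidder can have an undersized normalizer, and that its candidate mass is at most $1$ — to be the crux of the argument.

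Finally, for the approximation guarantee I would rerun the proof of \Cref{lem:dsb_approx} up to the step where uniformity of $\eta$ is invoked: for every $\vals,\sigs,r,\pi$ the lex-maximal discretized bidder is a candidate and witnesses $\sum_i c_i\val_i(\sigs)\ge f_r(\max_i\val_i(\sigs))$, so, bounding every $\eta_i$ by $\max_i\eta_i=4(d+1)$,
\[
\sum_i x_i\val_i(\sigs) \;=\; \sum_i\frac{\E_{r,\pi}[c_i]}{\eta_i}\val_i(\sigs) \;\ge\; \frac{1}{4(d+1)}\,\E_{r,\pi}\!\Bigl[\sum_i c_i\val_i(\sigs)\Bigr] \;\ge\; \frac{\E_r[f_r(\max_i\val_i(\sigs))]}{4(d+1)} \;\ge\; \frac{\max_i\val_i(\sigs)}{8(d+1)\ln 2},
\]
where the last inequality is \eqref{eq:round_v}. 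This yields an $8(d+1)\ln 2=O(d)$-approximation, exactly a factor $2$ worse than the known-$d$ bound promised in \Cref{thm:sos-constant}. The polynomial-time implementation of Mechanism~\ref{alg:SBpoly} carries over directly, simply replacing the global $\eta$ by $\eta_i$ in the formulas for $x_i$ and $p_i$ (and eliciting each $d_i$ alongside the other reports).
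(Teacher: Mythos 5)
Your proposal is correct and follows essentially the same route as the paper: the same personalized normalizers $\eta_i = 4(1+\max_{j\neq i}d_j)$, the same monotonicity/independence argument for EPIC-IR via \Cref{lem:pricing}, feasibility via the bound $\sum_i \E_{r,\pi}[c_i]\le 2(d+1)$ from \Cref{lem:dsb_feasibility} together with the observation that only the single $\argmax_j d_j$ bidder can have an undersized normalizer (your version merely avoids the paper's explicit case split), and the approximation via $\eta_i \le 4(d+1)$ as in \Cref{lem:dsb_approx}.
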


\begin{proof}
Given reported signals $\sigs[\hat]$ and valuations $\vals[\hat]$, we set $\eta_i = 4(\hat{d}_{-i} + 1)$ with $\hat{d}_{-i} = \max_{j\neq i} d_j$ where each $\val[\hat]_j$ is $d_j$-\selfbounding (which means $d_j \le d$). Observe that $\hat{d}_{-i}$ doesn't depend on $\hat{s}_i$ or $\val[\hat]_i$. Moreover, by the same arguments as \Cref{lem:dsb_truthful},  $c_i$ is monotone in $\val[\hat]_i(\sigs[\hat])$ and doesn't depend on $i$'s information. This implies that the allocation $x_i$ is also monotone in $\val[\hat]_i(\sigs[\hat])$ and doesn't depend on $\hat{s}_i$ or $\val[\hat]_i$. Hence, by \Cref{lem:pricing} the allocation is EPIC-IR implementable.

Next, we show that the resulting allocation is feasible. Recall that in~\Cref{lem:dsb_feasibility} we showed that when $\eta = 2(d+1)$ we have $\sum_i \E_{r,\pi}[c_i]/\eta \le 1$. Suppose $\eta_i = \eta_j$ for all bidders $i,j$, we observe that $\eta_i = 2\eta$. This is because we have $\hat{d}_{-i} = \hat{d}_{-j} = \max_i d_i = d$. Hence by~\Cref{lem:dsb_feasibility} we have $\sum_i x_i = \sum_{i} \E[c_i]/2\eta \le 1/2$, which is a feasible allocation. Suppose $\eta_i \neq \eta_j$ for some bidders $i \neq j$, then we have a unique bidder $i = \argmax_j d_j$ with $d_i = d$. Hence we have $\eta_j = 2\eta$ for all $j \neq i$, and by~\Cref{lem:dsb_feasibility} we have $\sum_{j\neq i} x_i \le 1/2$. Moreover, since $\eta_i \ge 2$, we have $x_i \le 1/2$. This implies $\sum_j x_j \le 1$, and thus proving feasibility.

Finally, we show that the we obtain a $O(d)$-approximation. Observe that $\eta_i \le 2\eta$ for all bidders $i$, so by the same arguments \Cref{lem:dsb_approx} we have that the mechanism is a $2\eta \cdot 2\ln 2$ approximation to the optimal welfare.

\end{proof}

\subsection{Putting it All Together}
We now have all the ingredients to prove our main theorem. The full proof follows.

\begin{proof}[Proof of \Cref{thm:sos-constant}]
\Cref{lem:dsb_truthful} shows the mechanism is truthful, \Cref{lem:dsb_approx} and \Cref{lem:dsb_feasibility} show the mechanism is feasible and gets $4(d+1)\ln 2=O(d)$-approximation for $d$-\selfbounding valuations. By \Cref{cor:d_sb_impossibility}, every EPIC-IR mechanism cannot have a better than $\Omega(d)$-approximation for $d$-\selfbounding valuations, even if the valuations are public. By \Cref{lem:sos is sb}, this gives a $8\ln 2\approx 5.55$-approximation for monotone \sos functions, and a $12\ln 2\approx 8.32$-approximation for non-monotone \sos functions. By  \Cref{prop:unknown_d_sb_feasibility}, all the results generalize to the case where the bound on $d$ is unknown by losing another factor of $2$ in the approximation. Finally, by \Cref{prop:polytime}, the mechanism can be implemented in polynomial time.     
\end{proof}


\section{Multi-Unit Auctions with Unit-Demand Bidders}\label{sec:multi-unit}

In this section we extend results from \Cref{sec:d-sb-mech} to multi-unit auctions with $n$ unit-demand bidders and $m$ identical items. We assume that $1 \leq m < n$; otherwise, trivially we could give each bidder one of the items.  We consider the following small adjustment of \ref{alg:SB} and show that this gives a truthful and feasible mechanism that obtains an $O(d)$-approximation when allocating $m$ identical items to $n$ unit-demand bidders.

\paragraph{Adjusted ~\ref{alg:SB} Mechanism:}
    \begin{enumerate}
        \item In step (3) of the mechanism we set $c_i = 1$ if $f_r(\val_i(\sigs)) >_\pi f_r(\low{j}{i}) $ for at least $n-m$ bidders $j\neq i$, and $0$ otherwise.
        \item In step (4) of the mechanism we allocate items using \Cref{thm:birkhoff} such that the allocation is ex-post feasible.
    \end{enumerate}

\begin{theorem}\label{thm:multi_sb}
    Given an instance with $n$ unit-demand bidders and $m$ identical items, there exist an EPIC-IR mechanism which obtains $O(d)$-approximation for $d$-\selfbounding valuations.
\end{theorem}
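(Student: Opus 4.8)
The plan is to mirror the single-item analysis of Mechanism~\ref{alg:SB}, adapting each of the three desiderata (truthfulness, approximation, feasibility) to the unit-demand, $m$-item setting. First I would establish \textbf{truthfulness}: the new definition of $c_i$ (requiring $f_r(\val_i(\sigs)) >_\pi f_r(\low{j}{i})$ for at least $n-m$ bidders $j\neq i$) is still monotone in $\val_i(\sigs)$ for every fixed $r,\pi$ and every fixed $\sigs[\hat]_{-i},\vals[\hat]_{-i}$, since increasing $\val_i(\sigs)$ can only turn more of the pairwise inequalities on, hence can only make the ``at least $n-m$'' threshold satisfied if it wasn't before. Thus $x_i = \E_{r,\pi}[c_i]/\eta$ is monotone in $\val_i(\sigs)$ and depends on $i$ only through $\val_i(\sigs)$, so by \Cref{lem:pricing} the allocation is EPIC-IR implementable with payments given by \Cref{eq:pricing}. (For unknown $d$ one uses the personalized $\eta_i$ of \Cref{prop:unknown_d_sb_feasibility} verbatim.)

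Next, the \textbf{approximation} argument: fix $r,\pi$ and, following \Cref{lem:dsb_approx}, order bidders so that $f_r(\val_1(\sigs)) >_\pi f_r(\val_2(\sigs)) >_\pi \cdots$. For each of the top-$m$ bidders $i^\star$ in this order, for every other $j$ we have $f_r(\val_j(\sigs)) \ge f_r(\low{j}{i^\star})$, and moreover $f_r(\val_{i^\star}(\sigs)) >_\pi f_r(\val_j(\sigs)) \ge f_r(\low{j}{i^\star})$ for all $j$ ranked below $i^\star$ — at least $n-m$ such $j$ — so $c_{i^\star}=1$. Hence $\sum_i c_i\, \val_i(\sigs) \ge \sum_{i^\star \text{ top-}m} \val_{i^\star}(\sigs) \ge \opt$, the optimal welfare being exactly the sum of the $m$ largest values. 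Taking expectations over $r,\pi$ and using $\E_r[f_r(v)] \ge v/(2\ln 2)$ as in \eqref{eq:round_v} — applied to the bidders who are among the top $m$ by true value, whose discretized values still dominate their own — gives $\sum_i x_i \val_i(\sigs) \ge \opt/(2\eta\ln 2)$, the same bound up to the $\eta=O(d)$ factor.

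For \textbf{feasibility}, the target is now $\sum_i x_i \le m$, i.e. $\sum_i \E_{r,\pi}[c_i] \le \eta m$ with $\eta = O(d)$. The key observation is that the union bound underlying \Cref{lem:proba-candidate} still applies: $c_i=1$ forces $f_r(\val_i(\sigs))$ to beat the low-estimates of all but at most $m$ of the other bidders, so relative to the single-item case the only change is that the ``$t(\pi)$'' cutoff — now $t(\pi)$ being roughly the rank, among low-estimates, of the $(m+1)$-st bidder ranked after $i$ by $\pi$ — satisfies $\Pr_\pi[t(\pi) \ge \ell] = \binom{\ell}{m}^{-1}\cdot(\text{combinatorial factor})$, effectively shifting the harmonic weights $1/(\ell(\ell+1))$ by $m$. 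Carrying this through \Cref{lem:proba-candidate}, \Cref{lem:feasibilityA}, and \Cref{lem:feasibilityB} multiplies each bound by a factor of at most $m$: $\sum_i A_i \le 2dm$ via \Cref{lem:dsb} applied to each of the $\le m$ ``surviving'' low-estimate chains, $\sum_i 1/(i(i+1)) \le 1 \le m$, and $\sum_i B_i \le m$ by the same telescoping. Hence $\eta = 2(d+1)m$ (or $4(d+1)m$ for unknown $d$) suffices, giving $x_i = \E_{r,\pi}[c_i]/\eta$ with $\sum_i x_i \le m$. Finally one invokes the Birkhoff–von~Neumann decomposition (cited as \Cref{thm:birkhoff}) to realize the fractional vector $(x_i)$ with $\sum_i x_i \le m$ and each $x_i \le 1$ as a distribution over integral unit-demand allocations of $m$ items, preserving the per-bidder marginals and hence truthfulness and welfare. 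Combining, the mechanism is EPIC-IR, feasible, and an $O(dm)/ \cdot$\dots — wait, one must be careful here: the approximation degrades to $O(d)$ only if the $m$ in $\eta$ is offset, which it is precisely because $\opt$ for $m$ items is the sum of $m$ values and the argument above lower-bounds $\sum_i x_i\val_i(\sigs)$ by $\opt/(2\eta\ln 2)$ where the $m$ top bidders each contribute — so the ratio is $2\eta\ln 2 / 1 = O(dm)$ against a single value but $O(d)$ against $\opt$ since $\opt \ge m \cdot (\text{avg of top }m)$; the $m$ cancels. The main obstacle I expect is getting this bookkeeping on the probability weights right: showing cleanly that passing from ``beat all $n-1$ others'' to ``beat all but $m$ others'' replaces the weights $\frac{1}{\ell(\ell+1)}$ by something summing to $O(m)$ rather than $O(1)$, and that the $d$-self-bounding lemma (\Cref{lem:dsb}) can still be applied $m$ times without losing an extra factor — i.e. verifying the rearrangement-inequality step in \Cref{lem:proba-candidate} goes through with the shifted weights and that the $m$ in $\eta$ exactly cancels the $m$ in $\opt$.
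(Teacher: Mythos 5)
Your overall architecture matches the paper's: the modified candidate rule (beat all but at most $m-1$ of the others' low-estimates), monotonicity of $c_i$ in $\val_i(\sigs)$ for truthfulness, the top-$m$ bidders by discretized value all being candidates for the welfare bound, a multi-unit analogue of the candidate-probability bound with the harmonic weights scaled by roughly $m$, and a Birkhoff-type rounding for ex-post feasibility. However, there is a genuine error in the feasibility/approximation bookkeeping. Having correctly identified the target $\sum_i \E_{r,\pi}[c_i] \le O(dm)$, you then set $\eta = 2(d+1)m$, which yields $\sum_i x_i \le 1$ (not the $\sum_i x_i \le m$ you later feed into the rounding step) and makes the welfare guarantee $\sum_i x_i\val_i(\sigs) \ge \opt/(2\eta\ln 2) = \opt/O(dm)$, i.e.\ an $O(dm)$-approximation. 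Your attempted rescue --- that the $m$ in $\eta$ is offset because $\opt$ is a sum of $m$ values, ``so the $m$ cancels'' --- is not valid: the lower bound $\tfrac{1}{\eta}\E_r[\sum_{\ell=1}^m f_r(\val_\ell(\sigs))]$ already credits all $m$ top values, so an extra factor $m$ in $\eta$ degrades the ratio by exactly that factor, with nothing left to cancel it. The correct resolution, and what the paper does, is to keep $\eta = 4(d+1)$ (no $m$), observe that fractional feasibility here only requires $\sum_i x_i \le m$ --- which your own bound $\sum_i\E[c_i]\le O(dm)$ then delivers --- and invoke the decomposition of \Cref{thm:birkhoff} to turn the vector with $\sum_i x_i\le m$, $x_i\le 1$ into an ex-post feasible lottery preserving marginals; the approximation is then $2\eta\ln 2 = O(d)$.

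Separately, the core technical step --- the multi-unit analogue of \Cref{lem:proba-candidate} --- is only gestured at, and it is not a mechanical ``shift of the weights by $m$.'' The paper's proof (\Cref{lem:dsb-fractional-feasibility,lem:gen_candidate_prob}) redefines $k$ as the number of bidders with value above $\val_m(\sigs)/2$ rather than $\val_1(\sigs)/2$, bounds the top $m-1$ bidders trivially by $1$ each, handles low-valued bidders $i>k$ by a union bound over the $m$ largest low-estimates $\low{j}{i}$ (giving $\sum_{i>k}\E[c_i]\le 2dm$ via \Cref{lem:dsb}), and for intermediate bidders defines a new stopping index $t(\pi)$ (the position at which the $m$-th bidder that $\pi$ prefers to $i$ appears) with $\Pr_\pi[t(\pi)=\ell]=\frac{m}{\ell(\ell+1)}$, before applying the rearrangement inequality and the modified $A_i$/$B_i$ bounds. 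Your binomial-coefficient heuristic for $\Pr_\pi[t(\pi)\ge\ell]$ points in the right direction, but without this case analysis the feasibility bound $\sum_i\E[c_i]\le O(dm)$ is asserted rather than proved.
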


\begin{proof}
   We first observe that the resulting mechanism is truthful following the same arguments as \Cref{lem:dsb_truthful}.
    Next, in \Cref{lem:dsb-fractional-feasibility} we claim that for $\eta = 4(d+1)$ the resulting allocation is fractionally feasible, i.e., $\sum_i{x_i} \le m$. We further use a randomized rounding procedure following \Cref{thm:birkhoff} to obtain a randomized allocation such that the each bidder $i$ is allocated an item with probability $x_i$, while making sure the allocation  is ex-post feasible, that is, at most $m$ bidders are allocated. 

    Finally, we show that the mechanism obtains an $O(d)$-approximation to the optimal welfare. Fix valuations and signals $\vals,\sigs$ and consider any random choice of $r$ and $\pi$. Wlog we rename bidders such that $\val_1(\sigs) \ge \val_2(\sigs)\ge \cdots \ge \val_n(\sigs)$.
    Let $I^\star$ denote the top $m$ bidders according to $f_r(\val_i(\sigs))$ breaking ties according to priority in $\pi$. We denote $I^\star = \{i_1,i_2,\cdots,i_m\}$ where $$f_r(\val_{i_1}(\sigs)) >_\pi f_r(\val_{i_2}(\sigs)) >_\pi\ldots  >_\pi f_r(\val_{i_m}(\sigs)).$$
    For each $i_\ell\in I^\star$ it must be the case that $c_{i_\ell} = 1$. This is because
    $$f_r(\val_{i_\ell}(\sigs)) >_\pi f_r(\val_{j}(\sigs)) \ge f_r(\low{j}{i})$$ for every $j\notin I^\star$ (thus, for at least $n-m$ many bidders, which implies that $i_\ell$ is a candidate).
    Moreover, we have that
 \[
 \val_{i_\ell}(\sigs) \ge f_r(\val_{i_\ell}(\sigs)) \ge f_r(\val_{\ell}(\sigs)).
 \]
 Therefore, for every $\vals,\sigs,r,\pi$ we have that $$\sum_{i=1}^n c_i \val_i(\sigs) \ge \sum_{i_\ell\in I^\star} \val_{i_\ell}(\sigs) \ge \sum_{\ell=1}^m f_r(\val_{\ell}(\sigs)).$$
 Hence, we have
 \begin{align*}
     \sum_{i=1}^n x_i \cdot \val_i(\sigs) & = \sum_i\frac{\E[c_i]}{\eta}\cdot \val_i(\sigs)
     \ge {\E\left[\sum_{i=1}^n \frac{c_i \val_i(\sigs)}{\eta}\right]}
     \ge \E\left[\sum_{\ell =1}^m \frac{f_r(\val_{\ell}(\sigs))}{\eta}\right]
     \ge \sum_{\ell =1}^m\frac{\val_{\ell}(\sigs)}{\eta\cdot 2\ln 2},
 \end{align*}
 where for the last inequality we recall, from the proof of \Cref{lem:dsb_approx}, that $\E_r[f_r(v)] \ge v/(\eta\cdot 2\ln 2)$. For $\eta = 4(d+1)$ this provides the desired $O(d)$-approximation.
 \end{proof}

We also observe that the same extension can be made to the case the mechanism is oblivious of the value $d$.
 \begin{observation}
     The adjusted \ref{alg:SB} mechanism can be made oblivious to $d$ by losing an additional factor of $2$, by using personalized normalization parameters $\eta_i = 4\max_{j\neq i}(d_j+1)$.
 \end{observation}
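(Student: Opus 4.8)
The plan is to transcribe, nearly verbatim, the single-item argument of \Cref{prop:unknown_d_sb_feasibility} into the $m$-item setting with the adjusted \ref{alg:SB} mechanism. As there, I would augment the elicitation step so that each bidder $i$ also reports the smallest $d_i$ for which $\val_i$ is $d_i$-\selfbounding (hence $d_i\le d$), and I would set the personalized normalization $\eta_i=4\max_{j\neq i}(d_j+1)$. The key point is that $\eta_i$ is a function of $\{d_j\}_{j\neq i}$ only, hence of $\sigs_{-i},\vals_{-i}$ only, and in particular does not depend on bidder $i$'s reported signal or valuation. Since the adjusted indicator $c_i$ is, for every fixed $r$, every fixed $\pi$, and every fixed reports of the other bidders, monotone non-decreasing in $\val_i(\sigs)$ (the same one-line argument as in \Cref{lem:dsb_truthful} and in the proof of \Cref{thm:multi_sb}), the allocation $x_i=\E_{r,\pi}[c_i]/\eta_i$ is monotone non-decreasing in $\val_i(\sigs)$ and otherwise independent of $i$'s private information; \Cref{lem:pricing} then yields an EPIC-IR payment rule. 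The Birkhoff--von Neumann rounding of \Cref{thm:birkhoff} preserves every marginal, so it does not interfere with incentives.

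For feasibility I would split into two cases according to how many bidders attain $d=\max_j d_j$. If at least two bidders attain it, then $\max_{j\neq i}d_j=d$ for every $i$, so all the $\eta_i$ coincide with the known-$d$ normalization, and fractional feasibility $\sum_i x_i\le m$ follows immediately from \Cref{lem:dsb-fractional-feasibility}. If instead a unique bidder $i^\star$ attains the maximum, then every $j\neq i^\star$ still has $\max_{\ell\neq j}d_\ell=d$, so $\eta_j$ again equals the known-$d$ normalization and \Cref{lem:dsb-fractional-feasibility} controls $\sum_{j\neq i^\star}x_j$, while $x_{i^\star}=\E_{r,\pi}[c_{i^\star}]/\eta_{i^\star}\le 1/\eta_{i^\star}\le\tfrac14$ because $\E_{r,\pi}[c_{i^\star}]\le 1$ and $\eta_{i^\star}\ge4$; combining the two estimates gives $\sum_i x_i\le m$, which is exactly where the extra factor of $2$ is spent (either read into the normalization, or taken from the slack in \Cref{lem:dsb-fractional-feasibility}). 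Feeding the resulting fractional vector, which also satisfies $x_i\le1$, into \Cref{thm:birkhoff} produces a randomized allocation of the $m$ items that is ex-post feasible and realizes the marginals $x_i$.

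For the approximation guarantee I would reuse the welfare computation from the proof of \Cref{thm:multi_sb} verbatim: for every realization of $r$ and $\pi$, each of the $m$ bidders with the largest $f_r(\val_i(\sigs))$ (ties broken by $\pi$) is a candidate, so $\sum_i c_i\val_i(\sigs)\ge\sum_{\ell=1}^{m}f_r(\val_\ell(\sigs))$; taking expectations, using $\E_r[f_r(v)]\ge v/(2\ln2)$ and $\eta_i=O(d)$ for every $i$, gives $\sum_i x_i\val_i(\sigs)=\Omega\!\bigl(\tfrac1d\sum_{\ell=1}^{m}\val_\ell(\sigs)\bigr)=\Omega(\opt/d)$, so the ratio worsens relative to the known-$d$ bound by only a constant factor.

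The step I expect to be the main obstacle is the feasibility bookkeeping in the unique-maximizer case: one must verify that the cushion obtained by doubling the normalization (equivalently, the slack in \Cref{lem:dsb-fractional-feasibility}) is just enough to absorb the at-most-$\tfrac14$ mass that the lone under-normalized bidder $i^\star$ receives, keeping the total allocation at most $m$ so that the Birkhoff--von Neumann decomposition applies. Everything else is a routine port of the single-item unknown-$d$ argument.
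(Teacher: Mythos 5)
Your overall plan (port \Cref{prop:unknown_d_sb_feasibility} to the multi-unit setting) is the intended one -- the paper states this observation without proof -- and your truthfulness and approximation paragraphs are fine. The genuine gap is exactly the step you flag but do not resolve: the feasibility bookkeeping in the unique-maximizer case. As you set it up, every bidder $j\neq i^\star$ has $\eta_j$ equal to the \emph{known-$d$} normalization $4(d+1)$, so \Cref{lem:dsb-fractional-feasibility} only gives $\sum_{j\neq i^\star}x_j\le m$, and adding $x_{i^\star}\le 1/4$ yields $m+1/4$, not $m$. Your parenthetical fallback, that the slack in \Cref{lem:dsb-fractional-feasibility} absorbs this, is quantitatively false in general: the proof of that lemma shows $\sum_i \E_{r,\pi}[c_i]\le 4dm+3m$, so the available slack is only $m/(4(d+1))$ in allocation mass, which is smaller than the under-normalized bidder's possible mass (up to $1/\eta_{i^\star}$, i.e.\ as large as $1/8$) whenever $m$ is small compared to $d$, e.g.\ $m=1$ and $d\ge 4$. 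So the version of the argument you actually wrote down does not establish $\sum_i x_i\le m$.

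The fix is the other branch of your parenthetical, carried out explicitly: the ``additional factor of $2$'' must genuinely double the normalization relative to the known-$d$ value $\eta=4(d+1)$ used in \Cref{lem:dsb-fractional-feasibility} (exactly as \Cref{prop:unknown_d_sb_feasibility} doubles $2(d+1)$ to $4(\hat d_{-i}+1)$ in the single-item case). With $\eta_i$ equal to twice the known-$d$ normalization computed from $\max_{j\neq i}d_j$, the two cases close: if the maximum of the $d_j$ is attained by at least two bidders, every $\eta_i=2\eta$ and \Cref{lem:dsb-fractional-feasibility} gives $\sum_i x_i\le m/2$; if it is attained by a unique bidder $i^\star$, then $\eta_j=2\eta$ for all $j\neq i^\star$ gives $\sum_{j\neq i^\star}x_j\le m/2$, while $x_{i^\star}\le 1/\eta_{i^\star}\le 1/2\le m/2$ since $m\ge 1$, so $\sum_i x_i\le m$. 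The key inequality is $1/\eta_{i^\star}\le m/2$, not a slack comparison, and it is what makes the halved budget per case suffice before applying \Cref{thm:birkhoff}. (The formula $\eta_i=4\max_{j\neq i}(d_j+1)$ printed in the observation coincides numerically with the known-$d$ constant, which is presumably an oversight; without the doubling the allocation need not be fractionally feasible, as your own $m+1/4$ estimate shows.)
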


\subsection{Fractional Feasibility}
In this section we show that the adjusted~\ref{alg:SB} provide a fractionally feasible allocation for any multi-unit auction instance. In particular, we show that the allocation probabilities, $x_i$, always sum up to at most $m$. 

\begin{lemma}\label{lem:dsb-fractional-feasibility}
Let $x_i = \E[c_i]/\eta$ be the allocation probability from the \emph{adjusted} \ref{alg:SB} with $\eta = 4(d+1)$. Then the allocation is fractionally feasible, that is,
\[
\sum_{i=1}^n x_i \le m.
\]
\end{lemma}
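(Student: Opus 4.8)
The plan is to mimic the single-item feasibility proof (\Cref{lem:dsb_feasibility}) almost verbatim, tracking the one place where the multi-unit adjustment changes things: a bidder is now a candidate when it beats at least $n-m$ of the other low-estimates, rather than all $n-1$ of them. First I would rename bidders so that $\val_1(\sigs) \ge \cdots \ge \val_n(\sigs)$ and set $k = \max\{i \mid \val_i(\sigs) > \val_1(\sigs)/2\}$ as before. The key structural step is to re-derive the analogue of \Cref{lem:proba-candidate}: for a fixed $r$ and $\pi$, bidder $i$ is a candidate iff $f_r(\val_i(\sigs)) >_\pi f_r(\low{j}{i})$ for at least $n-m$ bidders $j \ne i$, equivalently iff it \emph{fails} this for at most $m-1$ bidders. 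Sorting the other bidders by decreasing $\low{j}{i}$ as $\sigma(1), \dots, \sigma(n-1)$ and defining $\tau_{i,\ell}$ and $t(\pi)$ as in the single-item proof, the event "$i$ beats at least $n-m$ low-estimates, with ties broken by $\pi$" becomes (up to the same factor-of-two bookkeeping for ties) the event $f_r(\val_i(\sigs)) > f_r(\tau_{i,\, \min(t(\pi)+m-1,\,n)})$ — intuitively, $i$ is allowed to lose to the top $m-1$ highest low-estimates and still be a candidate. One then takes expectations over $r$ (using \Cref{lem:rounded-threshold}) and over $\pi$.

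The probability computation for the shifted threshold is the main thing to get right. Under a uniformly random $\pi$, the relevant quantity is $\Pr_\pi[\min(t(\pi)+m-1, n) = \ell]$, i.e. the distribution of $t(\pi)$ shifted by $m-1$. Since $\Pr_\pi[t(\pi) > \ell] = \frac{1}{\ell+1}$, we get $\Pr_\pi[t(\pi)+m-1 \ge \ell] = \frac{1}{\ell-m+2}$ for $\ell \ge m$, and $= 1$ for $\ell < m$. Plugging this into the law of total probability and applying the rearrangement inequality exactly as in \Cref{lem:proba-candidate} (with $\sigma(n)=i$), I expect to obtain a bound of the form
\[
\E_{r,\pi}[c_i] \le \frac{m}{i(i+1)} + \frac{m \cdot \logdag(2\val_i(\sigs)/\low{1}{i})}{k+1} + \sum_{j\in[k]\setminus\{i\}} \frac{m\cdot \logdag(\val_i(\sigs)/\low{j}{i})}{j(j+1)},
\]
i.e. the single-item bound scaled by (roughly) a factor of $m$ — this is the natural guess since having $m$ items should multiply the candidate budget by $m$. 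The cleanest way to see the factor-$m$ scaling is to note $\sum_\ell \frac{1}{\ell(\ell+1)}$-type weights, when shifted by $m-1$, sum to at most $m$ times the original; one should be slightly careful and perhaps just bound $\Pr_\pi[\min(t(\pi)+m-1,n)=\ell] \le \min(1, \frac{m}{\ell(\ell+1)})$-style, then carry this through the rearrangement step.

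Given the per-bidder bound scaled by $m$, the rest is immediate: decompose each $\E_{r,\pi}[c_i]$ via \Cref{obs:logdag-prod} into the harmonic term plus $A_i$ and $B_i$ terms exactly as in \Cref{eq:ci_decomposition}, now each carrying a factor $m$. Then $\sum_i \frac{m}{i(i+1)} = \frac{mn}{n+1} \le m$, and by (the proofs of) \Cref{lem:feasibilityA} and \Cref{lem:feasibilityB}, $\sum_i A_i \le 2md$ and $\sum_i B_i \le m$ — these lemmas only use the $d$-\selfbounding property through \Cref{lem:dsb} and otherwise are linear, so the extra factor $m$ passes straight through. Summing, $\sum_i \E_{r,\pi}[c_i] \le m + 2md + m = 2m(d+1)$, so with $\eta = 4(d+1) \ge 2(d+1)$ we get $\sum_i x_i = \sum_i \E_{r,\pi}[c_i]/\eta \le m$, as claimed.

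The main obstacle I anticipate is the careful re-derivation of the shifted-threshold probability and ensuring the rearrangement-inequality argument still goes through cleanly after the shift by $m-1$: one must confirm that shifting the index inside $\tau_{i,\cdot}$ still corresponds to dropping the $m-1$ largest low-estimates $i$ is permitted to lose to, and that replacing $\tau_{i,\sigma^{-1}(j)}$ by $\low{j}{i}$ (resp. $\low{1}{i}/2$) in the final inequalities remains valid. If the factor that emerges is some $g(m) \ne m$ (e.g. if the shifted harmonic weights only sum to something like $2m$), the conclusion still holds after possibly adjusting $\eta$ by a constant; since the lemma statement fixes $\eta = 4(d+1)$, I would want the bound $\sum_i \E[c_i] \le 4m(d+1)$, which the above easily gives with room to spare, so a slightly lossy constant in $g(m)$ is harmless.
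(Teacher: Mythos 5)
There is a genuine gap, and it sits exactly where you anticipated trouble: the ``shifted-threshold'' reduction is not a valid necessary condition for candidacy in the multi-unit mechanism, and the per-bidder bound you extract from it is in fact false. In the adjusted mechanism, bidder $i$ is allowed to lose \emph{outright} to up to $m-1$ low-estimates, so candidacy no longer implies anything anchored at the \emph{top} low-estimate: neither $f_r(\val_i(\sigs)) > f_r(\low{\sigma(1)}{i}/2)$ (your $\tau_{i,n}$ and the ``$\max$'' inside $\tau_{i,\ell}$) nor beating the threshold at index $t(\pi)+m-1$ is implied by $c_i=1$, because the $m$-th bidder who outranks $i$ in $\pi$ sits at position $t'(\pi) \ge t(\pi)+m-1$ in the sorted order, and the low-estimate there can be strictly smaller than $\low{\sigma(t(\pi)+m-1)}{i}$. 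Since your event is (weakly) harder than the correct necessary event, bounding its probability does not bound $\E[c_i]$. Concretely, your claimed inequality $\E[c_i] \le \frac{m}{i(i+1)} + \frac{m\,\logdag(2\val_i(\sigs)/\low{1}{i})}{k+1} + \sum_{j\in[k]\setminus\{i\}}\frac{m\,\logdag(\val_i(\sigs)/\low{j}{i})}{j(j+1)}$ fails already for $m=2$: take $\val_1$ huge with $\low{1}{i}=\val_1(\sigs)$ for all $i$, and all other lower-estimates tiny; then every bidder $i\ge 2$ is a candidate with probability $1$ (it beats the $n-2$ tiny low-estimates and may simply lose to bidder $1$), yet with $k=1$ (your definition via $\val_1(\sigs)/2$) your right-hand side evaluates to $\frac{2}{i(i+1)}\le \tfrac13$.

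The fix is not a constant-factor adjustment of $\eta$ but a re-anchoring of the whole threshold structure, which is what the paper does: $\tau_{i,n}$ becomes $\low{\sigma(m)}{i}/2$ (the $m$-th highest low-estimate, halved), $\tau_{i,\ell} = \max(\low{\sigma(\ell)}{i}, \low{\sigma(m)}{i}/2)$, and $t(\pi)$ is redefined as the position of the $m$-th bidder outranking $i$, with $\Pr_\pi[t(\pi)=\ell] = \frac{m}{\ell(\ell+1)}$ for $\ell\ge m$ and $\frac{m}{n}$ for $t(\pi)=n$ (\Cref{lem:gen_candidate_prob}); the resulting bound has a plain $\frac{m}{k+1}$ term rather than your $\frac{m\,\logdag(2\val_i(\sigs)/\low{1}{i})}{k+1}$. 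Downstream, the paper also does \emph{not} apply this bound to all bidders with $k$ defined via $\val_1(\sigs)/2$: it sets $k=\max\{i:\val_i(\sigs)>\val_m(\sigs)/2\}$, charges the top $m-1$ bidders trivially ($\le 1$ each), handles the small bidders ($i>k$) by a union bound over the top $m$ low-estimates together with \Cref{lem:dsb} (using $\val_i(\sigs)\le\val_j(\sigs)/2$ for $j\le m$), and runs the $A_i/B_i$ analysis only over the intermediate bidders, where the $B_i$ bound uses $\val_m(\sigs)\le 2\val_k(\sigs)$ rather than the single-item telescoping verbatim. So while your high-level instinct (``the candidate budget scales by $m$, then reuse \Cref{lem:dsb}'') matches the paper, the key per-bidder lemma you rely on does not hold as stated, and the three-way split of bidders is needed to make the summation go through.
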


\begin{proof}
We will show that the expected number of candidates $\sum_{i}\E[c_i]$ is at most ${4}m(d + 1)$. Thus for $\eta = {4}(d +1)$ we have $\sum_i x_i \le m$ as desired.
Wlog we rename the bidders such that $v_1(\sigs) \ge v_2(\sigs) \ge \ldots \ge v_n(\sigs)$. Let $k$ be the number of bidders whose values are larger that $v_m/2$. Similar to the single item settings, we distinguish the bidders as large valued (numbered $1$ through $m-1$), intermediate valued (numbered $m$ through $k$), and small valued (numbered $k+1$ through $n$) for the analysis. 

We first consider the highest $m-1$ bidders. Since the probability that each one of them is a candidate is at most $1$, we get $\sum_{i=1}^{m-1}\E[c_i] \le m-1$. 

Second, we consider the small valued bidders $i\in\{k+1, \dots, n\}$. 
We observe that for each one of them to be a candidate it is necessary that there are at most $m-1$ bidders $j\neq$ i such that $f_r(\low{j}{i}) > f_r(v_i(\sigs))$. Hence, there exists some $j\in [m]$ such that $f_r(v_i(\sigs)) \ge f_r(\low{j}{i}) > f_r(\low{j}{i}/2)$. With this we bound the probability that a small valued bidder $i$ is a candidate as follows,
\begin{align}
    \E_{r,\pi}[c_i] &\le \Pr[\exists j \in [m] \text{ such that } f_r(v_i(\sigs)) > f_r(\low{j}{i}/2)]\notag \\
    &\le \sum_{j=1}^m \logdag\left(\frac{2v_i(\sigs)}{\low{j}{i}}\right) \le 
    \sum_{j=1}^m \logdag\left(\frac{v_j(\sigs)}{\low{j}{i}}\right),
\end{align}
where the last inequality follows by definition of small valued. Using \Cref{lem:dsb}, it follows that $\sum_{i=k+1}^n \E[c_i] \le 2dm$.

Finally, it remains to bound the probability that intermediates bidders $i\in \{m, \dots, k\}$ are candidates.
In \cref{lem:gen_candidate_prob} (below) we show that,
    \[
    \E_{r,\pi}[c_i] \le \frac{m}{i(i+1) } + \frac{m}{k+1} + \sum_{j=1}^k\frac{m}{j(j+1)}\cdot\logdag\left(\frac{v_i(\sigs)}{\low{j}{i}}\right).
    \]
    By \Cref{obs:logdag-prod} we get the following bound,
    \begin{equation}
\E_{r,\pi}[c_i] \leq 
\frac{m}{i(i+1)} + \frac{m}{k+1} 
+\underbrace{\sum_{j=1}^k\frac{m}{j(j+1)}\cdot\logdag\left(\frac{v_j(\sigs)}{\low{j}{i}}\right)}_{A_i}
+\underbrace{\sum_{j=1}^k\frac{m}{j(j+1)}\cdot\logdag\left(\frac{v_i(\sigs)}{v_j(\sigs)}\right)}_{B_i}
\label{eq:multi-decomp}
\end{equation}
{We observe that
\begin{align*}
\sum_{i=m}^{k}\left(\frac{m}{i(i+1)} +\frac{m}{k+1}\right)
&= 1 - \frac{m}{k+1} + m\cdot\frac{k+1-m}{k+1}\le 1 + m.
\end{align*}
We next show that $\sum_{i=m}^k A_i \le 2dm$ (in \Cref{lem:multi_ai_bound}), and $\sum_{i=m}^k B_i \le m$ (in \Cref{lem:multi_bi_bound}), implying that $\sum_{i=m}^k  \E[c_i] \le 2dm+2m+1$.}

Overall, we get $\sum_{i=1}^n \E[c_i] \le 4dm+3m\leq 4(d+1)m$.
\end{proof}

\begin{lemma}\label{lem:multi_ai_bound}
    Given an instance with $d$-\selfbounding valuations, we have that
    \[\sum_{i=m}^k A_i \le 2dm\]
    where $A_i$'s are defined in the proof of \Cref{lem:dsb-fractional-feasibility}.
\end{lemma}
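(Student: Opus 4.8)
The plan is to mirror the structure of the single-item bound (Lemma~\ref{lem:feasibilityA}), but now carrying the extra factor $m$ that appears in the definition of $A_i$ in Equation~\eqref{eq:multi-decomp}. Recall
$$A_i = \sum_{j=1}^k \frac{m}{j(j+1)}\cdot\logdag\!\left(\frac{\val_j(\sigs)}{\low{j}{i}}\right).$$
First I would sum over the relevant range of $i$ and swap the order of summation, pulling the $j$-sum outside:
$$\sum_{i=m}^k A_i \;=\; \sum_{j=1}^k \frac{m}{j(j+1)} \sum_{i=m}^k \logdag\!\left(\frac{\val_j(\sigs)}{\low{j}{i}}\right).$$
Since we are only dropping nonnegative terms by restricting the inner sum from $i=m,\dots,k$ to a subset of all $i\in[n]$, I can bound the inner sum by $\sum_{i=1}^n \logdag(\val_j(\sigs)/\low{j}{i})$, which is at most $2d$ by \Cref{lem:dsb} applied to bidder $j$'s ($d$-\selfbounding) valuation function $\val_j$ — the low estimates $\low{j}{i}=\inf_{o_i}\val_j(o_i,\sigs_{-i})$ are exactly the quantities appearing in the statement of \Cref{lem:dsb}.

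Then the outer sum telescopes in the usual way: $\sum_{j=1}^k \frac{1}{j(j+1)} = 1 - \frac{1}{k+1} \le 1$. Combining, $\sum_{i=m}^k A_i \le m \cdot 2d \cdot \sum_{j=1}^k \frac{1}{j(j+1)} \le 2dm$, as claimed. This is essentially the computation in \Cref{lem:feasibilityA} (where the analogous bound $\sum_i A_i \le 2d$ is obtained), scaled by $m$, except that here there is no separate $\frac{\logdag(\val_1(\sigs)/\low{1}{i})}{k+1}$ term to handle — the $A_i$ in the multi-unit decomposition already collects all the $j\in[k]$ contributions uniformly — so if anything it is slightly simpler.

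I do not expect any real obstacle here; the only points requiring minor care are (i) confirming that the range restriction $i\in\{m,\dots,k\}$ only throws away nonnegative terms so that extending to $i\in[n]$ is legitimate before invoking \Cref{lem:dsb}, and (ii) making sure \Cref{lem:dsb} is being applied to the correct function, namely $\val_j$ (not $\val_i$) for each fixed $j$, so that the hypothesis "$\val_j$ is $d$-\selfbounding" (which holds since $d_j \le d$) is what is used. Both are routine. The genuinely delicate parts of the multi-unit feasibility argument live in \Cref{lem:gen_candidate_prob} and \Cref{lem:multi_bi_bound}, not in this lemma.
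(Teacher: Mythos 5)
Your proposal is correct and matches the paper's proof essentially verbatim: both sum $A_i$ over $i$, swap the order of summation, apply \Cref{lem:dsb} to each fixed bidder $j$'s $d$-\selfbounding valuation to bound $\sum_i \logdag(\val_j(\sigs)/\low{j}{i}) \le 2d$, and finish with $\sum_{j=1}^k \frac{1}{j(j+1)} \le 1$. Your two points of care (extending the $i$-range only adds nonnegative terms; the lemma is applied to $\val_j$, not $\val_i$) are exactly the right checks and are handled implicitly in the paper as well.
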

\begin{proof}
 Recall definition of $A_i$ for large valued bidders $i\in [k]$ from \Cref{eq:multi-decomp},
    \[
    A_i = \sum_{j=1}^k\frac{m}{j(j+1)}\cdot\logdag\left(\frac{v_j(\sigs)}{\low{j}{i}}\right),
    \]
    Hence, summing over $i$ and swapping the summation of $i$ and $j$ we get,
    \begin{align*}
        \sum_{i=1}^n A_i & = \sum_{j=1}^k\sum_{i=1}^n \frac{m}{j(j+1)}\logdag\left(\frac{v_j(\sigs)}{\low{j}{i}}\right)\le \sum_{j=1}^k\frac{m}{j(j+1)} 2d
        = 2dm\cdot\left(1 - \frac{1}{k+1}\right) \le 2dm,
    \end{align*}
    where the inequality follows from \Cref{lem:dsb}.
\end{proof}

\begin{lemma}\label{lem:multi_bi_bound}
    Given an instance, we have that
    $$
    \sum_{i=m}^k B_i \le m
    $$
    where $B_i$'s are defined in the proof of \Cref{lem:dsb-fractional-feasibility}.
\end{lemma}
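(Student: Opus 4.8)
The plan is to mirror the structure of the single-item bound on $\sum_i B_i$ from \Cref{lem:feasibilityB}, tracking the extra multiplicative factor $m$ that appears in the definition of $B_i$ in \Cref{eq:multi-decomp} and the shifted index range $i\in\{m,\dots,k\}$. First I would recall that $k=\max\{i\,|\,\val_i(\sigs)>\val_m(\sigs)/2\}$, so for $i\le k$ and $i\le j\le k$ we have $\val_i(\sigs)/\val_j(\sigs)\le \val_1(\sigs)/\val_k(\sigs)$; but unlike the single-item case this ratio need not be at most $2$, so I cannot simply replace $\logdag$ with $\log_2$ on all terms. The key observation that rescues the argument is that $B_i$ only involves terms $\logdag(\val_i(\sigs)/\val_j(\sigs))$ with $j\le k$, and for the analysis we only sum $B_i$ over $i\ge m$; combined with $\val_i(\sigs)\le \val_m(\sigs)$ for $i\ge m$ and $\val_j(\sigs)>\val_m(\sigs)/2$ for $j\le k$, every relevant ratio $\val_i(\sigs)/\val_j(\sigs)$ satisfies $\val_i(\sigs)/\val_j(\sigs)< \val_m(\sigs)/(\val_m(\sigs)/2)=2$, hence $\log_2(\val_i(\sigs)/\val_j(\sigs))<1$ and we may again write $\logdag=\log_2$ on these terms (terms with $j<m$ that would be problematic vanish because $i\ge m\ge j$ forces $i\ge j$ only when... — more precisely, for $i\ge m$ the inner sum $\sum_{j=1}^k$ can be split, and for $j<i$ the term is nonpositive so $\logdag$ of it is $0$, leaving only $m\le j\le ?$; I would instead just note that for any $i\ge m$ and any $j\le k$, if $j<i$ the argument is $<1$ and if $j\ge i$ then $j\ge m$ too, and in both sub-cases the ratio is $<2$).

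With $\logdag$ replaced by $\log_2$ I would compute, exactly paralleling \Cref{lem:feasibilityB}:
\begin{align*}
\sum_{i=m}^k B_i
&= m\sum_{i=m}^k\sum_{j=1}^k \frac{\log_2(\val_i(\sigs))-\log_2(\val_j(\sigs))}{j(j+1)},
\end{align*}
then swap the order of summation and use the telescoping identity $\sum_{j=i}^k \frac1{j(j+1)} = \frac1i-\frac1{k+1}$ together with $\sum_{i=j}^k \frac{\log_2\val_i(\sigs)}{i(i+1)}$, just as in the single-item proof, carefully handling the index shift so that the outer sum starts at $m$ rather than $1$. The telescoping should again collapse the $\log_2\val_j(\sigs)$ coefficients into something of the form $\frac{1}{i(i+1)}-\frac{1}{k+1}$ (possibly with boundary corrections from the $i=m$ endpoint), and after combining with the isolated $\frac{m}{k+1}$ terms that were already pulled out in \Cref{eq:multi-decomp}, the surviving expression should be bounded by $m\cdot\frac{k}{k+1}\le m$, using $\val_i(\sigs)\le\val_1(\sigs)$ and $\val_1(\sigs)>\val_m(\sigs)/2$ to sign the leftover $\log$ terms exactly as the single-item argument does.

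The main obstacle I anticipate is the index bookkeeping at the lower endpoint $i=m$: in \Cref{lem:feasibilityB} the sum runs from $i=1$, which makes the telescoping identity $\sum_{j=i}^k\frac1{j(j+1)}=\frac1i-\frac1{k+1}$ align cleanly with the sum $\sum_{i=j}^k$; here the outer index is truncated to start at $m$, so after swapping summation order the inner index ranges need careful restriction (e.g., $\sum_{i=\max(m,j)}^k$), and I need to verify that the terms with $j<m$ genuinely drop out or contribute nonpositively rather than being silently ignored. A secondary subtlety is confirming the "$<2$" bound rigorously for every $(i,j)$ pair with $i\in\{m,\dots,k\}$ and $j\in\{1,\dots,k\}$ appearing with a positive coefficient — in particular that no term with $j<m$ and a strictly positive $\log_2(\val_i(\sigs)/\val_j(\sigs))$ survives, since for such $j$ one would have $\val_j(\sigs)\ge\val_m(\sigs)\ge\val_i(\sigs)$, making the logarithm nonpositive and hence $\logdag$ of it zero. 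Once these two points are nailed down, the rest is the same telescoping computation as \Cref{lem:feasibilityB} scaled by $m$.
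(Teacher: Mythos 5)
Your setup is correct and matches the paper's: for $i\ge m$ and $j\le k$ the only surviving terms have $j\ge i$ (for $j<i$ the ratio is at most $1$, so $\logdag$ gives $0$), and for those $\val_i(\sigs)\le\val_m(\sigs)<2\val_j(\sigs)$, so $\logdag$ can be replaced by $\log_2$; the swap-and-telescope plan is also the paper's. (Minor slip: your displayed identity with $\sum_{j=1}^k$ is not an equality -- the inner sum must be $\sum_{j=i}^k$, as your prose acknowledges.)

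The genuine gap is the concluding bound. First, the truncation at $i=m$ is not a mere boundary correction: after swapping, for fixed $j$ the inner index runs over $i\in\{m,\dots,j\}$, i.e.\ $j-m+1$ terms, so the telescoped coefficient of $\log_2\val_i(\sigs)$ is $\frac{m^2}{i(i+1)}-\frac{m}{k+1}$ (note the $m^2$), not $m\bigl(\frac{1}{i(i+1)}-\frac{1}{k+1}\bigr)$. More importantly, the single-item cancellation you plan to reuse is unavailable here: in \Cref{lem:feasibilityB} the term $\frac{\logdag(2\val_i(\sigs)/\val_1(\sigs))}{k+1}$ sits \emph{inside} $B_i$ and supplies the $-\log_2(\val_1(\sigs))\cdot\frac{k}{k+1}$ that exactly cancels $\sum_i\frac{\log_2\val_i(\sigs)}{i(i+1)}$ after bounding by $\val_1(\sigs)$. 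In \Cref{eq:multi-decomp} the analogous quantity was already bounded by the constant $\frac{m}{k+1}$ before $B_i$ was defined; it carries no $\log$ content, is not part of $B_i$, and the statement you must prove concerns $\sum_{i=m}^k B_i$ alone. Bounding the leftover $\sum_{i=m}^k\log_2(\val_i(\sigs))\bigl(\frac{m^2}{i(i+1)}-\frac{m}{k+1}\bigr)$ via $\val_i(\sigs)\le\val_1(\sigs)$, as you propose, fails because (as you yourself observe) $\val_1(\sigs)/\val_k(\sigs)$ is unbounded, and ``$\val_1(\sigs)>\val_m(\sigs)/2$'' is vacuous. The paper's final step is different: bound $\log_2\val_i(\sigs)$ above by $\log_2\val_m(\sigs)$ against the positive pieces $\frac{m^2}{i(i+1)}$ and below by $\log_2\val_k(\sigs)$ against the $-\frac{m}{k+1}$ pieces; since both coefficient masses equal $m\bigl(1-\frac{m}{k+1}\bigr)$, the leftover is at most $m\bigl(1-\frac{m}{k+1}\bigr)\log_2\bigl(\val_m(\sigs)/\val_k(\sigs)\bigr)\le m$, using $\val_m(\sigs)\le 2\val_k(\sigs)$ from the definition of $k$. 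That ratio bound between $\val_m$ and $\val_k$ -- not anything involving $\val_1$ -- is the missing ingredient.
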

\begin{proof}
We first re-write the sum by first observing that for all $j<i$ we have $\logdag(v_i(\sigs)/v_j(\sigs))$ is $0$ and group together all terms corresponding to any bidder $i$.
    \begin{align*}
    \sum_{i=m}^k\sum_{j= 1}^k
   \frac{m}{j(j+1)}\cdot\logdag\left(\frac{v_i(\sigs)}{v_j(\sigs)}\right)
    &\le\sum_{i=m}^k\sum_{j= i}^k
   \frac{m}{j(j+1)}\cdot(\log_2(\val_i(\sigs))-\log_2(\val_j(\sigs))) \\
   &=  \sum_{i=m}^k\log_2(\val_i(\sigs))\sum_{j=i}^k \frac{m}{j(j+1)} - \sum_{j=m}^k\sum_{i=j}^k \frac{m}{i(i+1)}\cdot\log_2(\val_i(\sigs))\notag\\
    &= \sum_{i=m}^k\log_2(\val_i(\sigs))\cdot \left(\frac{m}{i}-\frac{m}{k+1}\right) -\sum_{i=m}^k   \frac{(i-m+1) m}{i(i+1)}\cdot\log_2(\val_i(\sigs))\\
    &= \sum_{i=m}^k\log_2(\val_i(\sigs))\cdot \left(\frac{m}{i}-\frac{m}{k+1}-\frac{m}{i} + \frac{m^2}{i(i+1)}\right)\\
    &= \sum_{i=m}^k\log_2(\val_i(\sigs))\cdot\left(\frac{m^2}{i(i+1)}-\frac{m}{k+1}\right)
    \end{align*}
We then upper bound (and lower bound) all $v_i(\sigs)$ by $v_m(\sigs)$ (and $v_k(\sigs)$ respectively) for all large valued bidders $i\ge m$, and note that $v_m(\sigs)\le 2v_k(\sigs)$ in order to obtain the desired inequality.  
    \begin{align*}
    \sum_{i=m}^k\sum_{j= 1}^k
   \frac{m}{j(j+1)}\cdot\logdag\left(\frac{v_i(\sigs)}{v_j(\sigs)}\right)
   &\le \sum_{i=m}^k\log_2(\val_i(\sigs))\cdot\left(\frac{m^2}{i(i+1)}-\frac{m}{k+1}\right)\\
    &\le \sum_{i=m}^k\log_2(\val_m(\sigs))\cdot \frac{m^2}{i(i+1)} - \sum_{i=m}^k\log_2(\val_k(\sigs))\cdot\frac{m}{k+1}\\
    &= \log_2(\val_m(\sigs))\cdot\left(\frac{m^2}{m} - \frac{m^2}{k+1}\right) - \log_2(\val_k(\sigs))\cdot\frac{m}{k+1}\cdot(k+1-m)\\
    &= \log_2(\val_m(\sigs))\cdot m\left(1 - \frac{m}{k+1}\right) - \log_2(\val_k(\sigs))\cdot m\left(1 - \frac{m}{k+1}\right) \\
    &= m\cdot (\log_2(\val_m(\sigs)) - \log_2(\val_k(\sigs)))\cdot\left(1 - \frac{m}{k+1}\right) \\
    &\le m
   \end{align*}
\end{proof}
    
\begin{lemma}\label{lem:gen_candidate_prob}
      For any $i$, the probability that $i$ is a candidate  (i.e., $c_i = 1$) is \[
    \E_{r,\pi}[c_i] \le \frac{m}{i(i+1)} + \frac{m}{k+1}+ \sum_{j\in [k] \setminus i} \frac{m}{j(j+1)}\cdot \left(\frac{v_i(\sigs)}{\low{j}{i}}\right).
    \]
\end{lemma}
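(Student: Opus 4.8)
The plan is to mirror the argument of \Cref{lem:proba-candidate}, adapting it to the fact that in the multi-unit setting bidder $i$ is a candidate when $f_r(\val_i(\sigs)) >_\pi f_r(\low{j}{i})$ holds for at least $n-m$ of the other bidders $j$, equivalently, when it \emph{fails} for at most $m-1$ of them. First I would fix the random draws $r$ and $\pi$ and, exactly as before, reduce the lexicographic condition ``$f_r(\val_i(\sigs)) >_\pi f_r(\low{j}{i})$'' to a plain inequality of the form $f_r(\val_i(\sigs)) > f_r(\tau)$ for an appropriate threshold depending on whether $\pi$ ranks $j$ above or below $i$; concretely a comparison against $\low{j}{i}$ or against $\low{j}{i}/2$. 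Then I sort the other bidders by decreasing low-estimates, $\low{\sigma(1)}{i}\ge\low{\sigma(2)}{i}\ge\cdots$, and observe that $c_i=1$ iff the condition fails for at most $m-1$ values of $j$. Since the thresholds are monotone in the sorted order, the number of failures is governed by how early $i$ appears relative to $\sigma(1),\dots,\sigma(\ell)$ in $\pi$: define $t(\pi)$ to be the smallest $\ell$ such that $i$ is ranked below at least $m$ of $\sigma(1),\dots,\sigma(\ell)$ (and $t(\pi)=n$ if this never happens). Then $c_i=1$ iff $f_r(\val_i(\sigs)) > f_r(\tau_{i,t(\pi)})$ for a suitable $\tau_{i,\ell}$ (with $\tau_{i,n}=\low{\sigma(1)}{i}/2$), so by \Cref{lem:rounded-threshold}, conditioned on $t(\pi)=\ell$, $\Pr_r[c_i=1]=\logdag(\val_i(\sigs)/\tau_{i,\ell})$.

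Next I would compute $\Pr_\pi[t(\pi)=\ell]$. In the single-item case $\Pr_\pi[t(\pi)>\ell]=\frac{1}{\ell+1}$ because $i$ must precede all of $\sigma(1),\dots,\sigma(\ell)$; here $t(\pi)>\ell$ means $i$ precedes all but at most $m-1$ of $\sigma(1),\dots,\sigma(\ell)$, i.e., $i$ is among the first $m$ of the set $\{i,\sigma(1),\dots,\sigma(\ell)\}$, which has probability $\min(1,\tfrac{m}{\ell+1})$. Hence $\Pr_\pi[t(\pi)=\ell]=\min(1,\tfrac{m}{\ell})-\min(1,\tfrac{m}{\ell+1})$, which is $0$ for $\ell<m$ and equals $\tfrac{m}{\ell(\ell+1)}$ for $\ell\ge m$. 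Plugging in,
\[
\E_{r,\pi}[c_i] = \sum_{\ell=m}^{n-1}\frac{m}{\ell(\ell+1)}\,\logdag\!\left(\frac{\val_i(\sigs)}{\tau_{i,\ell}}\right) + \Pr_\pi[t(\pi)=n]\cdot\logdag\!\left(\frac{\val_i(\sigs)}{\tau_{i,n}}\right),
\]
and I would bound the last coefficient by $\tfrac{m}{n}$. Then, as in \Cref{lem:proba-candidate}, I apply the rearrangement inequality to match the decreasing weights $\tfrac{m}{j(j+1)}$ with the thresholds in the order induced by naming $\sigma(n)=i$, use $\tau_{i,\sigma^{-1}(j)}\ge\low{j}{i}$ and truncate the sum at $j=k$ (legitimate since $\tau_{i,n}\le\tau_{i,\sigma^{-1}(j)}$ and the $\logdag$ terms are nonnegative), isolate the $j=i$ term as $\tfrac{m}{i(i+1)}$, and absorb the $\tau_{i,n}$ term into $\tfrac{m}{k+1}$ using $\low{1}{i}/2\le\tau_{i,n}$ together with \Cref{obs:logdag-prod}. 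This yields exactly the claimed bound, with the final sum over $j\in[k]\setminus\{i\}$ of $\tfrac{m}{j(j+1)}\logdag(\val_i(\sigs)/\low{j}{i})$ (the statement as typeset omits the $\logdag$, which I take to be a typo).

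The main obstacle is getting the combinatorics of $t(\pi)$ right: verifying that ``$c_i=1$'' really is equivalent to ``$i$ precedes all but at most $m-1$ of the top-$\ell$ low-estimate bidders for $\ell$ up to $t(\pi)-1$'', and that the relevant threshold is indeed $\tau_{i,t(\pi)}$ rather than something depending on which specific $m-1$ bidders beat $i$. The key point making this work is that the thresholds $\tau_{i,\ell}$ are nonincreasing in $\ell$, so the $m-1$ ``most dangerous'' competitors (those with the highest low-estimates) are always a prefix of $\sigma(1),\sigma(2),\dots$; the only subtlety is the $\low{j}{i}$ versus $\low{j}{i}/2$ dichotomy, which is handled exactly as in \Cref{lem:proba-candidate} by noting that $f_r(\val_i(\sigs))\ge f_r(\low{\sigma(1)}{i}) > f_r(\low{\sigma(1)}{i}/2)$ whenever $i$ survives at all. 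Everything else is the same bookkeeping — rearrangement inequality and telescoping — as in the single-item proof.
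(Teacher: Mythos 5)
Your plan follows the same architecture as the paper's proof (reduce the lexicographic condition to a threshold comparison, define $t(\pi)$ as the index of the $m$-th bidder with a higher rank than $i$ among the bidders sorted by lower estimates, compute $\Pr_\pi[t(\pi)=\ell]=\frac{m}{\ell(\ell+1)}$, then apply the rearrangement inequality and truncate at $k$), and your reading of the missing $\logdag$ in the statement as a typo is right. However, there is a genuine gap in the one place where the multi-unit setting differs from the single-item one: the backstop threshold. You set $\tau_{i,n}=\low{\sigma(1)}{i}/2$ and justify it by asserting that $f_r(\val_i(\sigs))\ge f_r(\low{\sigma(1)}{i})$ ``whenever $i$ survives at all.'' That is true for a single item, where a candidate must (weakly) beat every other bidder, but false here: in the adjusted mechanism $i$ only needs to beat $n-m$ others, so a candidate may lose outright to up to $m-1$ bidders, including $\sigma(1)$. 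For example, with $m=2$, $\low{\sigma(1)}{i}$ huge and all other lower estimates tiny, bidder $i$ is a candidate for every $r$, yet $f_r(\val_i(\sigs))<f_r(\low{\sigma(1)}{i}/2)$ always; thus the implication $c_i=1\Rightarrow f_r(\val_i(\sigs))>f_r(\tau_{i,t(\pi)})$ --- the only direction your upper bound uses --- fails with your thresholds, and your expression for $\E_{r,\pi}[c_i]$ is not an upper bound. The paper's proof hinges on exactly this point: the constraint ``at most $m-1$ strict losses'' only forces $f_r(\val_i(\sigs))\ge f_r(\low{\sigma(m)}{i})$, so the correct backstop is $\low{\sigma(m)}{i}/2$, i.e.\ $\tau_{i,\ell}=\max(\low{\sigma(\ell)}{i},\,\low{\sigma(m)}{i}/2)$ for $\ell\le n-1$ and $\tau_{i,n}=\low{\sigma(m)}{i}/2$ (the $\low{\sigma(\ell)}{i}$ part is justified because at $t(\pi)=\ell$ there are $m$ higher-ranked bidders among $\sigma(1),\dots,\sigma(\ell)$ and $i$ must strictly beat at least one of them). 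With that correction, the rest of your bookkeeping (distribution of $t(\pi)$, rearrangement, $\low{j}{i}\le\tau_{i,\sigma^{-1}(j)}$, truncation at $k$, bounding the $\tau_{i,n}$ term) goes through as in the paper.

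A secondary, smaller issue: even with the correct thresholds, the condition $f_r(\val_i(\sigs))>f_r(\tau_{i,t(\pi)})$ is only \emph{necessary} for $c_i=1$, not equivalent to it (a lower-ranked bidder $\sigma(\ell')$ with $\ell'<m$ can still strictly beat $\val_i(\sigs)$ after rounding and, together with the $m-1$ higher-ranked losses, disqualify $i$). So your ``iff'' and the equality $\Pr_r[c_i=1\mid t(\pi)=\ell]=\logdag(\val_i(\sigs)/\tau_{i,\ell})$ must be weakened to ``$\le$'', which is what the paper does and is all the lemma needs; this by itself would be harmless, but as written it compounds the threshold error above.
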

\begin{proof}
    For any choice of $r$ and $\pi$, we observe that the following conditions are all necessary for bidder $i$ to be a candidate in the adjusted RCF mechanism:
    \begin{enumerate}
        \item there exists at most $m-1$ bidders $j\neq i$ such that $f_r(\low{j}{i}) > f_r(v_i(\sigs))$, and
        \item if there are at least $m$ bidders $j$ with $\pi (j) > \pi(i)$, then there are at most $m-1$ of these bidders $j$ such that $f_r(\low{j}{i}) \ge f_r(v_i(\sigs))$.
    \end{enumerate}
    To simplify these conditions we introduce the following notations. We first order all the other bidder $j\neq i$ in decreasing order of $\low{j}{i}$ as follows \[\low{\sigma(1)}{i} \ge \low{\sigma(2)}{i} \ge \ldots \ge \low{\sigma(n-1)}{i}.\]
    We define the following ``critical thresholds'', 
    \[\tau_{i,\ell} = \max (\low{\sigma(\ell)}{i}), \low{\sigma(m)}{i})/2)~\forall \ell \le n-1 \quad \text{and} \quad \tau_{i,n} = \low{\sigma(m)}{i})/2.\]  
    Finally, for any permutation $\pi$, we define $t(\pi) = n$ if $\pi (i) > n-m$, otherwise we define $t(\pi) = \ell$ such that $\pi(\sigma(\ell)) > \pi (i)$ and there exists exactly $m-1$ many $\ell' < \ell$ such that $\pi(\sigma(\ell')) > \pi(i)$. 
    
    Therefore, the necessary conditions for $i$ to be a candidate can be simplified as,
    \[
    c_i =1 \implies f_r(v_i(\sigs)) > f_r(\tau_{i,t(\pi)}).
    \]
    Hence we have,
    \begin{align}
            \E_{r,\pi}[c_i] & =  \sum_{\ell=m}^{n} \Pr_\pi[t(\pi) = \ell]\cdot \E_{r,\pi}[c_i|t(\pi) =\ell]  \notag\\
            & \le \sum_{\ell=m}^{n} \Pr_\pi[t(\pi) = \ell]\cdot \Pr_r[f_r(v_i(\sigs)) > f_r(\tau_{i,\ell})] \label{eq:candidate-prob-decomposition}
    \end{align}
    Observe that $t(\pi) = n$ is the event that $i$ has top $m$ rank according to $\pi$, which happens with probability $m/n$. Moreover, for each $\ell \in \{ m, m+1,\ldots, n-1\}$, $t(\pi) = \ell $ is the event that $i$ is exactly ranked $m+1$ amongst the $\ell + 1$ bidders $\{i, \sigma(1),\sigma(2),\ldots,\sigma(\ell)\}$ and $\sigma(\ell)$ is in the top $m$ rank amongst the other $\ell$ bidders. Thus we have,
    \begin{align}
    \Pr[t(\pi) = \ell] = \frac{1}{\ell +1}\cdot \frac{m}{\ell} \quad \text{and} \quad  \Pr[t(\pi) = n] = \frac{m}{n} \notag
   \end{align}
Hence, plugging this into \Cref{eq:candidate-prob-decomposition} and using \Cref{lem:rounded-threshold} we get,
   \begin{align*}
            \E_{r,\pi}[c_i] 
            & \le \sum_{\ell=m}^{n} \Pr_\pi[t(\pi) = \ell]\cdot \logdag(v_i(\sigs)/\tau_{i,\ell})\\
            & \le \frac{m}{n}\cdot \logdag (v_i(\sigs)/\tau_{i,n}) + \sum_{\ell=m}^{n-1} \frac{m}{\ell(\ell+1)}\logdag(v_i(\sigs)/\tau_{i,\ell}) \\
            &= \frac{m}{n+1}\cdot \logdag (v_i(\sigs)/\tau_{i,n}) + \sum_{\ell=m}^{n} \frac{m}{\ell(\ell+1)}\logdag(v_i(\sigs)/\tau_{i,\ell}) &\left(\frac{m}{n} = \frac{m}{n+1} + \frac{m}{n(n+1)}\right)\\
            &\le \frac{m}{n+1}\cdot \logdag (v_i(\sigs)/\tau_{i,n}) + \sum_{\ell=1}^{n} \frac{m}{\ell(\ell+1)}\logdag(v_i(\sigs)/\tau_{i,\ell})
            \end{align*}
    Because $\sigma$ orders the bidders in decreasing order of lower estimates, we have that $\logdag(v_i(\sigs)/\tau_{i,\ell})$ are increasing in $\ell$. Moreover, since $1/(\ell(\ell+1))$ are decreasing in $\ell$, by the rearrangement inequality we have
            \begin{align*}
            \E_{r,\pi}[c_i] 
            & \le \frac{m}{n+1}\cdot \logdag  \left(\frac{v_i(\sigs)}{\tau_{i,n}}\right) + \sum_{j\neq i} \frac{m}{j(j+1)}\logdag\left(\frac{v_i(\sigs)}{\max(\low{j}{i},\tau_{i,n})}\right) + \frac{m}{(i(i+1))}{\logdag \left(\frac{v_i(\sigs)}{\tau_{i,n}}\right)},
            \end{align*}
where we reshuffle the $\logdag(v_i(\sigs)/\tau_{i,\ell}))$ terms and recall by definition  $\tau_{i,\ell} = \max(\low{\sigma(\ell)}{i},\tau_{i,n})$.

Next, we bound the $\logdag$ terms by using  $\tau_{i,n}$ for $j > k$ and $\low{j}{i}$ for $j\le k$  to obtain,
            \begin{eqnarray*}
            \E_{r,\pi}[c_i] &\le \left(\frac{m}{n+1}+\frac{m}{i(i+1)}\right)\cdot \logdag \left(\frac{v_i(\sigs)}{\tau_{i,n}}\right) &+ \sum_{j\in [k]\setminus \{i\}} \frac{m}{j(j+1)}\logdag\left(\frac{v_i(\sigs)}{\low{j}{i}}\right) \\
            & &+ \sum_{\substack{k<j\le n \\j\neq i}} \frac{m}{j(j+1)}\cdot \logdag  \left(\frac{v_i(\sigs)}{\tau_{i,n}}\right) \\
            &\le \left(\frac{m}{k+1}+\frac{m}{i(i+1)}\right)\cdot \logdag \left(\frac{v_i(\sigs)}{\tau_{i,n}}\right) &+ \sum_{j\in [k]\setminus \{i\}}\frac{m}{j(j+1)}\logdag\left(\frac{v_i(\sigs)}{\low{j}{i}}\right) 
            \end{eqnarray*}
\end{proof}

\subsection{Ex-post Feasibility} 
Finally, using a folklore randomized rounding procedure, that follows from Birkhoff decomposition, we obtain an ex-post feasible allocation where at most $m$ items are allocated while preserving the marginal probability of allocation for each bidder.

{Birkhoff decomposition states that doubly stochastic matrices (square matrices, with each row/columns summing to 1) are convex combinations of permutations matrices (with exactly one 1 per row/column). \Cref{thm:birkhoff} is a folklore generalization of Birkhoff decomposition, which we use to turn the probability vector $(x_1, \dots, x_n)\in [0,1]^n$ such that $\sum_{i=1}^n x_i \leq m$ into a distribution over feasible allocations such that each bidder $i$ either receives no items or a single item, and the marginal probability of receiving an item is exactly~$x_i$.}

\begin{proposition}
    \label{thm:birkhoff}
    Let $\mathcal M$ be the set of $n\times m$ matrices with non-negative values, such that each row and each column sums to at most 1. Any matrix in $\mathcal M$ can be decomposed (in polynomial time) {into} a convex combination of matrices from $\mathcal M$ with $\{0,1\}$ coefficients.
\end{proposition}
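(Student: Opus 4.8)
The plan is to prove the statement as a direct generalization of Birkhoff--von Neumann, reducing to the square doubly-stochastic case by padding. First I would pad the given $n \times m$ matrix $M \in \mathcal M$ into a larger square matrix. Concretely, let $N = n + m$ and form the $N \times N$ matrix $\widehat M = \begin{pmatrix} M & R \\ C & D\end{pmatrix}$, where $R$ is $n \times n$, $C$ is $m \times m$, and $D$ is $m \times n$, with the entries of $R, C, D$ chosen as nonnegative slacks so that every row and every column of $\widehat M$ sums to \emph{exactly} 1. This is always possible: the row deficiencies of $M$ (i.e., $1 - \sum_j M_{ij}$ for each row $i$) and the column deficiencies sum to the same total, namely $n + m - \sum_{i,j} M_{ij}$, so the deficiencies can be realized as the row/column sums of a suitable nonnegative block (e.g.\ build it greedily, or take an outer-product-type fill and correct it). After padding, $\widehat M$ is doubly stochastic.

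Next I would apply the classical Birkhoff--von Neumann theorem to $\widehat M$: it is a convex combination $\widehat M = \sum_k \lambda_k P_k$ of $N \times N$ permutation matrices $P_k$, with $\lambda_k \geq 0$ and $\sum_k \lambda_k = 1$, and this decomposition is computable in polynomial time (e.g.\ by repeatedly extracting a permutation supported on the current nonzero pattern via a perfect matching in the associated bipartite graph, which terminates in polynomially many steps since each extraction zeroes out at least one entry). Then I would restrict each $P_k$ back to its top-left $n \times m$ block, call it $Q_k$. Each $Q_k$ is a $\{0,1\}$ matrix which, being a submatrix of a permutation matrix, has at most one 1 per row and at most one 1 per column; hence $Q_k \in \mathcal M$. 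Restriction is linear, so $M = \sum_k \lambda_k Q_k$, giving the desired convex combination of $\{0,1\}$ matrices from $\mathcal M$. Polynomial running time is inherited from the padding step (trivial) and the Birkhoff decomposition step.

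The main obstacle — really the only point requiring care — is the padding step: one must verify that a nonnegative fill making $\widehat M$ exactly doubly stochastic always exists and can be found in polynomial time. This follows from the matching total deficiency observed above together with a standard flow/transportation argument (the problem of filling in the off-diagonal blocks with prescribed nonnegative row and column marginals is feasible exactly when the marginals have equal total, by max-flow min-cut), so it is routine; I would state it as a short lemma or inline observation rather than belaboring it. Everything else is a black-box invocation of Birkhoff--von Neumann plus the elementary observation that a submatrix of a permutation matrix lies in $\mathcal M$.
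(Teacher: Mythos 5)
Your argument is correct, but it takes a genuinely different route from the paper. You reduce to the classical square case: pad $M$ to an $(n+m)\times(n+m)$ doubly stochastic matrix, invoke Birkhoff--von Neumann as a black box, and project each permutation matrix back to its top-left $n\times m$ block, which is a $\{0,1\}$ matrix in $\mathcal M$; linearity of the projection then gives the convex combination. The paper instead works directly on the rectangular substochastic matrix: it views $M$ as a weighted bipartite graph, uses a K\H{o}nig-type lemma (every weighted bipartite graph has a matching covering all maximum-degree vertices) to peel off a scaled matching at each step so that the maximum degree strictly decreases, and finally adds the empty matching with coefficient $1-\Delta(M)$ to complete the convex combination. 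Your reduction buys a one-line appeal to a classical theorem; the paper's direct argument avoids constructing the padded matrix and its lemma is reused conceptually nowhere else, so the two are comparable in length. One caveat on your padding step: as written, the "equal total deficiency plus max-flow" justification is imprecise --- the row deficiencies of $M$ alone total $n-\sum_{i,j}M_{ij}$ while the column deficiencies total $m-\sum_{i,j}M_{ij}$ (equal only when $n=m$), and the fill must avoid the already-occupied top-left block, so the plain transportation feasibility criterion does not apply verbatim. This is easily repaired by an explicit construction, e.g.\ $\widehat M=\left(\begin{smallmatrix} M & \mathrm{diag}(r)\\ \mathrm{diag}(c) & M^{\top}\end{smallmatrix}\right)$ with $r_i=1-\sum_j M_{ij}$ and $c_j=1-\sum_i M_{ij}$, which is doubly stochastic; with that fix your proof is complete and runs in polynomial time as claimed.
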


To prove~\Cref{thm:birkhoff}, we use the following folklore generalization of Kőnig's line coloring theorem~\cite{lovasz1986matching}.

\begin{proposition}
    \label{lem:konig}
    Given a weighted bipartite graph (positive edge weights) with at least one edge, there is a matching which covers all maximum-degree vertices (sum of weights of incident edges).
\end{proposition}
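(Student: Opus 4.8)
\textbf{Proof plan for Proposition~\ref{lem:konig}.}

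The plan is to prove the statement by induction on the number of edges, using the standard alternating-path argument behind Kőnig's theorem adapted to the weighted setting. Let $G$ be a weighted bipartite graph with at least one edge, let $\Delta$ denote its maximum weighted degree, and let $T$ be the set of vertices whose weighted degree equals $\Delta$. I want to produce a matching $M$ (a set of pairwise non-adjacent edges) that covers every vertex of $T$.

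First I would set up the induction: pick any edge $e=(u,v)$ incident to a maximum-degree vertex — at least one endpoint of some edge lies in $T$, since the total weight is concentrated enough; more carefully, every vertex of $T$ is incident to an edge, so choose $e$ incident to some $w\in T$. Remove from $G$ a small amount of weight $\delta>0$ along $e$ (or delete $e$ entirely if one wants an unweighted reduction — but since weights are positive reals it is cleaner to think combinatorially: contract to the case of integer weights by scaling, or just run the classical argument on the multigraph obtained by replacing each edge of integer weight $k$ by $k$ parallel edges). In the integer/multigraph formulation, the statement becomes: a bipartite multigraph has a matching covering all maximum-degree vertices, which is exactly the classical corollary of Kőnig's edge-coloring theorem (the chromatic index of a bipartite multigraph equals its maximum degree $\Delta$, so the edge set decomposes into $\Delta$ matchings, and a vertex of degree $\Delta$ must be covered by every one of them — in particular by at least one). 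So the cleanest route is: (i) reduce to rational weights by a density/continuity argument, then clear denominators to integer weights; (ii) pass to the associated bipartite multigraph; (iii) invoke Kőnig's edge-coloring theorem to get a proper $\Delta$-edge-coloring; (iv) take any single color class as the matching $M$, and observe it covers all vertices of weighted degree $\Delta$.

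The step I expect to be the main obstacle — or at least the one requiring the most care — is handling genuinely real (irrational) edge weights, since Kőnig's edge-coloring theorem is a statement about multigraphs. The fix is a compactness/perturbation argument: the set of matchings of $G$ is finite, and for each vertex $t$ the condition ``$t$ is covered'' is a combinatorial (not metric) condition, so it suffices to show that for weights in a dense subset (the rationals) the desired matching exists and that the set $T$ of max-degree vertices is ``lower semicontinuous'' enough — more simply, one can just perturb weights slightly so that they become rational while keeping the identity of $T$ unchanged (a max-degree vertex stays a max-degree vertex under a sufficiently small perturbation that scales all its incident weights up uniformly, and non-max vertices stay strictly below). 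Alternatively, and most economically, one proves the weighted statement directly by induction on the number of edges via an augmenting-path argument: if $M$ is a maximum matching not covering some $t\in T$, build an alternating tree from $t$; because $t$ has the (weakly) largest weighted degree and is unmatched while its neighbors along alternating paths are matched, a counting argument on weighted degrees along the tree yields an augmenting path, contradicting maximality. Either way the core content is classical; the write-up mainly needs to state the reduction to the multigraph case cleanly and cite Kőnig's line-coloring theorem~\cite{lovasz1986matching}.
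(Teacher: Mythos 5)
The paper never proves \Cref{lem:konig} at all --- it is invoked as a folklore generalization of K\H{o}nig's line-coloring theorem and cited to \cite{lovasz1986matching} --- and your primary route (clear denominators, replace an edge of integer weight $k$ by $k$ parallel edges, take one colour class of a proper $\Delta$-edge-colouring of the bipartite multigraph) is exactly that standard derivation; it is sound for rational weights, since a colour class contains at most one copy of each parallel class and hence projects to a matching of the original graph, and a vertex of maximum degree meets every colour class. The step that does not hold up as written is the passage from real to rational weights. The perturbation you describe --- uniformly scaling the edges incident to maximum-degree vertices --- preserves the ties among the vertices of $T$ but does not make the weights rational, while an arbitrary small rounding to rationals can break those ties, in which case the matching obtained for the perturbed instance is only guaranteed to cover a proper subset of $T$. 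To repair this you need, e.g., the observation that the tie constraints are integer-coefficient linear equations, so rational points are dense in the affine subspace they cut out and one can choose a nearby rational weight vector keeping all degrees in $T$ equal and strictly above the rest; or, cleaner, prove the statement for real weights directly: for any set $S$ of maximum-degree vertices on one side, $\Delta|S|=\sum_{u\in S}\deg_w(u)\le\sum_{v\in N(S)}\deg_w(v)\le\Delta|N(S)|$, so Hall's theorem gives a matching saturating the maximum-degree vertices of each side, and the Mendelsohn--Dulmage exchange on the union of the two matchings yields a single matching covering the $A$-side vertices saturated by the first and the $B$-side vertices saturated by the second, hence all of $T$.

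Your ``most economical'' alternative, by contrast, is genuinely wrong as stated: a maximum matching need not cover every maximum-degree vertex, so the assumption that it misses some $t\in T$ cannot be contradicted by producing an augmenting path. Take $A=\{a_1,a_2\}$, $B=\{b_1,b_2,b_3\}$ with unit-weight edges $a_1b_1$, $a_1b_2$, $a_2b_2$, $a_2b_3$: the matching $\{a_1b_1,a_2b_3\}$ is maximum, yet it misses $b_2$, which has maximum degree $2$, and every alternating path from $b_2$ dead-ends at $b_1$ or $b_3$ --- there is simply no augmenting path for any counting argument to find. The correct elementary argument must either start from a matching covering as many vertices of $T$ as possible and perform an exchange, or go through the per-side Hall argument plus the combination lemma sketched above. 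Since your main route already suffices once the irrational-weight step is fixed, I would drop the alternative or replace it accordingly.
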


We now prove~\Cref{thm:birkhoff}.

\begin{proof}[Proof of \Cref{thm:birkhoff}]
Consider a matrix $M_0\in \mathcal M$. We see $M_0$ as a weighted bipartite graph, where nodes are rows and columns, and edges corresponds to cells with positive values. Let $\Delta(M_0)$ be the maximum degree of a vertex. If $\Delta(M_0) = 0$ the proof is finished. Otherwise, by \Cref{lem:konig}, there exists a matching $\mu_0 \in \mathcal M \cap \{0,1\}^{n\times m}$ which covers all maximum degree vertices, and which can be computed in polynomial time (with a maximum weight matching algorithm). {Let $v_0 > 0$ be the difference between the highest and second highest degrees, and let $w_0 > 0$ be the minimum weight of an edge in $\mu_0$. Define $z_0 = \min(v_0, w_0)$, add $z_0\cdot \mu_0$ to the decomposition, and define $M_1 = M_0 - z_0 \cdot\mu_0$. Notice that $M_1$ contains less edges (strictly, if $w_0\leq v_0$) and more maximal-degree vertices (strictly, if $v_0 \leq v_0$) than $M_0$. Additionally, $\Delta(M_1) = \Delta(M_0)-z_0$.} Apply inductively the same argument to define $M_1$, $M_2$, until reaching $M_k = 0$. We obtained a decomposition of $M_0$ as a positive linear combination of matchings, with a sum of coefficients equal to $\Delta(M_0) \leq 1$. We conclude by adding the empty matching with a coefficient of $1-\Delta(M_0)$.
\end{proof}

\bibliographystyle{plainnat}
\bibliography{focs-idv}

\begin{thebibliography}{28}
\providecommand{\natexlab}[1]{#1}
\providecommand{\url}[1]{\texttt{#1}}
\expandafter\ifx\csname urlstyle\endcsname\relax
  \providecommand{\doi}[1]{doi: #1}\else
  \providecommand{\doi}{doi: \begingroup \urlstyle{rm}\Url}\fi

\bibitem[Amer and Talgam-Cohen(2021)]{AmerTC21}
Ameer Amer and Inbal Talgam-Cohen.
\newblock {Auctions with Interdependence and SOS: Improved Approximation}.
\newblock In \emph{14th International Symposium on Algorithmic Game Theory
  (SAGT)}, 2021.

\bibitem[Assadi and Singla(2019)]{Assadi019}
Sepehr Assadi and Sahil Singla.
\newblock Improved truthful mechanisms for combinatorial auctions with
  submodular bidders.
\newblock In \emph{60th {IEEE} Symposium on Foundations of Computer Science
  (FOCS)}, 2019.

\bibitem[Ausubel et~al.(1999)]{ausubel1999generalized}
Lawrence~M Ausubel et~al.
\newblock A generalized vickrey auction.
\newblock \emph{Econometrica}, 1999.

\bibitem[Cai and Zhao(2017)]{CaiZ17}
Yang Cai and Mingfei Zhao.
\newblock Simple mechanisms for subadditive buyers via duality.
\newblock In \emph{49th {ACM} {SIGACT} Symposium on Theory of Computing
  (STOC)}, 2017.

\bibitem[Chawla et~al.(2014)Chawla, Fu, and Karlin]{CFK}
Shuchi Chawla, Hu~Fu, and Anna Karlin.
\newblock Approximate revenue maximization in interdependent value settings.
\newblock In \emph{15th ACM Conference on Economics and Computation (EC)},
  2014.

\bibitem[Chen et~al.(2022)Chen, Eden, and Wang]{ChenEW}
Yiling Chen, Alon Eden, and Juntao Wang.
\newblock {Cursed yet Satisfied Agents}.
\newblock In \emph{13th Innovations in Theoretical Computer Science Conference
  (ITCS 2022)}, 2022.

\bibitem[Christodoulou et~al.(2016)Christodoulou, Kov{\'{a}}cs, and
  Schapira]{CKS16}
George Christodoulou, Annam{\'{a}}ria Kov{\'{a}}cs, and Michael Schapira.
\newblock Bayesian combinatorial auctions.
\newblock \emph{J. {ACM}}, 2016.

\bibitem[Cohen et~al.(2023)Cohen, Feldman, Mohan, and
  Talgam{-}Cohen]{CohenFMT23}
Avi Cohen, Michal Feldman, Divyarthi Mohan, and Inbal Talgam{-}Cohen.
\newblock Interdependent public projects.
\newblock In \emph{34th {ACM-SIAM} Symposium on Discrete Algorithms (SODA)},
  2023.

\bibitem[Dasgupta and Maskin(2000)]{DM00}
Partha Dasgupta and Eric Maskin.
\newblock Efficient auctions.
\newblock \emph{The Quarterly Journal of Economics}, 2000.

\bibitem[Dobzinski(2021)]{Dobzinski21}
Shahar Dobzinski.
\newblock Breaking the logarithmic barrier for truthful combinatorial auctions
  with submodular bidders.
\newblock \emph{{SIAM} J. Comput.}, 2021.

\bibitem[Dobzinski et~al.(2006)Dobzinski, Nisan, and Schapira]{DobzinskiNS06}
Shahar Dobzinski, Noam Nisan, and Michael Schapira.
\newblock Truthful randomized mechanisms for combinatorial auctions.
\newblock In \emph{38th {ACM} Symposium on Theory of Computing (STOC)}, 2006.

\bibitem[Eden et~al.(2018)Eden, Feldman, Fiat, and Goldner]{EFFG}
Alon Eden, Michal Feldman, Amos Fiat, and Kira Goldner.
\newblock Interdependent values without single-crossing.
\newblock In \emph{19th ACM Conference on Economics and Computation (EC)},
  2018.

\bibitem[Eden et~al.(2019)Eden, Feldman, Fiat, Goldner, and Karlin]{EdenFFGK19}
Alon Eden, Michal Feldman, Amos Fiat, Kira Goldner, and Anna~R. Karlin.
\newblock Combinatorial auctions with interdependent valuations: Sos to the
  rescue.
\newblock In \emph{20th ACM Conference on Economics and Computation (EC)},
  2019.

\bibitem[Eden et~al.(2021)Eden, Feldman, Talgam{-}Cohen, and Zviran]{EdenFTZ21}
Alon Eden, Michal Feldman, Inbal Talgam{-}Cohen, and Ori Zviran.
\newblock Poa of simple auctions with interdependent values.
\newblock In \emph{35th {AAAI} Conference on Artificial Intelligence (AAAI)},
  2021.

\bibitem[Eden et~al.(2022)Eden, Goldner, and Zheng]{EdenGZ22}
Alon Eden, Kira Goldner, and Shuran Zheng.
\newblock Private interdependent valuations.
\newblock In \emph{33rd {ACM-SIAM} Symposium on Discrete Algorithms (SODA)},
  2022.

\bibitem[Feldman et~al.(2013)Feldman, Fu, Gravin, and Lucier]{FeldmanFGL13}
Michal Feldman, Hu~Fu, Nick Gravin, and Brendan Lucier.
\newblock Simultaneous auctions are (almost) efficient.
\newblock In \emph{45th Symposium on Theory of Computing Conference (STOC)},
  2013.

\bibitem[for the Prize in Economic Sciences in Memory~of
  Alfred~Nobel(2020)]{nobel2021considerations}
The~Committee for the Prize in Economic Sciences in Memory~of Alfred~Nobel.
\newblock Scientifc background on the sveriges riksbank prize in economic
  sciences, 2020.
\newblock URL
  \url{https://www.nobelprize.org/uploads/2020/09/advanced-economicsciencesprize2020.pdf}.

\bibitem[Gkatzelis et~al.(2021)Gkatzelis, Patel, Pountourakis, and
  Schoepflin]{gkatzelis2021prior}
Vasilis Gkatzelis, Rishi Patel, Emmanouil Pountourakis, and Daniel Schoepflin.
\newblock Prior-free clock auctions for bidders with interdependent values.
\newblock In \emph{14th International Symposium on Algorithmic Game Theory
  (SAGT)}, 2021.

\bibitem[Jehiel and Moldovanu(2001)]{JM69}
Philippe Jehiel and Benny Moldovanu.
\newblock Efficient design with interdependent valuations.
\newblock \emph{Econometrica}, 2001.

\bibitem[Lehmann et~al.(2006)Lehmann, Lehmann, and Nisan]{LehmannLN06}
Benny Lehmann, Daniel Lehmann, and Noam Nisan.
\newblock Combinatorial auctions with decreasing marginal utilities.
\newblock \emph{Games Econ. Behav.}, 2006.

\bibitem[Lov{\'a}sz and Plummer(1986)]{lovasz1986matching}
L{\'a}szl{\'o} Lov{\'a}sz and Michael Plummer.
\newblock Matching theory.
\newblock \emph{North-Holland mathematics studies}, 121, 1986.

\bibitem[Lu et~al.(2022)Lu, Sun, and Zhou]{LuSZ22}
Pinyan Lu, Enze Sun, and Chenghan Zhou.
\newblock Better approximation for interdependent {SOS} valuations.
\newblock In \emph{18th International Conference on Web and Internet Economics
  (WINE)}, 2022.

\bibitem[Maskin(1992)]{maskin1992}
Eric Maskin.
\newblock Auctions and privatization.
\newblock \emph{Privatization, H. Siebert, ed. (Institut fur Weltwirtschaften
  der Universit\"{a}t Kiel: 1992)}, 1992.

\bibitem[Milgrom and Weber(1982)]{MilgromWeber82}
Paul~R Milgrom and Robert~J Weber.
\newblock A theory of auctions and competitive bidding.
\newblock \emph{Econometrica}, 1982.

\bibitem[Roughgarden and Talgam-Cohen(2016)]{RTC}
Tim Roughgarden and Inbal Talgam-Cohen.
\newblock Optimal and robust mechanism design with interdependent values.
\newblock \emph{ACM Trans. Econ. Comput.}, 2016.

\bibitem[Rubinstein and Weinberg(2018)]{RubinsteinW18}
Aviad Rubinstein and S.~Matthew Weinberg.
\newblock Simple mechanisms for a subadditive buyer and applications to revenue
  monotonicity.
\newblock \emph{{ACM} Trans. Economics and Comput.}, 2018.

\bibitem[Syrgkanis and Tardos(2013)]{SyrgkanisT13}
Vasilis Syrgkanis and {\'{E}}va Tardos.
\newblock Composable and efficient mechanisms.
\newblock In \emph{45th Symposium on Theory of Computing Conference (STOC)},
  2013.

\bibitem[Wilson(1969)]{wilson1969communications}
Robert~B Wilson.
\newblock Communications to the editor—competitive bidding with disparate
  information.
\newblock \emph{Management science}, 1969.

\end{thebibliography}

\end{document}